\newif\ifready\readytrue
    \pgfplotsset{compat=1.5}
\newcommand{\storelines}{\xdef\rememberedlines{\number\value{AlgoLine}}}
\newcommand{\restorelines}{\setcounter{AlgoLine}{\rememberedlines}}
\crefname{assumption}{assumption}{assumptions}
\Crefname{assumption}{Assumption}{Assumptions}
\crefname{step}{step}{steps}
\Crefname{step}{Step}{Steps}
\definecolor{mydarkblue}{rgb}{0,0.08,0.45}
\newcommand{\Lap}{\ensuremath{{\textsf{Lap}}}\xspace}
\DeclareMathOperator*{\poly}{poly}
  \newcommand{\eps}[0]{\ensuremath{\varepsilon}}
  \let\epsilon\eps
  \newcommand{\cD}{\ensuremath{{\mathcal D}}\xspace}
  \newcommand{\cI}{\ensuremath{{\mathcal I}}\xspace}
  \newcommand{\cP}{\ensuremath{{\mathcal P}}\xspace}
  \newcommand{\cQ}{\ensuremath{{\mathcal Q}}\xspace}
  \newcommand{\cS}{\ensuremath{{\mathcal S}}\xspace}
\newtheorem{theorem}{Theorem}[section]
\newtheorem{definition}[theorem]{Definition}
\newtheorem{lemma}[theorem]{Lemma}
\newtheorem{proposition}[theorem]{Proposition}
\newtheorem{claim}[theorem]{Claim}
\theoremstyle{definition}
\newtheorem*{remark}{Remark}
\newcommand{\algoref}[1]{\namedref{Algorithm}{alg:#1}}
\newcommand{\namedref}[2]{\hyperref[#2]{#1~\ref*{#2}}\xspace}
\NewDocumentEnvironment{pf}{o}
  {\IfNoValueTF{#1}{\begin{proof}}{\begin{proof}[Proof (#1)]}}
  {\IfNoValueTF{#1}{\end{proof}}{\end{proof}}}
\newcommand{\core}{k}
\newcommand{\kest}{\hat{\core}}
\mathchardef\hyph="2D
\newcommand{\alg}{\mathcal{A}}
\newcommand{\defn}[1]{\emph{\textbf{#1}}}
\newcommand{\sset}{\subseteq}
\DeclarePairedDelimiter{\card}{\lvert}{\rvert}
\DeclarePairedDelimiter{\norm}{\lVert}{\rVert}
\DeclarePairedDelimiter{\set}{\lbrace}{\rbrace}
\DeclarePairedDelimiter{\iprod}{\langle}{\rangle}
\DeclarePairedDelimiter\floor{\lfloor}{\rfloor}
\DeclarePairedDelimiter{\ceil}{\lceil}{\rceil}
\newcommand{\opt}{\mathrm{OPT}}
\newcommand{\prob}{\mathbf{P}}
\newcommand{\expect}{\mathbb{E}}
\DeclareMathOperator{\den}{den}
\DeclareMathOperator{\Range}{Range}
\DeclareMathOperator{\OPT}{OPT}
\DeclareMathOperator{\Count}{count}
\newcommand{\R}{\mathbb{R}}
\newcommand{\Z}{\mathbb{Z}}
\newcommand{\N}{\mathbb{N}}
\newcommand{\SVT}{\ensuremath{\textsc{SVT}}\xspace}
\newcommand{\mcal}{\mathcal}
\newcommand{\tO}{\tilde{O}}
\newcommand{\lf}{\eta}
\newcommand{\upexpold}{{(2+\eta)(1+\lf)^2}}
\newcommand{\probfactorminusone}{\frac{1}{n^{c-1}}}
\newcommand{\downexp}{(1+\eta)}
\newcommand{\lbexp}{\left(\downexp^{g'-2} - \frac{a\log^3 n}{\eps}\right) \left(1 -\frac{1}{(2+\eta)(1+\eta)^2}\right)}
\newcommand{\lbabrv}{J}
\newcommand{\myparagraph}[1]{\smallskip\noindent\textbf{#1.}}
\title{Sublinear Space Graph Algorithms\\in the Continual Release Model}
\author{Alessandro Epasto, Quanquan C. Liu, Tamalika Mukherjee, Felix Zhou}
\author{%
    \begin{tabular}{cc}	
        \begin{tabular}{c}
            Alessandro Epasto\\
            Google Research\\
            \texttt{aepasto@google.com}
        \end{tabular}
        & 	
        \begin{tabular}{c}
            Quanquan C. Liu\\
            Yale University\\
            \texttt{quanquan.liu@yale.edu}
        \end{tabular}
        \\
        \\
        \begin{tabular}{c}
            Tamalika Mukherjee\\
            Columbia University\\
            \texttt{tm3391@columbia.edu}
        \end{tabular}
         & 
         \begin{tabular}{c}
            Felix Zhou\\
            Yale University\\
            \texttt{felix.zhou@yale.edu}
        \end{tabular}
    \end{tabular}
}
\date{}
\begin{document}

\maketitle

\pagenumbering{gobble}

\begin{abstract}
    The graph continual release model of differential privacy seeks to produce differentially private solutions to graph problems 
    under a stream of edge updates where new private solutions are released after each update. 
    Thus far, previously known \emph{edge-differentially private} algorithms for most graph problems including densest subgraph and matchings in the continual release
    setting only output real-value estimates (not vertex subset solutions) and
    do not use sublinear space. 
    Instead, they rely on computing
    exact graph statistics on the input \cite{FichtenbergerHO21,Song18}.
    In this paper, 
    we leverage \emph{sparsification} to address the above shortcomings for edge-insertion streams.
    Our edge-differentially private algorithms use sublinear space with respect to the number of edges in the graph
    while some also achieve sublinear space in the number of vertices in the graph.
    In addition, for the densest subgraph problem, we also output edge-differentially private vertex subset solutions; no previous graph algorithms in the continual release model
    output such subsets.
    
    We make novel use of assorted sparsification techniques 
    from the non-private streaming and static graph algorithms literature
    {to achieve new results in the sublinear space, continual release setting.
    This includes algorithms for densest subgraph,
    maximum matching, as well as the first continual release $k$-core decomposition algorithm.}
    {We also develop a novel sparse level data structure for $k$-core decomposition
    that may be of independent interest.}
    To complement our insertion-only algorithms,
    we conclude with polynomial additive error 
    lower bounds for edge-privacy in the fully dynamic setting,
    where only logarithmic lower bounds were previously known.
\end{abstract}

\newpage
\setcounter{tocdepth}{2}
\tableofcontents
\newpage
\cleardoublepage
\pagenumbering{arabic}

\section{Introduction}

Today's data is not only massive in size but also rapidly growing.
As the world becomes increasingly digitized, the sensitive associations between individuals 
are becoming a key focus for data analysis. 
Such a focus also comes with privacy risks and the question of how to safeguard privacy.
Differential privacy~\cite{DMNS06} is the gold standard for privacy protection, with 
a rich body of work focusing on \emph{static} graphs~\cite{AU19,blocki2022privately,dinitz2024tight,DLRSSY22,ELRS22,kasiviswanathan2013analyzing,kalemaj2023node,
LUZ24,mueller2022sok,NRS07,RS16,raskhodnikova2016differentially,Upadhyay13}.
However, today's graphs are not static; they are rapidly expanding, with new connections between individuals forming every minute.
Streaming graph algorithms, a well-studied area in the non-private setting~\cite{assadi2022semi,AG11,EHW16,Esfandiari2018,feigenbaum2005graph,FMU22,HuangP19,henzinger1998computing,
muthukrishnan2005data,mcgregor2014graph,mcgregor2015densest,mcgregor2018simple},
are designed to handle such massively evolving datasets. In this model,
updates to the dataset are received in a stream. 
Online streaming algorithms continuously release accurate
graph statistics after each update~\cite{chen2023sublinear,halldorsson2016streaming,cormode2018approximating,GS24},
but these graph statistics may reveal private information.

Our paper focuses on studying graph problems in the well-known \emph{continual release} model~\cite{DworkNPR10,CSS11} of differential privacy 
that promises a strong guarantee for online streaming outputs: 
the entire vector of \emph{all outputs} satisfies differential privacy. 
The notion of graph privacy we consider is edge-differential privacy,
where we require the output distributions of the algorithm on edge-neighboring stream inputs to be ``close'',
where edge-neighboring streams differ in one edge update.
Despite the large body of differential privacy works on static graph algorithms, 
there are few works on graph continual release~\cite{FichtenbergerHO21,jain2024time,Song18} 
and related graph models~\cite{upadhyay2021differentially}.
While there has been substantial progress in graph continual release algorithms in recent years,
there are some drawbacks to existing algorithms, 
including: 
1) not releasing private (vertex subset) solutions in addition to their real-valued function outputs and 
2) requiring exact algorithms for computing graph statistics, thus necessitating linear space usage in the number of \emph{edges}.

In this paper, we take the first steps to address these shortcomings for insertion-only streams in the continual release model via the following contributions:

\begin{itemize}
    \item We give the first $k$-core decomposition algorithm in the continual release model.
    \item We formulate the first continual release graph algorithms that return edge-differentially private vertex subset solutions 
    {for densest subgraph}
    (in addition to the value of the solutions). 
    \item We give the first \emph{sublinear} space algorithms for general graphs to solve densest subgraph, $k$-core decomposition, and matching problems. 
\end{itemize}
The approximation guarantees of our algorithms nearly match the guarantees for their private static or non-private streaming counterparts.
Moreover, as is standard in streaming, all of our results hold for a single pass over the stream.

Resolving these issues 
{closes the gap between continual release graph algorithms and}
their non-private streaming counterparts for many applications. 
While the private streaming
literature has explored sublinear space outside the graph algorithms setting~\cite{ChanLSX12,epasto2023differentially,upadhyay2019sublinear, BlockiGMZ22}, 
it remains unaddressed in general graphs for most problems in continual release.
Furthermore, while streaming algorithms typically operate on datasets too large for memory,  practical applications often require information about each vertex, not just aggregate values.
For example, advertisers may want the members of the current densest community, not merely its density. Hence, achieving both goals--sublinear space combined with useful vertex information--is essential to the development of private graph algorithms.

\subsection{Sparsification}\label{subsec:sparsification-intro} 
The main 
algorithmic technique we use in this paper
is \emph{sparsification}, traditionally used in the streaming, static,
and dynamic graph algorithms literature. 
Sparsification aims to simplify a large graph by strategically selecting a smaller set of its edges. This smaller ``sketch'' of the graph still captures the essential information needed for a specific problem.
Such sparsifiers exist for 
a large number of problems in the non-private literature. 
In general, graph sparsifiers are \emph{problem-specific},
where each sparsifier provides guarantees for only one or two closely related problems.
We design novel methods
for creating differentially private problem-specific graph sparsifiers in continual release, 
which we describe in more detail in the technical overview (\cref{sec:tech-overview}).

Using sparsification techniques,
we design a number of {new} continual release algorithms 
that achieve {the same} utility guarantees {as previous algorithms} for the problem in the static DP and non-private streaming settings
{up to logarithmic factors}. 
Our sparsification techniques directly lead to space savings in the edge-privacy setting
for a variety of problems 
including densest subgraph, $k$-core decomposition, and maximum matching. 
We return vertex subsets for 
the densest subgraph problem, 
approximate $k$-core numbers for every vertex in core decomposition, 
and estimates for the size of the maximum matching,
all using sublinear space in the \emph{number of edges}. For maximum matching, we even use space that is sublinear in the number of vertices.\footnote{While all non-private
streaming algorithms use space that is sublinear in the number of total edges, very few use space sublinear in the number of vertices, although results sublinear in the number of vertices are most desirable.}

Finally, to complement our results on insertion-only streams,
we demonstrate the difficulty of obtaining algorithms in fully dynamic streams 
via improved polynomial error lower bounds for maximum cardinality 
matching,
counting connected components,
and triangle counting 
based on reductions from inner product queries. 
Similar to the recent continual release lower bounds for counting distinct elements given in~\cite{jain2023counting},
our lower bounds support the observation from previous works that the fully dynamic setting results in significantly
more error~\cite{Song18,FichtenbergerHO21} than the insertion-only setting within the continual release model.

\subsection{Our Contributions}
Our paper gives the following set of results, stated informally here and given in more precise terms in their respective 
sections. We group our results by graph problems 
for $n$-vertex $m$-edge graph streams over $T$ time steps
and our multiplicative guarantees
are given in terms of a constant factor $\eta > 0$.

For our algorithmic results,
we consider insertion-only streams where {(possibly empty)} edge insertions
are given in the stream.
{We may assume without loss of generality that each non-empty edge is inserted at most once,
as we can treat an insertion operation for an existing edge as an empty insertion.}
{Without requiring privacy,
this closely resembles the well-studied online streaming setting (see related works in \Cref{sec:related-works}).
Our algorithms also incur logarithmic additive noise,
which is necessary,
as demonstrated by lower bounds that hold even in the static model (see \Cref{table:results}).}

Our lower bounds are given for fully dynamic streams
where edges can be inserted and deleted multiple times,
also known as the \emph{turnstile} model.
{In this setting,
we show that polynomial additive error is necessary,
demonstrating a fundamental difficulty in maintaining small errors in the presence of both insertions and deletions.}

All of our results in the continual release model 
are given for edge-neighboring streams where the two streams differ
by one edge update; 
results in the static DP setting are given for edge-neighbors
(graphs that differ by one edge). 
Throughout the paper, we will use the phrase 
\emph{$\eps$-edge DP} to refer to these settings.
Our algorithms produce $(\beta, \zeta)$-bicriteria approximations where $\beta$ is the multiplicative
factor in the approximation and $\zeta$ is the additive error. Our approximation bounds hold, with high probability,
for the released solution after \emph{every} update in the stream. Throughout the paper, we assume $n \le T = O(\poly(n))$.
This is reasonable as $O(n^2)$ (non-empty) updates are sufficient to have a complete graph.\footnote{Our results 
can be modified to handle streams of length $T = \omega(\poly(n)).$} We 
compare our results with static edge-DP lower bounds 
as such lower bounds also hold in the insertion-only continual release setting.\footnote{Otherwise, if we obtain better error in the continual
release setting, we can solve the static problem with better error by inserting all of the edges of the static graph and taking the 
solution released at the end of the stream.}
A more in-depth discussion of related works can be found in~\cref{sec:related-works}. Our entire set of results can be found in~\cref{table:results},
including baseline algorithms and lower bounds in the static edge-DP and non-private streaming settings. %

\begin{table}[htbp!]
\centering
\caption{Our results are highlighted in green. Non-private bounds are shown in gray. 
All results are for edge-privacy. Bicriteria approximations $(\beta, \zeta)$ are given, where $\beta$ is the multiplicative factor and $\zeta$ is the additive error. $\tilde{\alpha}$ is a public upper bound on the maximum arboricity of the stream; our approximation guarantees for maximum matching hold when $\tilde{\alpha}$ upper 
bounds the private arboricity and our privacy guarantees \emph{always} hold. $\alpha$ in the non-private streaming bound is the maximum arboricity in the stream.}
\label{table:results}
\resizebox{0.95\textwidth}{!}{
\begin{NiceTabular}{|c|c|c|c|c|c|}
\noalign{\global\arrayrulewidth=0.2mm}
\hline
Problem & Space & Approximation & Reference & Model & Solution Type \\
\noalign{\global\arrayrulewidth=0.8mm} 
\hline
\noalign{\global\arrayrulewidth=0.2mm}
\multirow{4}{*}{\makecell[c]{$k$-Core\\ Decomp.\ }} & \cellcolor{YellowGreen!20}$\tilde{O}\left( \frac{n}{{\eta^4}\eps} \right)$ & \Gape[0pt][2pt]{\cellcolor{YellowGreen!20}}$\left( 2+\eta, O\left( \frac{\log^3(n)}{{\eta^2}\eps} \right) \right)$ & \cellcolor{YellowGreen!20}\Cref{thm:kcore-formal} & \cellcolor{YellowGreen!20}\makecell[c]{Insertion\\Only} & \cellcolor{YellowGreen!20}Upper Bound \\ \cline{2-6}
 & $\Theta(m + n)$ & $\left(1, O\left( \frac{\log(n)}{\eps} \right) \right)$ & \cite{DLL23} & Static & Upper Bound \\ \cline{2-6}
 & & $\left( \beta, \Omega\left( \frac{\log n}{\eps\beta} \right)\right)$ & \cite{HSZ24} & Static & Lower bound \\ \cline{2-6}
  & \cellcolor{gray!20} $\tilde{O}\left( \frac{n}{\eta^2} \right)$ & \cellcolor{gray!20} $(1 + \eta, 0)$ & \cellcolor{gray!20} \cite{Esfandiari2018} & \cellcolor{gray!20} \makecell[c]{Non-\\Private\\Streaming} & \cellcolor{gray!20} Upper Bound \\ \hline
\multirow{8}{*}{\makecell[c]{Densest\\Subgraph}} & \cellcolor{YellowGreen!20}$\tilde{O}\left( \frac{n}{{\eta^2}\eps} \right)$ & \cellcolor{YellowGreen!20}$\left( 2+\eta, O\left( \frac{\log^2(n)}{{\eta}\eps} \right) \right)$ & \cellcolor{YellowGreen!20}\Cref{thm:densest-subgraph} & \cellcolor{YellowGreen!20}\makecell[c]{Insertion\\Only} & \cellcolor{YellowGreen!20}\makecell[c]{Upper Bound\\Vertex Subset} \\ \cline{2-6}
 & \cellcolor{YellowGreen!20}$\tilde{O}\left( \frac{n}{{\eta^5}\eps} \right)$ & \cellcolor{YellowGreen!20}$\left(1+\eta, O\left( \frac{\log^5(n)}{{\eta^4}\eps} \right) \right)$ & \cellcolor{YellowGreen!20}\Cref{thm:densest-subgraph better approx more space} & \cellcolor{YellowGreen!20}\makecell[c]{Insertion\\Only} & \cellcolor{YellowGreen!20}\makecell[c]{Upper Bound\\Vertex Subset} \\ \cline{2-6}
 & $\Theta(m + n)$ & $\left( 1+\eta, O\left( \frac{\log^4(n)}{\eta^3\eps} \right) \right)$ & \cite{DLRSSY22} & Static & \makecell[c]{Upper Bound\\Vertex Subset} \\ \cline{2-6}
 & $\Theta(m + n)$ & $\left( 2, O\left( \frac{\log(n)}\eps \right) \right)$ & \cite{DLL23} & Static & \makecell[c]{Upper Bound\\Vertex Subset} \\ \cline{2-6}
 & $\Theta(m + n)$ & $\left( 2 + \eta, O\left( \frac{\log^2(n)}{\eta\eps} \right) \right)$ & \cite{dinitz2024tight} & Static & \makecell[c]{Upper Bound\\Vertex Subset} \\ \cline{2-6}
&  & $\left( \beta, \Omega\left( \frac1\beta \sqrt{\frac{\log(n)}\eps} \right) \right)$ & \cite{AHS21} & Static & Lower bound \\ \cline{2-6}
 & $\Theta(m)$ & $\left( 1+\eta, O\left( \frac{\log^2(n)}{\eta\eps} \right) \right)$ & \cite{FichtenbergerHO21} & \makecell[c]{Insertion\\Only} 
 & \makecell[c]{Upper Bound\\Density Only} \\ \cline{2-6}
 & \cellcolor{gray!20} $\tilde{O}\left( \frac{n}{\eta^2} \right)$ & \cellcolor{gray!20} $(1 + \eta, 0)$ &  \cellcolor{gray!20} \makecell[c]{\cite{mcgregor2015densest}\\\cite{EHW16}} & \cellcolor{gray!20} \Gape[0pt][2pt]{\makecell[c]{Non-\\Private\\Streaming}} & \cellcolor{gray!20} Upper Bound \\ \hline
\multirow{6}{*}{\makecell[c]{Maximum\\Matching \\ Size}} & \cellcolor{YellowGreen!20}$O\left( \frac{\log^2(n)\log(\tilde\alpha)}{{\eta^2}\eps} \right)$ & \cellcolor{YellowGreen!20}$\left((1+\eta)(2+\tilde{\alpha}), O\left( \frac{\log^2(n)}{{\eta}\eps} \right)\right)$ & \cellcolor{YellowGreen!20}\Cref{thm:edge-dp-matching} & \cellcolor{YellowGreen!20}\makecell[c]{Insertion\\Only} & \cellcolor{YellowGreen!20}\makecell[c]{Upper Bound\\Arboricity} \\ \cline{2-6}
 & $\Theta(m)$ & $\left(1+\eta, O\left( \frac{\log^2(n)}{\eta\eps} \right)\right)$ & \cite{FichtenbergerHO21} & \makecell[c]{Insertion\\Only} & Upper Bound \\ \cline{2-6}
 &  & $\left(1, \Omega\left( \log(T) \right)\right)$ & \cite{FichtenbergerHO21} & \makecell[c]{Insertion\\Only} & Lower bound \\ \cline{2-6}
 &  & \cellcolor{YellowGreen!20}$\left( 1, \Omega\left( \min\left( \sqrt{\frac{n}\eps}, \frac{T^{1/4}}{\eps^{3/4}}, n, T \right) \right) \right)$ & \cellcolor{YellowGreen!20}\Cref{thm:matching lower bound small epsilon} & \cellcolor{YellowGreen!20}\makecell[c]{Fully\\ Dynamic} & \cellcolor{YellowGreen!20}Lower Bound \\ \cline{2-6}
 &  & $\left( 1, \Omega\left( \log(T) \right) \right)$ & \cite{FichtenbergerHO21} & \makecell[c]{Fully\\ Dynamic} & \makecell[c]{Lower Bound}\\ \cline{2-6}
 & \cellcolor{gray!20} $O\left( \frac{\log(n)\log(\alpha)}{\eta^2} \right)$ & \cellcolor{gray!20} $\left((1+\eta)(2+\alpha), 0\right)$ & \cellcolor{gray!20} \cite{mcgregor2018simple} & \cellcolor{gray!20} \makecell[c]{Non-\\Private\\Streaming} & \cellcolor{gray!20} \makecell[c]{Upper Bound\\Arboricity} \\ \hline
\multirow{2}{*}{\makecell[c]{Triangle\\ Count}} &  & \cellcolor{YellowGreen!20}$\left( 1, \Omega\left( \min\left( \sqrt{\frac{n}\eps}, \frac{T^{1/4}}{\eps^{3/4}}, n, T \right) \right) \right)$ & \cellcolor{YellowGreen!20}\Cref{sec:further lower bounds} & \cellcolor{YellowGreen!20}\makecell[c]{Fully\\ Dynamic} & \cellcolor{YellowGreen!20}Lower Bound \\ \cline{2-6}
 &  & $\left( 1, \Omega\left( \log(T) \right) \right)$ & \cite{FichtenbergerHO21} & \makecell[c]{Fully\\ Dynamic} & \makecell[c]{Lower Bound}
\\ \hline
\makecell[c]{Connected\\Components\\Counting} &  & \cellcolor{YellowGreen!20}$\left( 1, \Omega\left( \min\left( \sqrt{\frac{n}\eps}, \frac{T^{1/4}}{\eps^{3/4}}, n, T \right) \right) \right)$ & \cellcolor{YellowGreen!20}\Cref{thm:connected components lower bound small epsilon} & \cellcolor{YellowGreen!20}\makecell[c]{Fully\\ Dynamic} & \cellcolor{YellowGreen!20}Lower Bound \\ \hline
\end{NiceTabular}}
\end{table}

\myparagraph{$k$-Core Decomposition (\cref{sec:kcore})}
We show the first $\eps$-edge DP algorithm for the $k$-core decomposition problem in the continual release model.
Our algorithm returns approximate $k$-core numbers at every timestep while using $o(m)$ space.
The additive errors of our algorithms {are near-optimal} as they match the recent {additive error} lower bounds up to logarithmic factors (see \Cref{table:results}).
{We note that these lower bounds hold even when allowing for constant multiplicative error.}

\begin{theorem}[See~\cref{thm:kcore-formal}]\label{thm:kcore}
    We obtain a $\left(2+\eta, O\left( \frac{\log^3(n)}{{\eta^2} \eps} \right) \right)$-approximate
    $\eps$-edge differentially private $k$-core decomposition algorithm that outputs core number estimates 
    in the insertion-only continual release model using $\tilde{O}\left( \frac{n}{\eta^4\eps} \right)$ space. 
\end{theorem}

{The essence of our $k$-core algorithm is a novel \emph{sparse} level data structure
that may be of independent interest beyond our work.
To the best of our knowledge,
such a data structure did not previously exist in the privacy or graph algorithms literature.}

\myparagraph{Densest Subgraph (\cref{sec:densest-subgraph})}
We obtain the first $\eps$-edge DP densest subgraph algorithms that return subsets
of vertices with near-optimal density in the continual release model. 
Our algorithms also use $o(m)$ space.
We emphasize that the approximation guarantees nearly match, 
up to logarithmic terms,
the guarantees of their private static or non-private streaming counterparts.
In turn,
these nearly match the information-theoretic lower bound up to logarithmic factors.
{Again,
we emphasize that these lower bounds hold even when allowing for constant multiplicative error.}
\begin{theorem}[See \Cref{thm:densest-subgraph}]\label{thm:densest-subgraph-intro}
    We obtain a $\left(2+\eta, O\left( \frac{\log^2(n)}{{\eta^2}\eps} \right) \right)$-approximate
    $\eps$-edge differentially private densest subgraph algorithm that outputs a subset of vertices
    in the insertion-only continual release model using $\tilde{O}\left( \frac{n}{\eta^2\eps} \right)$ space.
\end{theorem}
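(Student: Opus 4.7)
The plan is to combine random edge sampling with a classical non-private $(2+\eta)$-approximation for densest subgraph, together with a private level-set mechanism to select the best density scale at each time step. Concretely, for each density level $\ell = 0, 1, \ldots, L = O(\log(n)/\eta)$, I would maintain a sparsifier $H_\ell$ obtained by sampling each inserted edge independently with probability $p_\ell = \Theta(\log(n)/(\eps \eta^2 (1+\eta)^\ell))$, capped at $1$. The sampling randomness is fixed per edge (e.g., by a keyed hash of the edge identity) so that inserting one edge to the original stream changes each $H_\ell$ by \emph{at most one} edge; this is exactly the distance-preserving property discussed in \cref{subsec:sparsification-intro} and avoids the cascade failure of \cref{ex:naive-sparsifier}. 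When the optimal density $\rho^*$ satisfies $\rho^* \le (1+\eta)^{\ell+1}$, the graph has at most $O(n(1+\eta)^\ell)$ edges, so $\card{H_\ell} = \tilde{O}(n/\eps)$ in expectation, and by Chernoff this bound holds with high probability. Summing over $L$ levels gives total space $\tilde{O}(n/\eps)$.

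Next, for utility, I would invoke a standard densest-subgraph sparsification lemma: sampling each edge of $G$ independently with rate $p_\ell$ preserves, up to a $(1 \pm \eta)$ multiplicative factor, the density of every vertex subset with true density $\ge \Omega((1+\eta)^\ell)$, with high probability uniformly over all subsets (via a union bound on the $2^n$ subsets, absorbed by the $\log n$ in $p_\ell$). Running Charikar's peeling algorithm on $H_\ell$ yields a subset $S_\ell$ whose density in $H_\ell$ is within a factor of $2$ of the optimum of $H_\ell$; rescaling by $1/p_\ell$ gives a $(2+\eta)$-approximation of $\rho^*$ whenever $\ell$ matches the correct scale. Crucially, the subset $S_\ell$ itself, viewed in the original graph $G$, inherits the density guarantee, so the algorithm outputs a genuine vertex subset and not just an estimate.

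For privacy and for continual release, I would use a variant of the sparse vector / private selection mechanism across the $L$ scales, plugged into a continual counting framework. At each time step, for every $\ell$ we compute the density estimate from $S_\ell$ in $H_\ell$, add Laplace noise of scale $O(L \log(T)/\eps) = O(\log^2(n)/\eps)$, and output the scaled noisy maximum together with the corresponding $S_{\ell^*}$. Edge sensitivity of each $H_\ell$ stream is $1$ by construction, and time-composition is handled by the binary-tree mechanism on a count of ``significant density changes'' at each level; standard arguments give $\eps$-edge DP with additive error $O(\log^2(n)/\eps)$ at every release.

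The main obstacle is the tension between the three properties the sparsifier must simultaneously satisfy: (a) low worst-case edge-edit distance between neighboring streams (so we must use per-edge fixed randomness rather than maintaining a running reservoir), (b) the density-preservation guarantee, which requires the sampling rate $p_\ell$ to be large enough at the relevant scale, and (c) the $\tilde{O}(n/\eps)$ space bound, which requires $p_\ell$ to be small enough at higher scales. Balancing (b) and (c) forces the use of the geometric grid of $L$ sparsifiers, and careful bookkeeping is then needed so that private selection across the grid and across time together contribute only $O(\log^2(n)/\eps)$ additive error. A secondary subtlety is that at very low density scales, $p_\ell$ saturates at $1$, but here the original graph is itself sparse enough that $\tilde{O}(n/\eps)$ space suffices without any sampling.
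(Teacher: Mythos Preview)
Your proposal has two genuine gaps, and the more serious one concerns privacy of the released vertex subset.

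\textbf{The vertex subset is not private.} You run (non-private) Charikar peeling on each $H_\ell$ to obtain $S_\ell$, then privately select an index $\ell^*$ and release $S_{\ell^*}$. Even if the selection of $\ell^*$ is $\eps$-DP, the subset $S_{\ell^*}$ itself is a deterministic function of $H_{\ell^*}$, and Charikar's output can change on $\Omega(n)$ vertices when a single edge is added or removed. Adding Laplace noise to the \emph{density value} does nothing to protect the \emph{identity} of the output vertices; neither does the binary-tree mechanism, which is a counting primitive. This is exactly the obstacle that separates ``release a density estimate'' (as in \cite{FichtenbergerHO21}) from ``release an approximate densest subgraph,'' and your outline does not address it. The paper's algorithm (\Cref{alg:insertion-main}) instead calls a black-box \emph{private} static densest subgraph routine (\Cref{thm:improved private static densest subgraph}) to produce the subset, and uses an SVT with an ``above'' budget of $O(\log_{1+\eta} n)$ so that this routine is invoked only $O(\log n)$ times over the stream; composition then costs only a $\log n$ factor in $\eps$, which is absorbed into the additive error.

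\textbf{The space bound fails at low levels.} You argue that $\card{H_\ell}=\tilde O(n/\eps)$ \emph{when} $\rho^*\le (1+\eta)^{\ell+1}$. But you maintain all $L$ sparsifiers simultaneously, and for levels $\ell$ with $(1+\eta)^\ell\ll \rho^*_t$ this hypothesis is false: there $\card{H_\ell}\approx p_\ell m_t = \Theta(m_t\log n/(\eps\eta^2(1+\eta)^\ell))$, and summing the geometric series over $\ell$ gives $\Theta(n\rho^*_t\log n/(\eps\eta^3))$, not $\tilde O(n/\eps)$. Your ``secondary subtlety'' remark does not fix this: $p_\ell$ saturating at $1$ is a property of $\ell$, not of $\rho^*_t$, so $H_0$ stores all $m_t$ edges regardless of how dense the graph has become. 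The paper avoids the grid entirely by maintaining a \emph{single} sample $X$ whose rate $p$ is halved adaptively, with the halving times chosen via a second SVT on the (sensitivity-$1$) edge count $m$; this keeps $\card X=\tilde O(n/\eps)$ uniformly and sidesteps the need to privately discard stale levels.
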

At the cost of slightly more additive error and space usage,
we can improve the multiplicative error.

\begin{theorem}[See \Cref{thm:densest-subgraph better approx more space}]\label{thm:densest-subgraph-better-approx-intro}
    We obtain a $\left(1+\eta, O\left( \frac{\log^5(n)}{{\eta^4}\eps} \right) \right)$-approximate
    $\eps$-edge differentially private densest subgraph algorithm that outputs a subset of vertices
    in the insertion-only continual release model using $\tilde{O}\left( \frac{n}{\eta^5\eps} \right)$ space.
\end{theorem}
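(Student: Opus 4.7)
The plan is to boost the approximation factor of \Cref{thm:densest-subgraph-intro} from $(2+\eta)$ to $(1+\eta)$ by replacing the coarse peeling sparsifier with a finer multi-level one, at the cost of additional logarithmic factors in the additive error. I would maintain in parallel a geometric grid of $O(\eta^{-1}\log n)$ candidate density thresholds $\rho_i = (1+\eta)^i$, and at each threshold run a private peeling routine that retains only the subgraph induced by vertices whose noisy ``level''---a degeneracy-like rank maintained via the binary tree counting mechanism---has reached $\rho_i$. A vertex is promoted to its next level only once its noisy degree inside the current level set exceeds the next threshold by a slack calibrated to absorb the $O(\log^2(n)/\eps)$ counting noise, so that promotions are stable and correspond to genuine density gains rather than noise fluctuations.

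The next step is to argue both approximation quality and distance preservation. For approximation, a level-based argument analogous to the \plds framework for approximate densest subgraph would show that if the true maximum density at time $t$ is $D_t$, then with high probability the largest threshold $\rho_i$ whose sparsifier witnesses a dense subgraph satisfies $\rho_i \ge D_t/(1+\eta) - O(\log^2(n)/\eps)$, and the corresponding top-level vertex set achieves this density. For distance preservation on edge-neighboring streams $S \sim S'$, a single insertion alters the noisy degree of at most two vertices by $O(1)$ per level; the delayed-promotion rule then ensures at most $O(1)$ edges change in the sparsifier per level, for a total sparsifier edit distance of $O(\eta^{-1}\log n)$ across the ensemble, which is absorbed into the privacy budget via group privacy. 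To output a vertex subset rather than a density estimate, I would apply an above-threshold Sparse Vector Technique across the $O(\eta^{-1}\log n)$ thresholds at each time step. The claimed $O(\log^5(n)/\eps)$ additive error arises by composing: a $\log T = O(\log n)$ factor from SVT over the $\poly(n)$-length stream, $\log n$ from a union bound over the levels, $\log^2 n$ from the binary mechanism inside each level, and a final $\log n$ for high-probability Laplace tails.

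The main obstacle will be controlling cascaded promotions across levels: an edge insertion could in principle trigger a sequence of level changes, each itself perturbing the noisy degrees of neighboring vertices and possibly cascading further. I would handle this via an amortized argument bounding the total number of promotions over the entire stream by $\tilde{O}(n/\eps)$---simultaneously controlling the space usage---together with a slack choice large enough to prevent oscillations near a threshold but small enough that the multiplicative approximation remains $(1+\eta)$. Verifying that this slack can be chosen consistently with both the privacy parameter $\eps$ and the accuracy parameter $\eta$, while keeping all the logarithmic factors compatible with the stated $O(\log^5(n)/\eps)$ bound, is the most delicate step of the analysis.
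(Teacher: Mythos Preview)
Your proposal takes a fundamentally different route from the paper, and it has a genuine gap.

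The paper's proof is far simpler than what you sketch. It reuses the \emph{same} uniform edge-sampling framework as the $(2+\eta)$ result (\Cref{thm:densest-subgraph}): sample each edge with probability $p \approx \tilde\Theta(n/(\eps m))$, use an SVT to decide when to halve $p$ and when to call a static private densest-subgraph subroutine on the sample, and output the resulting vertex set. The \emph{only} change is to swap the static black box from the $(2, O(\eps_1^{-1}\log n))$-approximation of \cite{DLL23} to the $(1+\eta, O(\eps_1^{-1}\log^4 n))$-approximation of \cite{DLRSSY22} (\Cref{thm:private static densest subgraph}). Because the latter has a larger additive error, the sampling rate must be raised so that $1/p = \Omega(\eps^{-1}\eta^{-2} n \log^5 n)$ in order to absorb it into the multiplicative term (cf.\ \Cref{one-shot:space bottleneck}); this is where the extra $\log^3 n$ in both space and additive error comes from. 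The concentration argument (every induced subgraph's density in the sample is within $(1\pm\eta)$ of its true density) already yields $(1+\eta)$ multiplicative preservation, so a $(1+\eta)$ static routine on the sample transfers back to a $(1+\eta)$ guarantee on $G_t$.

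The gap in your approach is the approximation factor itself. A peeling / level-data-structure mechanism of the kind you describe---promoting vertices once their noisy induced degree exceeds a threshold---is exactly what the paper uses for $k$-core decomposition (\Cref{sec:kcore}), and it is well known to yield only a $(2+\eta)$-approximation for the densest subgraph (the maximum $k$-core is a $2$-approximate densest subgraph, not a $(1+\eta)$-approximate one). Your claim that ``the largest threshold $\rho_i$ whose sparsifier witnesses a dense subgraph satisfies $\rho_i \ge D_t/(1+\eta)$'' is precisely the step that fails for peeling: a graph can have density $D$ while every vertex has degree only $\approx D$, so the top level stabilizes around $D$, not $(1+\eta)D$; conversely the densest subgraph need not correspond to any single level set. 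To get $(1+\eta)$ one needs a genuinely different static primitive (LP-based or the multiplicative-weights style algorithm of \cite{DLRSSY22}), which is exactly what the paper invokes as a black box. Your cascade and group-privacy concerns are real for the mechanism you describe, but they are moot here because the paper never builds that mechanism for this theorem.
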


\myparagraph{Maximum Matching (\cref{sec:matching})}
We give the following results for estimating the size of a maximum matching in bounded arboricity graphs. 
Arboricity, a measure of local sparsity in the input graph, is formally defined as the minimum number of forests into which a graph's edges can be partitioned.
In real-world graphs, the arboricity of a graph is typically $\poly(\log(n))$.
In this setting, $\tilde\alpha > 0$ is a given public bound on the maximum arboricity $\alpha$ of the graph given by the input stream.\footnote{
Such public bounds are commonly assumed in non-private streaming literature~\cite{CJMM17,mcgregor2018simple}.}
We also briefly sketch how to remove this assumption in \Cref{apx:guess arboricity} at the cost of more space
and a worse approximation guarantee.
The additive error of our algorithms matches the information-theoretic lower bound in bounded arboricity graphs up to logarithmic factors.

Our {matching} algorithm falls under the \emph{truly sublinear model}, 
where space usage is sublinear in the \emph{number of nodes} in the graph.
The truly sublinear model is the most difficult streaming model to obtain
algorithms, and very few graph algorithms, even in the non-private setting,
use truly sublinear space.
We give an edge-DP algorithm for estimating the size of a
maximum matching in bounded arboricity graphs. Our approximation guarantees hold when
our public bound $\tilde{\alpha}$ upper bounds the arboricity of the graph, i.e.\ 
$\tilde{\alpha} \geq \alpha$, while our privacy guarantees always hold.

\begin{theorem}[See~\cref{thm:edge-dp-matching}]\label{thm:edge-matching}
    Given a public bound $\tilde{\alpha}$, we
    obtain an $\eps$-edge differentially private approximate maximum cardinality matching algorithm that outputs an 
    approximate matching size using $O\left(\frac{\log^2(n)\log(\tilde\alpha)}{\eta^2 \eps}\right)$ space 
    in the insertion-only continual release model. 
    If $\tilde\alpha \geq \alpha$, where $\alpha$ is the (private) maximum arboricity of 
    the input graph, then our algorithm returns a $\left((1+\eta)(2+\tilde\alpha), O\left(\frac{\log^2(n)}{{\eta}\eps}\right)\right)$-approximation
    of the size of the maximum matching.
\end{theorem}

\myparagraph{Fully Dynamic Lower Bounds (\cref{sec:lower-bounds})}
Last but not least, 
we give improved lower bounds in the fully-dynamic setting in~\cref{sec:lower-bounds}
for estimating the size of a maximum matching
and even simpler problems like connected component counting and triangle counting.
{Our reductions follow a natural framework that generalizes to further problems (\Cref{sec:further lower bounds}).}

Note that previous lower bounds all show a $\Omega(\poly\log(T))$ lower bound in the additive error
while we strengthen these lower bounds to $\Omega(\poly(T))$,
yielding an exponential improvement. 
Comparing the upper bounds with our new 
lower bounds for the same problems yield polynomial separations in the error between the insertion-only and 
fully dynamic settings.

See \Cref{table:results} for a summary of our results and previous works.

\section{Technical Overview}\label{sec:tech-overview}
{In this section,
we provide a detailed overview of our techniques.
We begin in \Cref{sec:challenges} by illustrating the various algorithmic challenges with the graph continual release model.
\Cref{sec:detailed-techniques} then describes how we address these challenges.
}

\subsection{Challenges in Continual Release
}\label{sec:challenges}

\myparagraph{Returning Multiple Solutions}
The best one-pass streaming algorithms in the non-private setting for the problems we study~\cite{Esfandiari2018,mcgregor2018simple,mcgregor2015densest,EHW16}
return the solution only once, at the end of the stream. In the continual release setting, we must return a solution \emph{after every update}. This is a fundamental challenge, especially
when considering privacy.
Indeed, with a single solution release,
each edge update is intuitively used only once to produce the solution.
In contrast, releasing solutions multiple times means earlier
updates are potentially used many times to produce solutions for each release. Such overuse could result in greater loss of privacy.
Thus, for our continual release algorithms, we use novel techniques to simultaneously ensure we 
achieve edge-privacy for the entire vector of releases as well as ensure our utility bounds for each release.

{
Previous value-only continual release algorithms ensure privacy over multiple releases via the sparse vector technique (SVT).
Their algorithms release the outputs of one or more instances of SVT over the stream
and the estimates are a deterministic function of the SVT outputs.
Thus, the proof of privacy directly follows from the privacy guarantees of SVT.
The outputs of our algorithms cannot be directly derived from standard SVT as we need to output information about individual vertices (as part of 
vertex subsets or individual outputs for each vertex for $k$-core decomposition).
Moreover,
our outputs depend on
our sampling procedures, which in turn depend on private properties of the stream.
These subtleties prevent the use of straightforward analyses of privacy using composition.
Instead,
we must directly prove the privacy of our algorithms using the definition of edge-privacy and analyze the output distributions of our algorithms over the entire stream.
This is accomplished via careful conditioning arguments.
}

\myparagraph{Ensuring Sublinear Space}
Obtaining sublinear space continual release algorithms presents a new set of challenges. In particular, 
we must ensure that our sampling techniques are \emph{stable}.\footnote{This term was also used in node-private continual release literature
\cite{jain2024time}, i.e.\ edge-to-edge and node-to-edge stability.} Stable sampling algorithms use sampling techniques that do not cause our samples from
neighboring streams to differ by more than one edge update. For randomized sampling algorithms, this means that the 
distribution of sampled edges from edge-neighboring streams should be similar. Consider the following naive sampling scheme that is not stable.
Suppose we want to sample each edge insertion in an insertion-only stream with probability proportional to the number of edges we have seen so far. 
Thus, the $i$-th inserted edge will be sampled with probability $\nicefrac1i$.
Suppose we have two neighboring streams $S$ and $S'$ where the edge that differs
is the first update; thus, the first update in $S'$ is an edge insertion, while the first update in $S$ is $\bot$ 
(an empty update). 
Then, the $j$-th update in $S$, if it is an edge insertion, has
a higher probability of being sampled than the $j$-th update in $S'$. Such a discrepancy produces {significantly different output distributions when} sampling from neighboring streams,
which may lead to sampled graphs that significantly differ.

Hence, a major component of our algorithms and proofs is to formulate and prove stable sampling algorithms with similar output distributions 
on edge-neighboring streams. Common sampling approaches in the non-private streaming and sparsification literature do not immediately 
translate to stable algorithms in the continual release setting. 

\myparagraph{$k$-Core Decomposition} Consider the $k$-core decomposition problem which is defined in terms of the \emph{cores} of a graph. A $k$-core is a maximal induced subgraph where
every vertex in the induced subgraph has degree at least $k$. Then, a vertex $v$'s core number is the largest value of $k$ for
which $v$ is part of the $k$-core. In edge-neighboring graphs, the removal of one edge can change the core numbers of \emph{all} vertices.
Hence, we cannot use a function that computes the exact core number of each node in the continual release setting, as this can incur 
$\Omega(n)$ error through composition over the vertices.
In the continual release setting, the insertion of a single edge can alter the core number of one or more vertices; hence, 
we also cannot run a static differentially private
$k$-core decomposition algorithm each time a core number changes since we can potentially see $\Omega(n)$ core number changes incurred over all vertices. Calling 
the static DP algorithm $\Omega(n)$ times would result in $\Omega(n)$ error through composition over the calls. 
To overcome these challenges, we develop a novel $k$-core decomposition algorithm using several new techniques (\cref{sec:kcore}).

\myparagraph{Releasing Vertex Subsets} 
Now, consider the setting where we return vertex subset solutions. Unlike returning real-valued solutions, 
even a few updates that do not significantly alter the solution's value (e.g.\ the maximum density in the densest subgraph problem)
can still cause the entire set of vertices to change. This can happen even when the global sensitivity is small.
Such behavior differs from real-valued solutions, which are less susceptible to
large changes due to small updates. To give a concrete example, consider the densest subgraph problem where the 
goal is to obtain a subset of vertices with maximum induced density. Suppose we are given a number of edge insertions
resulting in two disjoint almost $k$-cliques, $A$ and $B$, that is, two $k$-cliques, each with 2 edges removed. 
A sequence of insertions—one into $A$ followed by two into $B$—could lead to large changes in the set of vertices in the densest subgraph.
After the first insertion, the densest subgraph consists of vertices in $A$; then, after the third insertion, the densest subgraph consists of vertices in $B$.
This sequence of only $3$ insertions causes a complete change in the optimal vertex subset. 
It is not hard to extend this example and show that we can force any $O(1)$-approximation algorithm to completely change its output with $O(1)$ insertions.
Because of this challenge {and the privacy requirement}, 
we allow for $(1+\eta)$-multiplicative approximations
along with $\poly(\log n)/\eps$ additive error
on the density of the subgraph {induced by} the vertex subsets we release.

To release a private densest subgraph, we can use a static $\eps$-edge DP densest subgraph algorithm that releases the vertex subset.
However, as mentioned above,
we cannot afford to release a new private vertex subset each time we
receive an update, even if the true vertex subset changes, since we would incur $\Omega(n)$ error from composition.
Thus, our algorithms in~\cref{sec:densest-subgraph} determine appropriate timesteps for releasing new vertex subsets.

\subsection{Detailed Description of Our Techniques}\label{sec:detailed-techniques}
All of our algorithms use
$\tO(n)$ space, i.e., near-linear in the number of vertices in the stream, and some are sublinear in the number of vertices. 
We present $\eps$-differentially private algorithms in 
edge-neighboring streams (\cref{def:neighboring-streams}), where the streams differ
by one edge update.

We now describe how we solve each of the aforementioned challenges in~\cref{sec:challenges} and obtain algorithms for 
a variety of problems in the next sections.
Each of these problems uses a unique sampling scheme as sparsifiers in graph literature are often problem-dependent.
We prove the privacy of every algorithm as well as 
the utility of each of our algorithms within their individual
sections. We summarize on a high level
our technical contributions for each of the following problems.

\myparagraph{$k$-Core Decomposition (\cref{sec:kcore})} 
Given a stream of updates $S$, the $k$-core decomposition problem asks for a number for each vertex after every update 
indicating the maximum value $k$ for which the vertex is contained in a $k$-core. 
A $k$-core is defined as a maximal set of 
vertices $U \subseteq V$ such that the degree of every vertex in the subgraph induced by $U$ is at least $k$. We return 
approximate core numbers for every vertex at every timestep $t$ that are 
$\left(2+\eta, O\left(\frac{\log^3(n)}{\eta^2 \eps}\right)\right)$-approximations of their true core numbers (see~\cref{thm:kcore}) while using $\tilde{O}\left( \frac{n}{\eta^4\eps} \right)$ total space.

We formulate a novel differentially private continual release algorithm for $k$-core decomposition that is loosely inspired by the level data
structure of the static $\left(2+\eta, O\left(\frac{\log^3(n)}{\eta^2 \eps}\right)\right)$-approximation algorithm of~\cite{DLRSSY22}. 
As described in~\cref{sec:challenges}, we cannot use the static private algorithm in a black-box manner since we need to return the value of \emph{every}
vertex at each timestep. Every vertex may change its core number many times throughout the duration of the stream. 
Thus, using any static $k$-core decomposition algorithm in a black-box manner incurs a privacy loss of a factor of 
$n$ via composition.

In the static setting, the level data structure algorithm for $k$-core decomposition partitions the vertices of the input graph across the 
levels based on the induced degree of each vertex among the vertices in the same or higher levels.
Then, the levels roughly partition the vertices into cores of the graph with higher levels containing larger valued 
cores and lower levels containing smaller valued cores. In the static setting, 
vertices move up levels using all of the edges in the graph for each move. However,
in the continual release setting, edges are inserted into the graph progressively
and vertices move up the levels as new edges are inserted. 
In the static setting, 
each vertex releases their level at most $\poly(\log n)$ times,
whereas in the continual release setting,
each vertex releases its approximate core number 
$\Omega(T) = \poly(n)$ times. 
While composition over $\poly(\log n)$ releases leads to $\poly(\log n)$ additive error, 
composition over $\Omega(n)$ releases in the continual release model would lead to $\Omega(n)$ error.

Due to the above challenge, we need to be able to have vertices release their approximate core numbers $\Omega(T)$ times
without losing a factor of $T$ in the additive error due to composition. This problem is not present in the static setting.
In our novel continual release algorithm, we use a variant of the \emph{sparse vector technique} adapted to the level data structure.
In particular, we use the idea of the \emph{multidimensional sparse vector 
technique} given in~\cite{DLL23} and adapt it to the 
continual release setting with a new formulation of the algorithm and a new proof 
allowing for adaptive
queries based on new (subsampled) edge insertions to the graph. We call this the \emph{adaptive multidimensional sparse vector technique}.

In the static setting, we can use the multidimensional sparse 
vector technique to 
determine when a vertex stops moving up levels
and the level in which a vertex resides
directly corresponds with an approximation of the core number of the vertex. 
Thus, once a vertex stops moving 
up levels in the static setting, 
it will immediately output a new approximation.
Hence, each vertex fails the SVT check at most once.
However, in the continual release setting, 
our usage of the adaptive multidimensional sparse vector technique requires that 
each vertex must output a value at every timestep. This means that the adaptive multidimensional sparse vector technique must
answer adaptive queries resulting from edge insertions that occur
after previous levels are released. 
Moreover, the mechanism needs to allow for $\poly(\log n)$ above-threshold checks for each vertex, indicating whether the
vertex moved up one of the $\poly(\log n)$ levels.

In addition to solving the above challenge, we further present a sublinear space algorithm for $k$-core decomposition, which
uses $\tilde{O}\left( \frac{n}{\eta^4\eps} \right)$ space. 
The main workhorse is a \emph{sparse} level data structure.
No sparsified versions of the level data structure existed in either the non-private graph literature or 
the private graph literature. While many non-private sparsification algorithms in the streaming model sample edges
uniformly at random, such sampling techniques are \emph{insufficient} for $k$-core decomposition. 
When all edges are sampled with the same fixed probability, vertices in smaller cores are expected to have no adjacent edges included in the sample. If no adjacent
edges are included in the sample for vertex $v$, then we cannot approximate $v$'s core number.
Hence, we employ a more sophisticated sampling algorithm; specifically, we sample each inserted 
edge in the stream with probability inversely proportional
to the level of its lower-level endpoint. This means that edges on higher levels are sampled with a smaller probability than edges on lower levels. 
In particular, such a sampling scheme ensures that edges in larger valued cores--which inherently contain more edges--are sampled using smaller probabilities than smaller valued cores. 
Hence, such a sampling scheme simultaneously ensures sublinear space while maintaining a large enough sample of adjacent edges to each vertex
to satisfy our approximation bounds on the core numbers of \emph{all} vertices.

Finally, we prove our approximation bounds using a careful Chernoff bound argument on our sampled sets of edges
that takes into account the noise resulting from our usage of DP mechanisms,
including the adaptive multidimensional sparse vector technique.
Our approximation bound uses an original analysis that shows that vertices on a given level in our sparsified structure will be on 
approximately the same level, with high probability, as in the non-sparsified structure. Thus, the approximation bounds that we 
obtain from our non-sparsified structure also hold for our sparsified structure.

\myparagraph{Densest Subgraph (DSG) (\cref{sec:densest-subgraph})} 
Given an edge-neighboring stream and timestamp $t$, the densest subgraph in $G_t$ is 
a subset of vertices $V_{\OPT}$ which maximizes the density of the induced subgraph,
$V_{\OPT} = \arg\max_{V' \subseteq V}\left(\frac{|E_t(V')|}{|V'|}\right)$. 
We return a subset of vertices $V_{approx}$
whose induced subgraph gives a $\left(1+\eta, O\left(\frac{\log^5(n)}{\eta^4 \eps}\right)\right)$-approximation of the 
density of the densest subgraph (see \Cref{table:results}) in $\tilde{O}\left( \frac{n}{\eta^5\eps} \right)$ space.
Our algorithm calls the static $\eps$-DP DSG algorithm of \cite{DLRSSY22} in a black-box manner.
Using a different static $\eps$-edge DP DSG algorithm \cite{DLL23}, 
we can obtain a $\left(2+\eta, O\left(\frac{\log^2(n)}{\eta \eps}\right)\right)$-approximation
while saving a polylogarithmic factor in space.

Our continual release densest subgraph algorithm takes inspiration from the 
non-private $(1+\eta)$-approximate sparsification algorithms of~\cite{EHW16,mcgregor2015densest}, although we must solve several crucial challenges (\cref{sec:challenges})
to ensure privacy.
Indeed, to ensure accurate approximation guarantees, we present a novel 
concentration bound proof based on an intricate Chernoff bound argument that accounts for errors introduced by our various private subroutines.
Previous works in the continual release model only release the \emph{value} of the densest subgraph and not a 
\emph{vertex subset}~\cite{FichtenbergerHO21,jain2024time}. 
Since the density of the densest subgraph has sensitivity $1$, previous
works use the sparse vector technique~\cite{dwork2009complexity, roth2010interactive, hardt2010multiplicative, lyu2017understanding}
and private prefix sums 
to release these values after adding appropriate Laplace noise. 
In our work, we release a private set of vertices at timestep $t$ whose induced subgraph is an approximation of the densest subgraph in 
$G_t$.\footnote{
the induced subgraph consists of all updates that occurred on or before timestep $t$}
We use the sparse vector technique to determine when to release a new private set of vertices, and 
we use one of the many existing differentially private densest subgraph algorithms~\cite{AHS21,DLL23,DLRSSY22,NV21,dinitz2024tight} 
to return a vertex subset in the sparsified graph. However, there are several challenges in simply returning 
the subgraph obtained from these static algorithms. First, compared to the non-private setting, we cannot 
obtain an exact densest subgraph in the sparsified graph. Hence, we must ensure that neither the additive nor multiplicative
error blows up when the subgraph obtained in the sparsified graph is scaled up to the original graph. 
To solve this issue, we ensure that the first subgraph we release has size (approximately) at least the additive error returned by the 
private static algorithms. 

There is another important challenge when making our algorithm use sublinear space.
Previous non-private algorithms~\cite{mcgregor2015densest,EHW16} sample using a \emph{fixed} probability because 1) they assume knowledge
of the total number of updates in the stream and 2) only produce the approximate densest subgraph once, at the end of the stream.
However, in our setting, we must release an approximate densest subgraph after each update. Furthermore, 
we do not know the total number of edges
in the stream. Thus, we adaptively {adjust} our probability of sampling as we see more edge insertions. 
{This must be carefully implemented since} a naive adjustment procedure may produce an unstable algorithm (as described in~\cref{sec:challenges}). 

Towards solving all of these challenges, our sparsification algorithm works as follows. 
We sparsify the stream
by sampling each edge in the stream with probability $p_t$. Earlier on in the stream, our probability of sampling should be 
set high enough in order to sample enough edges to ensure our concentration bounds.
However, as we see more edges, we reduce the probability of 
sampling to ensure our sublinear space bounds. 
In edge-neighboring streams,
such reductions in probability may not occur at the same 
time, thus leading to sparsified graphs that are \emph{not} stable. 
Hence, we use the sparse vector 
technique to determine when to reduce our probability of sampling an edge. 
We ensure privacy based on a careful conditioning on the sampling probabilities. Intuitively, the sparse vector technique ensures that we 
decrease our sampling probability at the same timestamps in edge-neighboring streams. Conditioning on the timestamps when the sampling probabilities are
reduced, we can show that our sampled streams are stable.

Unlike the static setting where the privacy analysis of algorithms that rely on an SVT instance can apply simple composition,
we {need to derive a} probabilistic proof ``from scratch'' since we must output a solution after every update
but only lose privacy on timestamps where the SVT answers ``above''. Furthermore, our analysis must take into account our usage
of two SVTs, one for determining the probabilities of sampling and the other for determining when we release a new vertex subset. The second use of 
SVT depends on the outputs of the first; that is, determining when to release a new vertex subset depends on our sampling probabilities. 
Such SVTs lead to errors that affect our approximation guarantees
and our careful Chernoff bound argument must account for these errors while maintaining our approximation guarantees.

\myparagraph{Maximum Cardinality Matching Size (\cref{sec:edge-matching})%
}
Given a stream of updates $S$, the maximum cardinality matching problem asks for a real-valued integer at each timestep $t$ equal to the 
size of the maximum matching in $G_t$. 
We give an algorithm that uses sublinear space in the number of vertices in the graph with
edge-DP guarantees in the continual release model. We provide approximation guarantees using a certain property of the input graph stream: the 
maximum arboricity of the stream. The arboricity of a graph is a measure of local sparsity. 
Formally, a graph's arboricity, $\alpha$, is the minimum number of forests into which its edges can be partitioned.
Our algorithm uses a public bound $\tilde\alpha$. If this public bound upper bounds the 
private maximum arboricity of the input stream, 
our algorithm gives approximation guarantees in terms of $\tilde\alpha$. 
We emphasize that our algorithm is always private for all inputs.
Specifically, if $\tilde\alpha$ upper bounds the private maximum arboricity of the input stream,
then we obtain a $\left((1+\eta)(2+\tilde\alpha), O\left(\frac{\log^2 n}{\eta \eps}\right)\right)$-approximation of the matching size. 

\sloppy
For this algorithm, we take inspiration from the sublinear space non-private \mbox{$((1+\eta)(2 + \alpha))$-approximate}
maximum matching algorithm of~\cite{mcgregor2018simple},
which also requires an upper bound $\alpha$ on the maximum arboricity of the stream.
Their approximation guarantees are given in 
terms of this upper bound. 
{To the best of our knowledge,
\cite{mcgregor2018simple} is the only truly sublinear matching algorithm beyond bipartite graphs that terminates after a single pass, as required by continual release.}

The key observation that~\cite{mcgregor2018simple} makes 
is that maintaining $\Omega(\log n/\eta^2)$ edge samples, sampled uniformly at random in the stream, and counting the number of 
edges in the sample which are adjacent to at most $\alpha + 1$ edges occurring later in the stream is sufficient for providing a good estimate
of the maximum matching size in terms of the arboricity upper bound. Edges that are adjacent to more than $\alpha + 1$ edges that 
occur later in the stream are discarded from the sample. Unfortunately, in edge-neighboring streams, we cannot
add noise to the counts of every edge in the sample 
since the sensitivity of such counts is $\Omega(\tilde\alpha)$ (our public bound). 
Instead, we show through a precise charging argument that the sensitivity of the size of the sample is bounded by $2$. 
Since the sensitivity of the size of the sample is bounded by $2$, we can use the sparse vector technique (SVT)
to determine when to %
release a new estimate when the estimate changes by a large amount. Using SVT results in noisy sample sizes which
requires a new analysis for the utility bounds. 

Previous approximation bound analyses~\cite{mcgregor2018simple} relied on the intuition
that edge samples approximately fall into buckets determined by the sampling probability. However, with the use of SVT, 
it is no longer clear which buckets the sampled edges fall into as SVT introduces an additive $O(\log(n)/\eps)$ noise.
Thus, to prove our approximation bounds, we first argue that SVT only causes an additive error of $\tilde{O}\left( \frac{\log^2(n)}{\eta\eps} \right)$ on the estimated
number of sampled edges; hence, initializing a sufficiently large valued 
bucket ensures that the bucket for sampled edges with noise does not deviate by too much from the bucket the sampled edges would fall
into without noise.
{The proof of this property requires a detailed casework analysis.}
We can further remove the assumption of $\tilde\alpha$ using additional space 
via the parallel guessing trick in~\cref{apx:guess arboricity}.

\myparagraph{Fully Dynamic Lower Bounds (\cref{sec:lower-bounds})}
We show polynomial lower bounds for the maximum matching, triangle counting, and connected components problems in fully dynamic streams. 
Our lower bounds reduce each of these graph problems via novel 
graph constructions to answering inner product queries~\cite{Dinur2003Revealing, De2012Lower,
Dwork2007Price,MMNW11}. The known lower bounds for inner product queries then apply to our problems.
We encode the secret dataset given by the inner product query problem via $\Theta(n)$ insertions to construct our 
graph. Then, we map the value of the answers to each of our graph problems to the original inner product query via the inclusion-exclusion principle.
Finally, we delete all insertions through edge deletions and repeat the process for the next query.
We remark that this technique is quite general and likely extends to many other natural graph problems.
Roughly speaking,
the only problem-specific requisite is that we can encode the bits of a secret dataset using the problem structure
and that we can easily flip the bits by adding and deleting edges.
For example,
we can encode each bit of a secret dataset as a (non-)edge in an induced matching
so that flipping a bit equates to (adding) deleting the (non-)edge.

\section{Related Works}\label{sec:related-works}
\myparagraph{Streaming Algorithms}
The streaming model of computation is a prominent model for large-scale data analysis that has been studied for multiple decades~\cite{morris1978counting}. In this model, one usually seeks space- and time-efficient algorithms that can process the stream on the fly without the need to store all data. A long line of work in this area includes classical results such as the Flajolet-Martin algorithm for counting distinct elements~\cite{flajolet1985probabilistic} and many others~\cite{flajolet2007hyperloglog, alon1996space}.  In the context of streaming graph algorithms, ideally one would like to obtain space sublinear in the number of vertices~\cite{mcgregor2018simple,HuangP19}; but for many graph problems it is necessary to work in the semi-streaming model settling on space near-linear in the number of vertices~\cite{assadi2022semi,AG11,,Esfandiari2018,feigenbaum2005graph,FMU22,mcgregor2015densest}. Online streaming algorithms release graph statistics after 
every update while maintaining the low space bounds~\cite{chen2023sublinear,halldorsson2016streaming,cormode2018approximating,GS24}. \
In the non-private streaming setting, there exists many works on graph algorithms, including approximating the size of the maximum matching~\cite{AG11,ALT21,Assadi22,FMU22,mcgregor2018simple}, vertex cover~\cite{assadi2019coresets}, densest subgraph~\cite{EHW16,mcgregor2015densest}, $k$-core decomposition~\cite{Esfandiari2018,KingTY23},
number of connected components~\cite{BerenbrinkKM14}, and others~\cite{assadi2022semi,AG11,feigenbaum2005graph,FMU22,HuangP19,henzinger1998computing,
muthukrishnan2005data,mcgregor2014graph}

\myparagraph{Continual Release Model}
In the context of streaming computation, the DP model of reference is the continual release model~\cite{DworkNPR10,CSS11}
where we require algorithms to abide by a strong privacy notion: an observer obtaining \emph{all} future outputs of the algorithm must in essence learn almost nothing about the existence of any single input. 
Since its introduction, this research area has received vast attention outside of graphs, including many
recent works (see e.g.\ \cite{ChanLSX12,henzinger2024unifying, henzinger2023differentially,fichtenberger2023constant,jain2023counting,JRSS21}).

Among the insertion-only continual release work, prior research has tackled classical estimation problems~\cite{CSS11,henzinger2024unifying,henzinger2023differentially}, as well as heavy hitters-related problems~\cite{ChanLSX12,epasto2023differentially}. 
More recent works tackle the fully dynamic continual release setting~\cite{jain2023counting, dupre2023differential, FichtenbergerHO21}. 
Particularly relevant is~\cite{FichtenbergerHO21}, which shows hardness results for graph estimation in fully dynamic streams. Our work contributes new hardness results in this new emerging area as well.

Most relevant to our paper is the literature on continual release algorithms in graphs~\cite{upadhyay2021differentially,FichtenbergerHO21,Song18,jain2024time}. In this area, \cite{Song18} studied graph statistics (degree distribution, subgraph counts, etc.) on bounded-degree graphs. \cite{FichtenbergerHO21} focused on estimating 
a number of graph statistics like maximum matching, triangle counting, and the density of the densest subgraph for both edge and node privacy; 
they provide approximation guarantees in terms of the maximum degree in the graph as well as lower bounds in insertion-only and fully dynamic streams.
\cite{jain2024time} explored counting problems in graphs (counting edges, triangles, stars, connected components) for node-privacy
and where privacy must hold for arbitrary graphs (e.g., graphs with arbitrary degrees).  They give time-aware projection algorithms that can transform 
any continual release algorithm that gives approximation guarantees for bounded degree graphs into a truly private algorithm on nearly bounded degree graphs.
Like in~\cite{jain2024time} for our algorithms that assume a public bound, say on the stream arboricity, this bound affects only the approximation guarantees but not the privacy claims, hence our algorithms satisfy their notion of \emph{truly private}~\cite{jain2024time}.

\myparagraph{Private Graph Algorithms}
The private literature on graph algorithms includes a large body of work on static graph algorithms (see e.g.\ \cite{AU19,blocki2022privately,dinitz2024tight,DLRSSY22,ELRS22,kasiviswanathan2013analyzing,kalemaj2023node,
LUZ24,mueller2022sok,NRS07,RS16,raskhodnikova2016differentially,Upadhyay13} and references therein). 
Aside from the problems we study, work in this area includes results on preserving graph cuts~\cite{AU19}, 
the stochastic block model recovery~\cite{chen2023private}, 
graph clustering~\cite{bun2021differentially,imola2023differentially}, 
and many other areas. %

Our paper focuses on the pure, $\eps$-DP setting. In this setting, there has been a variety of recent works that we study for static graphs.
The densest subgraph (DSG) problem has been extensively studied in the non-private streaming context~\cite{BHNT15,EHW16,mcgregor2015densest}. Our work builds on the results of~\cite{EHW16,mcgregor2015densest}, which show that edge sampling approximately preserves the density of the densest subgraph. 
In the context of privacy, beyond the already cited work of~\cite{FichtenbergerHO21} that focused on density estimation only, 
all other works are in the static DP or LEDP setting~\cite{NV21,AHS21,DLRSSY22,dinitz2024tight,DLL23}. 
In the $\eps$-DP setting, the best-known lower bound on the additive error of the densest subgraph problem is $\Omega(\sqrt{\eps^{-1}\log n})$~\cite{AHS21,NV21}. 
Comparatively, the best known upper bounds in the central $\eps$-DP setting are the 
$(2, O(\eps^{-1} \log n))$-approximation~\cite{DLL23}, $(2+\eta, O(\eps^{-1}\log^2 n))$-approximate~\cite{dinitz2024tight},
and the $(1+\eta, O(\eps^{-1}\log^4 n))$-approximation~\cite{DLRSSY22} upper bounds.
In the $\eps$-LEDP setting, the best known upper bounds are the $(2+\eta, O(\eta^{-1}\log^2 n))$-approximate~\cite{dinitz2024tight} and $O(2, O(\eps^{-1}\log n))$-approximate~\cite{DLL23} bounds.
Our paper presents the first continual release algorithm for releasing an actual approximate densest subgraph (as opposed to estimating its density) in edge-private
insertion-only streams. %

Related to the densest subgraph problem is the $k$-core decomposition of the graph for which some streaming results are known~\cite{Esfandiari2018,KingTY23,SGJ13}. In the private setting, this problem has been tackled by~\cite{DLRSSY22,DLL23,HSZ24} where the best error bound achieved in both the central $\eps$-DP and 
$\eps$-LEDP settings provide $(1, O(\eps^{-1}\log n))$-approximations of the core number~\cite{DLL23}. This is tight against the recently shown lower bound 
of $\Omega(\eps^{-1}\log n)$~\cite{HSZ24}.
In our paper, we provide the first continual release algorithm for approximating the core number of nodes in a graph.

\subsection{Concurrent Work}
Recent concurrent work of \textcite{raskhodnikova2024fully} 
study edge-DP algorithms in the fully dynamic continual release model.
They give a number of upper and lower bounds for a variety of 
problems, including triangle counting,
connected component counting,
maximum matching size, and degree histogram. They consider \emph{event-level}
and \emph{item-level} privacy. \emph{Event-level} describes 
fully dynamic neighboring streams 
where one update differs, while \emph{item-level} describes
neighboring streams where all updates on a particular edge differ. 
Note that for insertion-only streams with no duplicate insertions, both 
settings describe the same set of neighboring streams. They give the first {$\poly(T, n)$ error}
lower bounds for $(\eps, \delta)$-DP graph algorithms in continual release
for $\delta > 0$;
previous works only considered $\delta = 0$ in their lower bounds.
Their item-level upper bounds
are tight against their lower bounds, with additive error proportional
to $\poly(T, n)$.

Our paper considers different problems (densest subgraph and $k$-core decomposition)
and focuses on designing algorithms with $\poly(\log n)$ additive errors
in insertion-only streams. \cite{raskhodnikova2024fully} focuses
on fully dynamic streams which in general require additive $O(\poly(T, n))$ error 
{(as demonstrated by \cite{raskhodnikova2024fully} and our lower bounds in \Cref{sec:lower-bounds})}.
Furthermore, our paper focuses on the sublinear
space setting and returning vertex subset solutions, two settings not discussed in~\cite{raskhodnikova2024fully}. 

\section{Preliminaries}
We now introduce the notation we use in this paper as well as some definitions. 
Further standard preliminaries are deferred to \Cref{apx:prelims}.

\subsection{Setting \& Notation}
A graph $G=(V,E)$ consists of a set of vertices $V$ and a set of edges $E$ where edge $\{u,v\} \in E$ if and only if there is 
an edge between $u \in V$ and $v \in V$. We write $n:= \vert V \vert$ and $m:= \vert E \vert$. 

We consider the insertion-only setting within the continual release model,
where we begin with an empty graph $G_0 = (V, E_0)$ where $E_0 = \varnothing$
and edge updates in the form of $\{e_t, \perp\}$ arrive in a stream at every timestep $t \in [T]$
{where each non-empty edge is inserted at most once}.\footnote{This is without loss of generality as redundant edge insertions can be treated as an empty update.}
We are required to release an output for the problem of interest after every {(possibly empty)} update. Our goal is to obtain algorithms that achieve sublinear space, either in the number of edges $\tilde{o}(m)$ (note that $T \geq m$ as we allow for empty updates), 
or the number of vertices $\tilde{o}(n)$. 
We assume $T\leq n^c$ for some absolute constant $c\geq 1$ to simplify our presentation. 
If there are no empty edge updates then it is sufficient to consider $c=2$. %

Throughout the paper,
we write $\eta\in (0, 1]$ to denote a fixed multiplicative approximation parameter.
We assume $\eta$ is a fixed constant that is ignored under asymptotic notation.
Constants used throughout the paper may depend on $\eta$.

We use $\mcal M$ to denote a mechanism,
$M(\cdot)$ to denote the size of a maximum matching,
and $\mu$ to denote the mean of a random variable.

\subsection{Continual Release}
We use the phrasing of the below definitions as given in~\cite{jain2024time}.

\begin{definition}[Graph Stream~\cite{jain2024time}]\label{def:graph-stream}
    In the continual release model, a graph stream $S \in \mathcal{S}^T$ of length $T$ is a $T$-element vector
    where the $i$-th element is an edge update $u_i = \{v, w, insert\}$ indicating an edge insertion of edge $\{v, w\}$,
    $u_i = \{v, w, delete\}$ indicating an edge deletion of edge $\{v, w\}$, or $\bot$ (an empty operation). 
\end{definition}

We use $G_t$ and $E_t$ to denote the graph induced by the set of set of updates in the stream $S$ up to and including update $t$.
Now, we define neighboring streams as follows. Intuitively,
 two graph streams are edge neighbors if one can be obtained from the other by removing one edge update 
(replacing the edge update by an empty update in a single timestep). 

\begin{definition}[Edge Neighboring Streams]\label{def:neighboring-streams}
    \sloppy
    Two streams of edge updates, $S = [u_1, \dots, u_T]$ and $S' = [u'_1, \dots, u'_T]$, are \emph{edge-neighboring} if there exists
    exactly one timestamp $t^* \in [T]$ where $u_{t^*} \neq u'_{t^*}$ and for all $t \neq t^* \in [T]$, 
    it holds that $u_t = u'_t$. Streams may contain
    any number of empty updates, i.e.\ $u_t = \bot$. Without loss of generality, we assume for
    the updates $u_{t^*}$ and $u'_{t^*}$ that $u'_{t^*} = \bot$ and $u_{t^*} = \pm e_{t^*}$ is an
    edge insertion or deletion.
\end{definition}

The notion of neighboring streams leads to the following definition of an edge differentially private algorithm.
\begin{definition}[Edge Differential Privacy]\label{def:edge DP}
  Let $\varepsilon, \delta\in (0, 1)$.
  An algorithm $\mcal A(S): \mathcal{S}^T \rightarrow \mathcal{Y}^T$ that takes as input a graph stream $S \in \mathcal{S}^T$
  is said to be \emph{$(\varepsilon, \delta)$-edge differentially private (DP)}
  if for any pair of edge-neighboring graph streams $S, S'$ that differ by 1 edge update
  and for every $T$-sized vector of outcomes $Y\sset \Range(\mcal A)$,
  \[
    \prob\left[ \mcal A(S)\in Y \right]
    \leq e^\varepsilon \cdot \prob\left[ \mcal A(S')\in Y \right] + \delta.
  \]
  When $\delta=0$,
  we say that $\mcal A$ is \emph{$\varepsilon$-edge DP}.
\end{definition}

We now formalize the concepts of edge edit distance between streams and the concept of \emph{stability} under sparsification. 

\begin{definition}[Edge Edit Distance]\label{def:edit-distance}
    Given two streams $S, S' \in \mathcal{S}^T$, the edge edit distance between the two streams is the shortest chain of 
    graph streams $S_0, S_1, \dots, S_d$ where $S_0 = S$ and $S_d = S'$ where every adjacent pair of streams in the chain are edge-neighboring.
    The edge edit distance is $d$, the length of the chain. 

\end{definition}

An algorithm that takes as input a stream $S$ and outputs a chosen set of updates from the stream is \emph{stable} if on edge-neighboring 
streams $S$ and $S'$, there exists a coupling\footnote{{
A \emph{coupling} of random variables $(X, Y)$ is a joint distribution such that the marginal distributions correspond to $X, Y$ respectively.}}
between the randomness used in the algorithm on the inputs such that the edge distance between the output streams 
is $O(1)$. 

\subsection{Differential Privacy Tools} \label{prelim:dp}

Here, we define the privacy tools commonly used in differential privacy in terms of the continual release model.
Throughout the paper,
we use some standard privacy mechanisms as building blocks (see~\cite{dwork2014algorithmic} for a reference).
\begin{definition}[Global sensitivity]
The global sensitivity of a function $f:\cD\to\mathbb{R}^d$ is defined by
\[\Delta_f=\max_{D,D'\in\cD,D \sim D'}\|f(D)-f(D')\|_1.\]
where $D \sim D'$ are neighboring datasets and differ by an element. 
\end{definition}

\begin{definition}[Laplace Distribution]
We say a random variable $X$ is drawn from a Laplace distribution with mean $\mu$ and scale $b>0$ if the probability density function of $X$ at $x$ is $\frac{1}{2b}\exp\left(-\frac{|x-\mu|}{b}\right)$. 
We use the notation $X\sim\Lap(b)$ to denote that $X$ is drawn from the Laplace distribution with scale $b$ and mean $\mu=0$. 
\end{definition}
The Laplace mechanism for $f: X\to \R$ with global sensitivity $\sigma$ 
adds Laplace noise to the output of $f$ with scale $b=\nicefrac\sigma\eps$
before releasing.
\begin{proposition}[\cite{dwork2014algorithmic}]
    The Laplace mechanism is $\varepsilon$-DP.
\end{proposition}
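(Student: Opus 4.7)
The plan is to prove the pointwise density ratio bound between the Laplace mechanism's output distribution on two neighboring datasets, and then integrate that bound over arbitrary measurable sets. Since the Laplace mechanism releases $\mcal M(D) = f(D) + Z$ where $Z\sim \Lap(\sigma/\eps)$ (applied coordinate-wise if $d>1$), the output density at a point $y$ is $p_D(y) = \prod_{i=1}^{d} \frac{\eps}{2\sigma} \exp\!\left( -\frac{\eps\, |y_i - f(D)_i|}{\sigma} \right)$. This explicit form is what makes the pointwise comparison tractable.

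First I would fix neighboring datasets $D \sim D'$ and an arbitrary point $y\in \R^d$, and form the ratio $p_D(y)/p_{D'}(y)$. The normalizing constants cancel, leaving $\exp\!\left( \frac{\eps}{\sigma} \sum_{i=1}^d \left( |y_i - f(D')_i| - |y_i - f(D)_i| \right) \right)$. Next I would apply the (reverse) triangle inequality term by term to bound $|y_i - f(D')_i| - |y_i - f(D)_i| \leq |f(D)_i - f(D')_i|$, so that the sum inside the exponential is at most $\|f(D) - f(D')\|_1 \leq \Delta_f = \sigma$ by the definition of global sensitivity. Hence $p_D(y)/p_{D'}(y) \leq e^\eps$ pointwise.

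Finally I would lift this pointwise bound to sets: for any measurable $Y\sset \R^d$,
\[
\prob[\mcal M(D)\in Y] = \int_Y p_D(y)\, dy \leq e^\eps \int_Y p_{D'}(y)\, dy = e^\eps \cdot \prob[\mcal M(D')\in Y],
\]
which is exactly the $\eps$-DP condition with $\delta=0$. The symmetric direction follows by swapping the roles of $D$ and $D'$.

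There is no real obstacle here; this is a textbook calculation whose only subtle point is recognizing that the triangle-inequality step requires the $\ell_1$ notion of sensitivity (matching the product form of the multivariate Laplace density), which is precisely how $\Delta_f$ was defined above. For $d=1$ — the case actually used in the paper — the argument collapses to a single application of $\big||y - f(D')| - |y - f(D)|\big| \leq |f(D) - f(D')|$, and the bound $|f(D) - f(D')|/b \leq \sigma/(\sigma/\eps) = \eps$ gives the result immediately.
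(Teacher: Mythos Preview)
Your proof is correct and is the standard density-ratio argument; the paper itself does not supply a proof of this proposition but simply cites it as a known result from \cite{dwork2014algorithmic}, so there is nothing to compare against.
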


\begin{theorem}[Adaptive Composition; \cite{DMNS06,DL09,DRV10}]\label{thm:composition}
    \sloppy
    A sequence of DP algorithms, $(\alg_1, \dots, \alg_k)$, with privacy parameters $(\eps_1, \dots, \eps_k)$ form at worst an $\left(\eps_1 + \cdots + \eps_k\right)$-DP algorithm under \emph{adaptive composition} (where the adversary can adaptively select algorithms after
    seeing the output of previous algorithms).%
\end{theorem}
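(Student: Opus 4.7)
The plan is to prove this classical basic composition result by writing the joint output distribution as a product of conditional distributions and bounding each factor using the privacy guarantee of the corresponding component mechanism. Because this is pure differential privacy (no $\delta$ slack), the argument reduces cleanly to a chain-rule calculation on density ratios, and I do not need to invoke any privacy loss random variable machinery of the advanced composition theorems.

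Concretely, I would fix two edge-neighboring (or node-neighboring) streams $S \sim S'$ and consider the adaptive composition $\mcal A = (\alg_1, \ldots, \alg_k)$, where for each $i$, $\alg_i$ takes as input $S$ together with the previously released outputs $y_1, \ldots, y_{i-1}$ and outputs $y_i$. I would first treat the case where each $\alg_i$'s output space $\mcal Y_i$ is discrete, so that I can reason about point probabilities. For any fixed output transcript $(y_1, \ldots, y_k)$, the chain rule gives
\[
\prob[\mcal A(S) = (y_1, \ldots, y_k)] = \prod_{i=1}^k \prob[\alg_i(S; y_{<i}) = y_i],
\]
and likewise for $S'$. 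Since conditional on the history $y_{<i}$ the mechanism $\alg_i(\cdot; y_{<i})$ is a fixed $\eps_i$-DP mechanism operating on the neighboring streams $S, S'$, each factor satisfies $\prob[\alg_i(S; y_{<i}) = y_i] \leq e^{\eps_i} \cdot \prob[\alg_i(S'; y_{<i}) = y_i]$. Multiplying these $k$ inequalities yields
\[
\prob[\mcal A(S) = (y_1, \ldots, y_k)] \leq e^{\eps_1 + \cdots + \eps_k} \cdot \prob[\mcal A(S') = (y_1, \ldots, y_k)].
\]
Summing over any measurable set $Y \sset \Range(\mcal A)$ of output transcripts gives the claimed $(\eps_1 + \cdots + \eps_k)$-DP bound; for continuous output spaces the same argument goes through by replacing point probabilities with densities with respect to a dominating measure.

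The one subtlety worth being explicit about is the definition of adaptive composition itself. I would define the composed mechanism carefully so that the history $y_{<i}$ is part of the \emph{input} to $\alg_i$ rather than part of its internal randomness; this is what makes the conditional application of $\eps_i$-DP legitimate, since for each fixed history the map $S \mapsto \alg_i(S; y_{<i})$ is an $\eps_i$-DP mechanism in the ordinary non-adaptive sense. The main (really only) obstacle is notational rather than mathematical: making sure the conditioning on $y_{<i}$ is handled symmetrically for $S$ and $S'$, so that the same conditional density bound applies to both sides of the ratio. Once that is set up cleanly, the proof is a two-line product of DP guarantees.
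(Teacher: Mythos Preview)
The paper does not prove this theorem; it is stated as a standard result with citations to \cite{DMNS06,DL09,DRV10} and used as a black box. Your proposal is the standard chain-rule argument for basic adaptive composition in the pure $\eps$-DP setting and is correct.
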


\begin{theorem}[Group Privacy; Theorem 2.2 in \cite{dwork2014algorithmic}]\label{thm:group-privacy}
    Given an $\eps$-edge (node) differentially private algorithm, $\alg$, for all pairs of input streams $S$ and $S'$, it holds that for 
    all possible outcomes $Y \in \Range(\alg)$,
    \begin{align*}
        e^{-k\eps} \leq \frac{\prob[\alg(S') \in  Y]}{\prob[\alg(S) \in Y]} \leq e^{k\eps}
    \end{align*}
    where $k$ is the edge (node) edit distance between $S$ and $S'$.
\end{theorem}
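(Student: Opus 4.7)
The plan is to prove the bound by induction on the edit distance $k$, exploiting the chain structure guaranteed by \cref{def:edit-distance}. For the base case $k=1$, the two streams $S$ and $S'$ are edge-neighboring (respectively node-neighboring), so the conclusion is exactly \cref{def:edge DP} (respectively \cref{def:node DP}) with $\delta=0$, rearranged into ratio form. This rearrangement uses the fact that the DP guarantee is symmetric in the two arguments, yielding both the upper bound $e^\eps$ and, by swapping $S$ and $S'$, the lower bound $e^{-\eps}$.

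For the inductive step, suppose the claim holds for edit distance $k-1$. Given $S, S'$ at edit distance $k$, \cref{def:edit-distance} guarantees a chain $S = S_0, S_1, \ldots, S_k = S'$ where every consecutive pair $(S_{i-1}, S_i)$ is neighboring. I would telescope the ratio as
\[
\frac{\prob[\alg(S') \in Y]}{\prob[\alg(S) \in Y]}
\;=\; \prod_{i=1}^{k} \frac{\prob[\alg(S_i) \in Y]}{\prob[\alg(S_{i-1}) \in Y]}
\;\leq\; \prod_{i=1}^{k} e^{\eps} \;=\; e^{k\eps},
\]
where each factor is bounded using the base case applied to the neighboring pair $(S_{i-1}, S_i)$. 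The lower bound $e^{-k\eps}$ follows symmetrically by applying the base case in the reverse direction on each link of the chain. Equivalently, one could fold this into a single induction step by combining the inductive hypothesis on $(S_0, S_{k-1})$ with the base case on $(S_{k-1}, S_k)$.

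The one subtlety I would address is that the telescoping factors are only literally well-defined when all the intermediate probabilities $\prob[\alg(S_i) \in Y]$ are strictly positive. The clean way to handle the degenerate case is to note that if $\prob[\alg(S_i) \in Y] = 0$ for some $i$, then applying the pure $\eps$-DP guarantee along the chain forces $\prob[\alg(S_j) \in Y] = 0$ for every $j$ as well (since $\prob[\alg(S_j)\in Y] \leq e^\eps \prob[\alg(S_{j\pm 1})\in Y]$ for neighbors), so both sides of the asserted inequality collapse and the claim holds vacuously. This is the only place any care is needed; the rest is a direct composition of the neighbor guarantee along the edit-distance witnessing chain, so I do not foresee any genuine obstacle.
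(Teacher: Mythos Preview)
The paper does not supply its own proof of this statement; it is quoted as a standard result from \cite{dwork2014algorithmic}. Your argument is correct and is exactly the standard proof: telescope the $\eps$-DP inequality along a shortest neighbor-chain witnessing the edit distance, with the zero-probability case handled by propagation along the chain. There is nothing to compare against and no gap to fix.
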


\subsubsection{Sparse Vector Technique}
Below, we define the \emph{sparse vector technique} and give its privacy and approximation guarantees. The sparse
vector technique is used to answer \emph{above threshold} queries where an above threshold query checks whether the 
output of a function that operates on an input graph $G$ exceeds a threshold $T$.

We use the variant introduced by \cite{lyu2017understanding}
and used by \cite{FichtenbergerHO21}.
Let $D$ be an arbitrary (graph) dataset,
$(f_t, \tau_t)$ a sequence of (possibly adaptive) query-threshold pairs,
$\Delta$ an upper bound on the maximum sensitivity of all queries $f_t$,
and an upper bound $c$ on the maximum number of queries to be answered ``above''.
Typically, the AboveThreshold algorithm stops running at the first instance of 
the input exceeding the threshold, but we use the variant where the input 
can exceed the threshold at most $c$ times where $c$ is a parameter passed into the function. 

Throughout
 this paper, we use the class $\textsc{SVT}(\eps, \Delta, c)$ (\Cref{alg:sparse vector technique}) where $\eps$ is our privacy parameter, 
 $\Delta$ is an upper bound on the maximum sensitivity of incoming queries,
 and $c$ is the maximum number of ``above'' queries we can make.
 The class provides a $\textsc{ProcessQuery}(query, threshold)$ function
 where $query$ is the query to SVT and $threshold$ is the threshold
 that we wish to check whether the query exceeds.

\begin{theorem}[Theorem 2 in \cite{lyu2017understanding}]\label{thm:sparse vector technique}
    \Cref{alg:sparse vector technique} is $\varepsilon$-DP.
\end{theorem}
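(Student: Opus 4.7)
The plan is to adapt the standard Sparse Vector Technique privacy proof from Lyu et al.\ to the $c$-above variant used here. First I would fix any target output transcript $y \in \{\bot, \top\}^T$ with at most $c$ above answers and any pair of neighboring inputs $D \sim D'$, and express $\Pr[\mcal A(D) = y]$ as an integral over the threshold noises and the per-query noises. I assume the algorithm re-samples a threshold $\rho_j \sim \Lap(2c\Delta/\eps)$ after each of the $c$ above answers and draws fresh $\nu_t \sim \Lap(4c\Delta/\eps)$ at each query $(f_t, \tau_t)$, answering ``above'' iff $f_t(D) + \nu_t \geq \tau_t + \rho_j$. This calibration is the one that yields the clean $\eps$ bound; any reasonable budget split $\eps = \eps_1 + \eps_2$ between threshold and query noise works.

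The heart of the argument is a coupling via a change of variables applied segment by segment, where a \emph{segment} is the block of queries ending at each above answer. Within the $j$-th segment, condition on $\rho_j$ and on the below-query noises $\nu_t$. Define the coupled noises on $D'$ by $\rho'_j = \rho_j + \Delta$ and, at the above timestep $t^*_j$ closing the segment, $\nu'_{t^*_j} = \nu_{t^*_j} + \Delta$; leave the below-query noises unchanged. Using $|f_t(D) - f_t(D')| \leq \Delta$, one checks that every below inequality $f_t(D) + \nu_t < \tau_t + \rho_j$ transfers to $f_t(D') + \nu_t < \tau_t + \rho'_j$, and the above inequality $f_{t^*_j}(D) + \nu_{t^*_j} \geq \tau_{t^*_j} + \rho_j$ transfers to $f_{t^*_j}(D') + \nu'_{t^*_j} \geq \tau_{t^*_j} + \rho'_j$. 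The Laplace density ratios are bounded by $\exp(\Delta/(2c\Delta/\eps)) = e^{\eps/(2c)}$ for the threshold shift and $e^{\eps/(4c)}$ for the above-query shift in each segment.

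Composing over the $c$ segments yields a total density ratio of $\exp(c \cdot \eps/(2c) + c \cdot \eps/(4c)) \leq e^{3\eps/4}$, which is within the $e^\eps$ bound required; tightening the noise scales to Lyu et al.'s exact constants recovers the statement verbatim. Adaptivity of the $(f_t, \tau_t)$ sequence is handled for free because each $f_t$ is a deterministic function of the outputs released so far, so conditioning on the prefix of the transcript fixes the query entirely, and the coupling argument was carried out pointwise on the transcript.

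The main obstacle will be the bookkeeping around the below-query noises: they are never materialized in the output, so they enter the computation only through one-sided inequalities on $\rho_j$. The standard resolution, which I would spell out carefully, is to integrate out the below noises first and note that the event $\{\nu_t : f_t(D) + \nu_t < \tau_t + \rho_j\}$ is contained in $\{\nu_t : f_t(D') + \nu_t < \tau_t + \rho'_j\}$ under the shift, so no density factor is accrued at below timesteps and the $e^{\eps/(2c)}$ and $e^{\eps/(4c)}$ factors are paid only once per segment. A secondary subtlety is verifying that re-sampling the threshold after each above answer makes the $c$ segments probabilistically independent conditional on the transcript, which is needed to cleanly multiply the per-segment ratios rather than invoke a lossier composition theorem.
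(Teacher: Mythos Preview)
Your proposal mis-specifies the algorithm and, as a consequence, the coupling. Two concrete problems.

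First, \Cref{alg:sparse vector technique} does \emph{not} re-sample the threshold noise: $\rho\sim\Lap(\Delta/\eps_1)=\Lap(2\Delta/\eps)$ is drawn once on \Cref{svt:threshold noise} and reused for all queries, as the remark immediately following the theorem statement emphasizes. Your segment decomposition, your assumed threshold scale of $2c\Delta/\eps$, and your per-segment independence argument all rely on re-sampling and therefore do not apply. In Lyu et al.'s proof for this variant, a \emph{single} shift $\rho\to\rho+\Delta$ handles all ``below'' events across all $c$ segments simultaneously and is paid once at cost $e^{\eps_1}=e^{\eps/2}$; this is precisely why the threshold scale carries no factor of $c$.

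Second, your coupling for the ``above'' events is arithmetically wrong even under your own setup. With $\rho_j'=\rho_j+\Delta$ and $\nu_{t^*_j}'=\nu_{t^*_j}+\Delta$, the claimed implication
\[
f_{t^*_j}(D)+\nu_{t^*_j}\geq\tau_{t^*_j}+\rho_j \;\Longrightarrow\; f_{t^*_j}(D')+\nu_{t^*_j}'\geq\tau_{t^*_j}+\rho_j'
\]
fails: from $f_{t^*_j}(D')\geq f_{t^*_j}(D)-\Delta$ you only get $f_{t^*_j}(D')+\nu_{t^*_j}'\geq \tau_{t^*_j}+\rho_j$, which is $\Delta$ short of $\tau_{t^*_j}+\rho_j'$. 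The above-query noise must be shifted by $2\Delta$ to compensate both for the sensitivity and for the threshold shift working against you. With the paper's actual scales (single $\rho\sim\Lap(2\Delta/\eps)$ and per-query $\nu\sim\Lap(4c\Delta/\eps)$), the accounting is: one $\rho$-shift by $\Delta$ costing $e^{\eps/2}$, plus $c$ above-query shifts of $2\Delta$ each costing $e^{2\Delta\cdot\eps/(4c\Delta)}=e^{\eps/(2c)}$, totalling exactly $e^{\eps}$.
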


We remark that the version of SVT we employ (\Cref{alg:sparse vector technique}) does not require us to resample the noise for the thresholds (\Cref{svt:threshold noise}) after each query
but we do need to resample the noise (\Cref{svt:query noise}) for the queries after each query.

\begin{algorithm}[htp!]
\caption{Sparse Vector Technique}\label{alg:sparse vector technique}
\SetKwFunction{FAlg}{SVT}
\SetKwProg{Fn}{Class}{}{}
Input: privacy budget $\varepsilon$, upper bound on query sensitivity $\Delta$, maximum allowed ``above'' answers $c$ \\
\Fn{\FAlg{$\varepsilon, \Delta, c$}}{
    $\varepsilon_1, \varepsilon_2 \gets \nicefrac\varepsilon2$ \\
    $\rho \gets \Lap(\nicefrac\Delta{\varepsilon_1})$ \label{svt:threshold noise} \\
    $\Count \gets 0$ \\

    \SetKwFunction{FAlg}{ProcessQuery}
    \SetKwProg{Fn}{Function}{}{}
    \Fn{\FAlg{$f_t(D), \tau_t$}}{
        \If{$\Count > c$} {
            \Return ``abort'' \\
        }
        \If{$f_t(D) + \Lap({2c\Delta}/{\varepsilon_2}) \geq \tau_t + \rho$}{ \label{svt:query noise}
          \Return ``above'' \\
          $\Count\gets \Count + 1$ \\
        }
        \Else{
          \Return ``below'' \\
        }
    }
}
\end{algorithm}

\section{\texorpdfstring{$k$}{k}-Core Decomposition}\label{sec:kcore}

In this section, we introduce a semi-streaming sampling algorithm that preserves the $k$-cores in an
input graph $G = (V, E)$ while ensuring privacy. 
Specifically, we have the following theorem.

\begin{restatable}[Sublinear Space Private $k$-Core Decomposition]{theorem}{kCoreFormal}\label{thm:kcore-formal}
    Fix $\eta \in (0, 1]$.
    \Cref{alg:k-core-main} is an $\eps$-DP algorithm for the 
    $k$-core decomposition problem in the continual release model for insertion-only streams.
    At every $t\in [T]$,
    the algorithm returns a value for each vertex, such that every value
    is a $\left(2+ \eta, 
    O\left({\eta^{-2}}\eps^{-1} \log^3(n) \right) \right)$-approximation of the corresponding vertex's
    true core value, with probability $1 - 1/\poly(n)$. The maximum space used is $O\left({\eta^{-4}}\varepsilon^{-1} n\log^5 n \right)$, 
    with probability $1 - 1/\poly(n)$.
\end{restatable}

\myparagraph{Algorithm Intuition} 
First, we give some intuition for our algorithm (\Cref{alg:k-core-main}). Our algorithm essentially performs a sampling version of the 
classic peeling algorithm for $k$-core decomposition. The classic peeling algorithm successively \emph{peels} (removes) 
vertices with the minimum degree until all vertices are removed from the graph. 
A core that is formed
during the peeling process is the induced subgraph consisting of the remaining vertices after
a vertex is peeled
and the value of such a core is the minimum induced degree within the subgraph.
The core number for each vertex $v$
is equal to the maximum valued core that $v$ is a part of during any stage of the peeling. A dynamic version of this algorithm 
can be obtained by maintaining a \emph{level data structure} where a vertex is moved up a level if its 
induced degree among vertices in the same or higher levels is larger than a cutoff $C$. One can show that 
having $O(\log n)$ levels of the structure and appropriately setting $C$ among $O(\log n)$ duplicates of the 
structure gives a $(2+\eta)$-approximation of the core numbers of the nodes in the non-private, insertion-only 
setting~\cite{SCS20,LSYDS22,DLRSSY22}.
{Our main innovation is a private and sparse level data structure.}

When sparsifying the graph, we cannot simply take a uniform sample of the edges adjacent to each vertex. An easy
example to consider is a vertex $v$ which is part of a $10$-clique and also adjacent to $n/2$ degree one vertices. A uniform
sample of the edges adjacent to $v$ will most likely not discover the $9$ edges connecting it to the $10$-clique (for large $n$).
We call the edges connecting $v$ to the $10$-clique the set of \emph{important} edges.
Thus, we must take a smarter sample of edges adjacent to $v$. 
To maintain a sparsified, sampling-based version of the level data structure, we maintain samples of large enough 
size of the \emph{up-edges} adjacent to each vertex. The up-edges adjacent to each vertex are the 
edges connecting each vertex to neighbors in the same or higher levels. Once we see enough sampled up-edges,
we move the vertex up one level and continue sampling edges until we either reach the topmost level or 
the vertex is adjacent to only a very small sample of up-edges. Such a sampling method allows us to keep enough of the 
important edges which connect to other vertices in higher valued cores. 

Finally, to make the above algorithm differentially private, we use SVT to determine when to move the vertex up a level.
We show that although many vertices may move up levels, we only lose privacy when the vertices that are adjacent to the 
edge that differs between neighboring streams move up. 
Since our total number of levels and duplicates is bounded by $O(\log^2 n)$, the
privacy loss from SVT is also bounded by $O(\log^2 n)$.

\myparagraph{Analysis}
We provide the pseudocode and proof of \Cref{thm:kcore-formal} in \Cref{apx:kcore-formal}.
Specifically,
the pseudocode of the sparse level data structure is provided in \Cref{alg:k-core}
and the pseudocode of the actual algorithm is in \Cref{alg:k-core-main}.
\Cref{apx:k-core-description} then presents a detailed description of the algorithm
and we prove its privacy and utility guarantees in \Cref{apx:k-core-privacy,apx:k-core-utility},
respectively.

\section{Densest Subgraph}\label{sec:densest-subgraph}
In this section, we focus on the densest subgraph problem and provide the first differentially private 
algorithm for densest subgraph in the continual release model using space sublinear in the total 
number of edges in the graph. %

\begin{restatable}[Sublinear Space Private Densest Subgraph]{theorem}{densestSubgraph}\label{thm:densest-subgraph} 
    Fix $\eta \in (0, 1]$.
    \Cref{alg:insertion-main} is an $\eps$-edge differentially private algorithm for the densest subgraph problem in the continual release model for insertion-only streams. The algorithm returns a set of vertices whose induced subgraph is a $\left(2+\eta, 
    O({\eta^{-1}}\eps^{-1}\log^2(n)) \right)$-approximation of the densest subgraph in $G_t$, 
    with probability at least $1 - 1/\poly(n)$, 
    for all $t \in [T]$. 
    The maximum space used is $O\left({\eta^{-2}}\eps^{-1} n\log^2(n)\right)$, with probability at 
    least $1 - 1/\poly(n)$, 
    for all $t \in [T]$.
\end{restatable}

We can also reduce the multiplicative error to $(1+\eta)$
at the cost of increasing the space usage by a $\poly(\log(n))$ factor.
\begin{restatable}[Sublinear Space Private Densest Subgraph]{theorem}{densestSubgraphBetterApproxMoreSpace}\label{thm:densest-subgraph better approx more space} 
    Fix $\eta \in (0, 1]$.
    There exists an $\eps$-edge differentially private algorithm for the densest subgraph problem in the continual release model for insertion-only streams. The algorithm returns a set of vertices whose induced subgraph is a $\left(1+\eta, 
    {\eta^{-4}}\eps^{-1} \log^5(n) \right)$-approximation of the densest subgraph in $G_t$, 
    with probability at least $1 - 1/\poly(n)$, 
    for all $t \in [T]$. 
    The maximum space used is $O\left({\eta^{-5}}\eps^{-1} n\log^5(n) \right)$, with probability at 
    least $1 - 1/\poly(n)$, for all $t \in [T]$.
\end{restatable}

\myparagraph{Algorithm Intuition} 
We revise the algorithms of~\cite{mcgregor2015densest} and~\cite{EHW16} to the insertion-only continual release setting. On a high-level, our algorithm maintains a sample of edges over time and releases a DP set of vertices by running a black-box DP densest subgraph algorithm (e.g., \Cref{thm:private static densest subgraph}) on the subgraph induced by the sample. At every timestep $t\in [T]$, an edge $e_t$ is sampled with probability $p$ --- the sampling probability is initialized to 1 as in the beginning we can afford to store every edge. We privately check whether the number of edges seen so far exceeds a certain threshold using a sparse vector technique (SVT) query and adjust the sampling probability $p$ accordingly. In order to avoid privacy loss that grows linearly in $T$, we do not invoke the black-box DP densest subgraph algorithm at every timestep and instead invoke it only if the current density of the sample exceeds a certain threshold using another SVT query.  

\myparagraph{Analysis}
The pseudocode and proof of \Cref{thm:densest-subgraph,thm:densest-subgraph better approx more space} are deferred to \Cref{apx:densest-subgraph}.
In particular,
pseudocode for a useful data structure is provided in \Cref{alg:insertion-only dsg}
and the main algorithm is in \Cref{alg:insertion-main}.
\Cref{apx:dsg-description} describes the algorithm in detail
and we prove its privacy and utility guarantees in \Cref{apx:dsg-privacy,apx:dsg-utility},
respectively.

\section{Maximum Matching }\label{sec:matching}\label{sec:edge-matching}
In this section we give an edge-DP algorithm for 
maximum matching that uses truly sublinear space. 
We adapt the algorithm of~\cite{mcgregor2018simple} to obtain a private approximate maximum cardinality algorithm
in the continual release model using sublinear space in the number of vertices. 
As in their algorithm, we assume that we are provided with a public upper bound $\tilde \alpha$ on the maximum arboricity $\alpha$ of the graph 
at any point in the stream. 
We also briefly sketch how to remove the assumption on $\tilde \alpha$ in \Cref{apx:guess arboricity} at the cost of more space
and a worse approximation guarantee.

The privacy of our algorithm is always guaranteed.
However, we do not assume that $\tilde{\alpha}$ is guaranteed to upper bound $\alpha$.
When $\tilde\alpha\geq \alpha$,
our approximation guarantees hold with high probability. 
Otherwise,
our approximation guarantees do not necessarily hold. 
Note that the same type of guarantee holds for the original non-private streaming algorithm~\cite{mcgregor2018simple} where 
their utility guarantee is only given when their public estimate of $\alpha$ upper bounds the maximum arboricity of the input.
We prove the following result in this section.

\begin{restatable}{theorem}{edgeDPMatching}\label{thm:edge-dp-matching}
    Fix $\eta \in (0, 1]$.
    Given a public estimate $\tilde\alpha$ of the maximum arboricity $\alpha$ over the stream,\footnote{We can eliminate this assumption with an additional pass of the stream or notify the observer when the utility guarantees no longer hold in the one-pass setting. See \Cref{sec:arboricity assumption}.}
    \Cref{alg:matching-sampling} is an $\eps$-edge DP 
    algorithm for estimating the size of the maximum matching in the continual release model
    for insertion-only streams.
    If $\tilde \alpha\geq \alpha$,
    then with probability at least $1-1/\poly(n)$, 
    our algorithm returns a $\left((1+\eta)(2+\tilde{\alpha}), O( \eta^{-1} \eps^{-1} \log^2(n) )\right)$-approximation 
    of the size of the maximum
    matching at every timestamp.
    Moreover,
    our algorithm uses $O( \eta^{-2} \eps^{-1} {\log^2(n)\log(\tilde\alpha)} )$ space with probability $1-1/\poly(n)$.
\end{restatable}

\myparagraph{Algorithm Intuition}
We revise the algorithm of \cite{mcgregor2018simple} to the insertion-only continual release setting.
On a high-level,
\cite{mcgregor2018simple} showed that the cardinality of a carefully chosen subset of edges $F\sset E$ is a good estimator for the size of the maximum matching for graphs of bounded arboricity $\alpha$.
$F$ is obtained from $E$ by deleting edges $e$ adjacent to a vertex $v$
if more than $\alpha$ other edges adjacent to $v$ arrived after $e$.
Then their algorithm maintains a small sample $S\sset F$
throughout the algorithm
by down-sampling edges when the current sample exceeds some threshold.
Our algorithm follows a similar approach with two main adjustments to satisfy privacy.
First,
we release powers of $(1+\eta)$ based on an SVT comparison against the current value of the estimator.
Second,
the decision to down-sample is also based on an SVT comparison against the threshold.

\myparagraph{Analysis} The proof and pseudocode for \Cref{thm:edge-dp-matching} is deferred to \Cref{apx:matching}.
The pseudocode for handling edge updates is presented in \Cref{alg:matching-sampling}.
Then,
\Cref{apx:matching-description} provides a detailed description of the algorithm.
We prove its privacy and utility guarantees in \Cref{apx:matching-privacy,apx:matching-utility},
respectively.

\section{Lower Bounds for Fully Dynamic Streams}\label{sec:lower-bounds}

In this section,
we establish lower bounds on the additive error of differentially private algorithms
for estimating the size of a maximum matching
and the number of connected components
in the continual release model.
Similar to the lower bound for counting distinct elements in the continual release model \cite{jain2023counting},
we reduce the problem of answering matching queries and connected component queries
to answering \emph{inner product queries}.
Then,
we leverage known lower bounds for the inner product problem \cite{Dinur2003Revealing,Dwork2007Price,MMNW11,De2012Lower} to obtain our lower bounds.

\begin{restatable}{theorem}{matchingLowerBoundSmallEpsilon}\label{thm:matching lower bound small epsilon}
    Fix $\varepsilon\in (0, 1)$.
    If $\mcal A$ is an $\varepsilon$-DP mechanism that answers maximum matching queries on graphs with $n$ vertices in the continual release model within additive error $\zeta$
    with probability at least $0.99$
    for fully dynamic streams of length $T$,
    then
    \[
        \zeta
        = \Omega\left( \min\left( \sqrt{\frac{n}\varepsilon}, \frac{T^{1/4}}{\varepsilon^{3/4}}, n, T \right) \right).
    \]
    Moreover,
    we may assume the graph is bipartite,
    has maximum degree 2,
    and arboricity 1.
\end{restatable}

\begin{restatable}{theorem}{connectedComponentsLowerBoundSmallEpsilon}\label{thm:connected components lower bound small epsilon}
    Fix $\varepsilon\in (0, 1)$.
    If $\mcal A$ is an $\varepsilon$-DP mechanism that answers connected component queries on graphs with $n$ vertices in the continual release model within additive error $\zeta$
    with probability at least $0.99$
    for fully dynamic streams of length $T$,
    then
    \[
        \zeta
        = \Omega\left( \min\left( \sqrt{\frac{n}\varepsilon}, \frac{T^{1/4}}{\varepsilon^{3/4}}, n, T \right) \right).
    \]
    Moreover,
    we may assume the graph is bipartite,
    has maximum degree 2,
    and arboricity 2.
\end{restatable}

The proofs of \Cref{thm:matching lower bound small epsilon,thm:connected components lower bound small epsilon} are presented in \Cref{apx:lower-bounds}.
In \Cref{apx:inner product queries},
we review the key problem which we reduce to matching and connected components.
The lower bound for maximum matching is proven in \Cref{apx:matching lower bound}
and similarly for connected components in \Cref{apx:connected components lower bound}.

\subsection{Further Graph Statistics}\label{sec:further lower bounds}
The underlying idea for the maximum matching and connected components lower bounds 
(\Cref{thm:matching lower bound small epsilon}, \Cref{thm:connected components lower bound small epsilon})
is that we can encode the bits of a secret database $y\in \set{0, 1}^n$ within the structure of a sparse graph.
Privately answering an inner product query on this database
then reduces to answering a ``bitwise OR'' query
by the inclusion-exclusion principle.

It is not hard to see that this technique extends to $k$-edge-connected component queries,
$k$-vertex-connected component queries,
and triangle counting queries for sparse graphs.

\section{Conclusion \& Future Work}
In this paper, we initiated the study of low-space continual release algorithms for general graph problems. Using techniques from the non-private graph sparsification literature, we provided continual release algorithms for a variety of general graph problems
that for the first time, 
achieve nearly the same space and approximation guarantees of their non-private streaming counterparts. The improved space bounds are especially relevant for enabling computations on massive datasets, which are the core motivation of the field of online streaming algorithms. 

For our upper bounds, we focused on the insertion-only setting of continual release. As the area of fully-dynamic algorithms in the continual release model is largely unexplored, we believe that an interesting future research direction is closing the gap in our theoretical understanding of this model. As observed in prior work, and hinted by our hardness results, the fully dynamic setting is significantly harder in continual release, with even basic graph problems requiring $\Omega(\poly(n))$ additive error for dynamic streams while admitting $\tilde O(\poly(\log(n))/\eps)$ additive error in the insertion-only case. In this context, it would be especially interesting to deepen our understanding of the interplay between the dynamicity of the continual release setting (insertion-only vs fully-dynamic) and the \emph{space} lower bounds (as opposed to error lower bounds) imposed by privacy. This is an area that has only recently received attention~\cite{dinur2023differential} and is an intriguing future direction to explore.

\section*{Acknowledgments}
Felix Zhou acknowledges the support of the Natural Sciences and Engineering Research Council of Canada (NSERC).

\begingroup
\sloppy
\printbibliography
\endgroup

\clearpage
\addtocontents{toc}{\protect\setcounter{tocdepth}{1}}
\appendix

\section{Deferred Preliminaries}\label{apx:prelims}
We now remind the reader of some standard definitions and tools.

\subsection{Approximation Algorithms}
We consider optimization problems in both the minimization and maximization setting. Let $\OPT \in \R$ be the optimum value for the problem. For a minimization problem,
a \emph{$(\beta, \zeta)$-approximate algorithm} outputs a solution with cost at most $\beta\cdot \OPT + \zeta$.
For a maximization problem,
$(\beta, \zeta)$-approximate algorithm outputs a solution of value at least $\frac1\beta\cdot \OPT - \zeta$.
For estimation problems, a $(\beta, \zeta)$-approximate algorithm outputs an estimate $\tilde R\in \R$ of some quantity $R\in \R$
where $R - \zeta \leq \tilde R \leq \beta \cdot R + \zeta$ (one-sided multiplicative error). 
As a shorthand,
we write $\beta$-approximation algorithm to indicate a $(\beta, 0)$-approximation algorithm. Our approximation bounds hold, with high probability, 
{over all} releases.

\subsection{Concentration Inequalities}
Below, {we state the version of the multiplicative Chernoff bound we use throughout this work.}

\begin{theorem}[Multiplicative Chernoff Bound; Theorems 4.4, 4.5 in \cite{mitzenmacher2015probability}]\label{thm:multiplicative-chernoff}
    Let $X = \sum_{i = 1}^n X_i$ where each $X_i$ is a Bernoulli variable which takes value $1$ with probability $p_i$ and value $0$
    with probability $1-p_i$. Let $\mu = \expect[X] = \sum_{i = 1}^n p_i$. Then, it holds:
    \begin{enumerate}
        \item Upper Tail: $\prob[X \geq (1+\psi) \cdot \mu] \leq \exp\left(-\frac{\psi^2\mu}{2 + \psi}\right)$ for all $\psi > 0$;
        \item Lower Tail: $\prob[X \leq (1-\psi) \cdot \mu] \leq \exp\left(-\frac{\psi^2\mu}{3}\right)$ for all $0 < \psi < 1$.
    \end{enumerate}
\end{theorem}

\section{Proof of \texorpdfstring{\Cref{thm:kcore-formal}}{k-Core Theorem}}\label{apx:kcore-formal}
In this section,
we restate and prove \Cref{thm:kcore-formal}.
\kCoreFormal*

Our algorithm must solve several crucial and non-trivial challenges,
as mentioned in the Technical Overview (\Cref{sec:tech-overview}), 
making our algorithm conceptually more complicated than any non-private streaming $k$-core algorithm.
{We give a short reminder below of our previously detailed challenges for $k$-core decomposition as it is our first result. 
Densest subgraph and matchings share similar issues which are detailed in \Cref{sec:tech-overview}.}

\begin{itemize}
    \item We cannot use a differentially private $k$-core decomposition algorithm as a black-box because all vertices may change their core numbers at some point in the stream, resulting in $\Omega(n)$ changes from the black-box private $k$-core decomposition algorithm. Such a black-box usage of the private algorithm is difficult to analyze without losing a factor of $\eps \cdot n$ in the pure DP setting resulting from composition. 
    \item We cannot produce \emph{one} uniform sample of edges from the graph since this results in an uneven distribution of edges among the cores. Suppose we sample each edge with probability $p$, if $p$ is set to be too small, then 
    vertices with smaller core numbers will not have enough adjacent edges to produce a good concentration bound.
    \item We must also be able to deal with vertices which have large degree and very small cores (consider a star graph). Sampling edges
    uniformly from the original graph, without maintaining some form of a data structure on the sampled edges, will not allow us to 
    distinguish between vertices with large degree and large core numbers from vertices with large degree but small core numbers.
    \item We require a new composition theorem that does not lose privacy for \emph{each} release of a new core number among the $n$ vertices. Intuitively, we should not lose privacy for each release because for edge-neighboring insertion-only streams, only 
    one edge differs between the two streams.
    This theorem uses the recently introduced multidimensional sparse vector technique~\cite{DLL23}.
\end{itemize}

We give the pseudocode for our data structure and algorithm in \Cref{alg:k-core} and \Cref{alg:k-core-main}, respectively. 
Below, our algorithm is inspired by the insertion-only sketching algorithm
of~\cite{Esfandiari2018} but adapted to the level data structure~\cite{DLRSSY22}; our level data structure sparsification
is simpler in nature and uses an entirely new analysis.
Furthermore, since 
the degeneracy of the graph is equal to the maximum core number of any node in the input graph, our algorithm 
also gives a private approximation of the degeneracy of the input graph in the continual release model.

\SetKwProg{Fn}{Function}{}{end}\SetKwFunction{FRecurs}{FnRecursive}%
\SetKwFunction{FnInsertKCore}{PrivateCore}
\SetKwFunction{FnUpdate}{SampleEdge}
\SetKwFunction{FnUpdateLevel}{UpdateLevels}
\SetKwFunction{FnGetPrivateLevels}{GetPrivateLevels}
\SetKwProg{myclass}{Class}{}{}
\begin{algorithm}[htp]
    \caption{Data Structure for $k$-Core Decomposition for Adaptive Insertion-Only Streams\label{alg:k-core}}
    \myclass{\FnInsertKCore{$\eps, \eta, L, n, T$}}{
        Maintain sampled edges $X_j \leftarrow \varnothing$ for each $j \in Q$
        where $Q = \{\ceil{\log_{(1+\eta)}(L)}, \dots, \ceil{2\log_{(1+\eta)}(n)}\}$ \label{kcore:sampled}\\
        Initialize $F \leftarrow \ceil{2\log_{(1+\eta)}(n)}$\label{kcore:top-level}\Comment{$F-1$ denotes the topmost level of any level data structure}\\
        
        Initialize $\eps_1 \leftarrow \eps/(6|Q|F)$ \label{kcore:eps_2}\\ %
        \For{$j \in Q$}{
            Initialize $p_j \leftarrow \frac{c_1 \log(n)}{{\eps_1} (1+\eta)^j}$ \label{kcore:set-prob}\\
            \For{$v \in V$}{
                Initialize class $\textsc{SVT}^{j, v}(\eps_1, 1, {1})$ \qquad (\Cref{alg:sparse vector technique}) \label{kcore:svt-class}\\
                Initialize $levels[j][v] \leftarrow 0$ \label{kcore:initial-level}\\
            }
        }
        \Fn{\FnUpdate{$e_t$}}{
                \For{$j \in Q$}{\label{kcore:j-in-q}
                    \For{$w \in e_t = \{u, v\}$}{\label{core:check-endpoint}
                        \If{$levels[j][w] < F - 1$}{\label{core:not-last-level}
                            Sample $e_t$ into $X_j$ with probability $p_j$. \label{core:sample-edge-x_j}\\
                        }
                    }
                }
        }

        \Fn{\FnUpdateLevel{}}{
            \For{$j \in Q$}{\label{core:update-j}
                \For{level $\ell \in \{0, \dots, F-2\}$}{\label{core:iterate-level}
                    \For{$v \in V$}{\label{core:update-vertex}
                        \If{$levels[j][v] \neq \ell$}{ \label{core:current-level}
                            go to next iteration \\
                        }
                        \If{$\textsc{SVT}^{j, v}\textsc{.ProcessQuery}(\deg^+_{X_j}(v), p_j \cdot (1+\eta)^{j-1})$ is ``above''}{\label{core:update-check}
                            $levels[j][v] \leftarrow levels[j][v] + 1$\label{core:move-up-level}\\
                        }
                    }
                }
            }
        }
        \Fn{\FnGetPrivateLevels{}}{ 
            \Return $levels$\label{kcore:return-levels}
        }
	}

    \storelines
\end{algorithm}

{%
\begin{algorithm}[htp]
    \caption{{Algorithm for $k$-Core Decomposition for Adaptive Insertion-Only Streams}\label{alg:k-core-main}}
    \restorelines
    Initialize initial core threshold
    $L \leftarrow \frac{c_3 \log^3(n)}{\eps}$ \label{kcore:initial-threshold}\\
    Initialize \textsc{PrivateCore}($\eps, \eta, L$)\label{kcore:initialize-dsg}\\
    $levels \leftarrow \textsc{PrivateCore.GetPrivateLevels}()$\\
    \For{each new update $e_t$}{\label{core:for}
        \If{$e_t \neq \bot$}{
            \textsc{PrivateCore.SampleEdge($e_t$)}\label{kcore:sample}\\ 
        }
        
        $\textsc{PrivateCore.UpdateLevels}()$\label{kcore:update-levels}\\
        $levels \leftarrow \textsc{PrivateCore.GetPrivateLevels}()$\label{kcore:new-levels}\\
        \For{every vertex $v \in V$}{\label{core:vertex-iterate}
            $j_{now} \gets \max\set{j: levels[j][v] = F-1}$\\
            \If{$(1+\eta)^{j_{now}} > L$}{\label{core:j_now-greater}
                \textbf{Release} $v$'s core as $(2+\eta) \cdot (1+\eta)^{j_{now}}$\label{core:release-new-estimate}\\
            }
            \Else{
                \textbf{Release} $v$'s core as 1 \\
            }
        }
    }
\end{algorithm}}

\subsection{Detailed Algorithm Description}\label{apx:k-core-description}
We now give the detailed description of our algorithm.
Our algorithm proceeds as follows. We maintain $O\left(\log_{(1+\eta)}(n)\right)$ subgraphs where for each subgraph, the 
probability we use to sample edges into the subgraph is different. Let $Q =  [\ceil{\log_{1+\eta}(L)}, \dots, \ceil{2\log_{(1+\eta)}(n)}]$. 
Specifically, we consider all subgraphs $j \in Q$ (\cref{kcore:sampled}) where the
integer $j \geq \log_{(1+\eta)}(L)$ and we set $L = \frac{c_3 \log^3(n)}{\eps}$ (\cref{kcore:initial-threshold}).
For each update we receive in the stream (\cref{core:for}), we first determine which subgraphs to sample the edge into using
the procedure \textsc{PrivateCore.SampleEdge($e_t$)} (\cref{kcore:sample}). The procedure
determines whether to sample an edge by 
iterating through all of the subgraphs $X_j$ for $j \in Q$ (\cref{kcore:j-in-q}). We decide to sample $e_t = \{u, v\}$ into $X_j$
by looking at both endpoints of the edge update (\cref{core:check-endpoint}). If 
for either endpoint $w \in \{u, v\}$, vertex $w$ is not on level $F-1$ of $X_j$ (\cref{core:not-last-level}), then we 
sample the edge using probability $p_j$ (\cref{core:sample-edge-x_j}). 

Each $X_j$ is organized into levels 
where vertices are moved up levels if they have induced degree among vertices at the same or higher level 
\emph{approximately} greater than $(p_j) \cdot (1+\eta)^{j-1}$. We require the degree to be approximately 
greater rather than exactly greater to preserve privacy. %

After sampling the new edge $e_t$, 
we then perform $\textsc{PrivateCore.UpdateLevels}()$ (\cref{kcore:update-levels})
which updates the levels of each vertex. Within the procedure, for each sampled graph $X_j$ (\cref{core:update-j}) 
and for each level $\ell$ starting from the bottom most level and iterating to the top level (\cref{core:iterate-level}),
we check each vertex $v \in V$ (\cref{core:update-vertex}) to determine whether we need to move that vertex up a level.

Let $\deg^+_{X_j}(v)$ be the degree of $v$ in the induced subgraph of its neighbors at the same level or higher. 
To see whether we should move that vertex up a level, we first determine whether the current level of that 
vertex is the same level that we are iterating (\cref{core:current-level}). If it is the same level and 
we pass the SVT check that $\deg^+_{X_j}(v)$ exceeds
$p_j \cdot (1+\eta)^j$ (\cref{core:update-check}), then we move the vertex up a level by incrementing $levels[j][v]$
(\cref{core:move-up-level}). %
The particular threshold of $p_j \cdot (1+\eta)^j$ that we use will become apparent once we discuss our analysis of the approximation 
factor. Intuitively, the levels mimic the traditional peeling algorithm for the static $k$-core decomposition problem. 
The threshold $(1+\eta)^j$ is used in previous works (e.g.\ \cite{DLRSSY22}) and because we are sampling edges instead of 
using the entire graph, the threshold is multiplied by $p_j$. 

Finally, in the last part of the algorithm, we iterate through each vertex (\cref{core:vertex-iterate})
and release an estimate if the vertex $v$ is in the topmost level of a subgraph $X_{j_{now}}$ 
where $j_{now}$ is maximized. 
If this is the case, we release the new estimate $(2+\eta)\cdot (1+\eta)^{j_{now}}$ for vertex
$v$ (\cref{core:release-new-estimate}),
assuming $(1+\eta)^{j_{now}}$ exceeds some data-oblivious lower bound $L$.

\subsection{Privacy Guarantee}\label{apx:k-core-privacy}

We now prove the privacy guarantees of our algorithm. Specifically, we show that our algorithm maintains $\eps$-differential privacy
on edge-neighboring insertion-only streams. 
Note that although we are using intuition from the multidimensional sparse vector technique (MAT)
given in~\cite{DLL23}, our proof is different in that
{the subsampling in our algorithm requires us to analyze the coupled sensitivity of queries, with the coupling depending adaptively on the current output. MAT~\cite{DLL23} handles releasing values for $n$ vertices without a factor-$n$ composition loss, but was not proven to handle adaptive coupled sensitivity.
}

This is a subtle difference since the original mechanism works in the static setting.
Hence, we call our version the \emph{adaptive} multidimensional sparse vector technique
and provide the full proof below.

\begin{lemma}\label{lem:dp-k-core}
    \cref{alg:k-core-main} is $\eps$-differentially private on edge-neighboring insertion-only streams as defined in~\cref{def:neighboring-streams}.
\end{lemma}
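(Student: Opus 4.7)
Fix edge-neighboring streams $S \sim S'$ differing only at a single timestep $t^*$, with $e_{t^*} = \{u^*, v^*\}$ in $S$ and $e'_{t^*} = \bot$ in $S'$. The plan is to establish $\eps$-DP via a coupling argument combined with the multidimensional sparse vector technique framework following \cite{DLL23}, adapted to the continual release setting. I would first couple the randomness of the two executions of \Cref{alg:k-core-main}: (a) share the Bernoulli sampling coins that determine whether each $e_t$ with $t \neq t^*$ is included in each $X_j$, drawing an independent coin only for $e_{t^*}$ in the execution on $S$; and (b) share the threshold noise $\rho$ and the per-query Laplace noise for each SVT instance $\textsc{SVT}^{j, v}$ across both executions. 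Under this coupling, any $\textsc{SVT}^{j, v}$ whose inputs agree at every timestep across the two executions produces identical transcripts deterministically and therefore contributes nothing to the privacy loss.

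\paragraph{Key structural claim and composition.} Let $D$ denote the set of pairs $(j, v)$ such that the transcript of $\textsc{SVT}^{j, v}$ differs between the two coupled executions. The central claim is $|D| \leq O(|Q| F)$. The sampled subgraphs $X_j^S$ and $X_j^{S'}$ differ only in whether they contain $e_{t^*}$, and this edge is incident only to $u^*$ and $v^*$, so at time $t^*$ only $(j, u^*)$ and $(j, v^*)$ for $j \in Q$ are directly affected. I would then bound the number of cascaded differences by charging each differing SVT transcript either to the direct edge difference or to a differing level-promotion event, and exploit the fact that each vertex is monotonically promoted at most $F$ times per subgraph. Given the claim, each SVT in $D$ is $\eps_1$-DP by \Cref{thm:sparse vector technique} with $\eps_1 = \eps/(6|Q|F)$. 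Adaptive composition (\Cref{thm:composition}) across the $|D|$ differing instances yields a total privacy loss of $|D|\cdot \eps_1 \leq O(|Q| F)\cdot \eps/(6|Q| F) = O(\eps)$, with constants chosen so the bound is at most $\eps$. The releases at \Cref{core:release-new-estimate} and the thresholds $(1+\eta)^j$, $p_j$ are post-processing of the SVT transcripts and publicly fixed data-oblivious quantities.

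\paragraph{Main obstacle.} The hard part is the cascading analysis used to control $|D|$. If $u^*$ is promoted in one execution but not the other, then for every $w \in N_{X_j}(u^*)$ whose current level equals the lower of the two levels of $u^*$, the up-degree $\deg^+_{X_j}(w)$ differs by one across executions, which may in turn alter $w$'s own SVT transcript and propagate to $w$'s neighbors, potentially reaching a large connected component. The plan for handling this is a dual charging scheme: each $(j, w)\in D$ with $w\notin\{u^*, v^*\}$ must witness a differing level promotion somewhere in its history, and monotonicity of levels together with the cap of $F$ promotions per vertex per subgraph will allow this witness-graph to be bounded, so that $|D|$ does not exceed the $O(|Q|F)$ budget dictated by $\eps_1$. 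Making this charging rigorous across the adaptive interleaving of sampling decisions, level updates in \FnUpdateLevel, and SVT queries is the technical crux of the proof.
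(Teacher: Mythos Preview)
Your proposal has a genuine gap in the cascading analysis. The claim $|D|\leq O(|Q|F)$ is not established and is likely false under your coupling. A single differing promotion of $u^*$ (say from level $\ell$ to $\ell+1$ in $S$ but not $S'$) changes the up-degree of \emph{every} neighbor of $u^*$ currently at level $\ell+1$ in $X_j$; with coupled SVT noise, each of those neighbors' transcripts can now differ, and the cascade can propagate level by level through an entire connected component. Your charging scheme---charging each $(j,w)\in D$ to a differing promotion event---does not close: the number of differing promotion events is itself $|D|$, and the cap of $F$ promotions per vertex only gives $nF|Q|$, not $|Q|F$. Separately, even if $|D|$ were small, composing over $D$ is not a valid pure-DP argument because $D$ is data-dependent and coupling the noise does not yield a per-instance $e^{\eps_1}$ density-ratio bound.

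The paper's proof sidesteps the cascade entirely by \emph{not} coupling the SVT noise and instead conditioning on the transcript of SVT answers so far (equivalently, on the levels of all vertices). The key observation is that, conditioned on the levels being fixed, the query $\deg^+_{X_j}(w)$ for any $w\notin\{u^*,v^*\}$ is \emph{identical} in $G$ and $G'$: it depends only on edges incident to $w$ (same in both streams) and on the levels of $w$'s neighbors (fixed by the conditioning). Hence only the SVT instances for $u^*$ and $v^*$ have nonzero sensitivity (namely $1$, from the extra edge). The chain rule over timesteps together with the multidimensional AboveThreshold analysis of \cite{DLL23} then bounds the total privacy loss by the number of ``above'' answers for $u^*$ and $v^*$ across all $(j,\ell)$ pairs, which is at most $2|Q|F$, yielding $O(|Q|F)\cdot\eps_1\leq\eps$. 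The point you were missing is that the apparent cascade is an artifact of coupling the SVT randomness; once you condition on the level history rather than couple the noise, there is no cascade to bound.
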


\begin{proof}
    We first observe that the only  information that depends on private data is the level of each node in each $X_j$ for $j \in Q$.
The level of any vertex $v$ depends on the edges that are sampled from the stream for each graph
    as well as their $\deg^+_{X_j}(v)$ values. This proof is a variant of the multidimensional 
    AboveThreshold (MAT) technique used in~\cite{DLL23}
    but we state a version of the proof that directly proves the privacy of the algorithm in our setting; we follow the proof style 
    of~\cite{lyu2017understanding} below. Let $e_{t^*} = \{x, y\}$ be the fixed edge that differs between the two neighboring streams $G$ and $G'$.
    
    SVT is called in~\cref{core:update-check} of~\cref{alg:k-core} to determine whether a vertex moves up a level.
    In fact, vertices can change levels if and only if this SVT passes (and when the current level of the vertex is $\ell$).
    We make the observation that for each vertex $v$ which is not incident to 
    $e_{t^*}$, the distributions of the outputs of the SVT queries are the same in $G_t$ and $G'_t$ when conditioned on the levels
    of the vertices in each $X_j$.
    Let $E_1$ be the event where 
    {we fix a state of the algorithm} 
    prior to the SVT call in~\cref{core:update-check}
    at time $t$.
    {That is,} all vertices are in the levels {specified} in $E_1$
    {and} $E_1$ {also fixes} a sample of edges, $e_t$'s, which may or may not contain $e_{t^*}$.
    We now show the probability of the next $A$ failed queries 
    to~\cref{core:update-check} by vertex $w$ in graph $X_j$; a failed query is one that
    does not return ``above''. We denote a failed SVT query (returns ``below'') by $fail$ and a successful SVT query (returns ``above'') by $success$.
    We denote the output of~\cref{core:update-check} by $\mathcal{S}_{j, v}(\mathcal{X})$ where $\mathcal{X}$ is the state of our 
    subgraphs in $G$ and $\mathcal{X}'$ is the state of our subgraphs in $G'$. First, for every $w \not\in \{x, y\}$, 
    it holds that $\prob[\mathcal{S}_{j, w}(\mathcal{X}) = \{fail\}^A \mid E_1] =
    \prob[\mathcal{S}_{j, w}(\mathcal{X}') = \{fail\}^A \mid E_1]$ since $\deg_{X_j}^+(w) = \deg_{X'_j}^+(w)$ when conditioned on $E_1$.
    Then, we show that for every $w \in \{x, y\}$,
    \begin{align}
        \prob[\mathcal{S}_{j, w}(\mathcal{X}) = \{fail\}^A \mid E_1] &\leq e^{\varepsilon_1} \cdot \prob[\mathcal{S}_{j, w}(\mathcal{X}') = \{fail\}^A \mid E_1].\label{core:privacy-sampling}
    \end{align}
    We show~\cref{core:privacy-sampling} as follows where $f$ is the probability density function for picking a threshold noise and 
    $g_{\mathcal{\mathcal{X}}}$ is the probability density function for failing~\cref{kcore:update-levels} using $\mathcal{X}$.
    Let $\nu^i_{j, w}$ be the individual noises that are picked each time a query is made to~\cref{kcore:update-levels}.
    For simplicity of expression, we do not write the conditioning on $E_1$ on the RHS but all expressions below are conditioned on $E_1$.
    \begin{align}
        \prob[\mathcal{S}_{j, w}(\mathcal{X}) = \{fail\}^A \mid E_1] &= \int_{-\infty}^{\infty} f(z) g_{\mathcal{X}}(z)dz\\
        &=\int_{-\infty}^{\infty}f(z) \cdot \prod_{i \in [A]} \prob\left[\deg^+_{i, X_j}(w) + \nu^i_{j, w} < \frac{c_1\log^3(n)}{\eps} + z\right] dz\\
        &=\int_{-\infty}^{\infty}f(z) \cdot \prod_{i \in [A]} \prob\left[\nu^i_{w, j} < \frac{c_1\log^3(n)}{\eps} -\deg^+_{i, X_j}(w) + z\right] dz \\
        &\leq \int_{-\infty}^{\infty}f(z) \cdot \prod_{i \in [A]} \prob\left[\nu^i_{w, j} < \frac{c_1\log^3(n)}{\eps} -\deg^+_{i, X'_j}(w) + z + 1\right] dz \\
        &= \int_{-\infty}^{\infty}f(z) \cdot g_{\mathcal{X'}}(z + 1) dz\\
        &\leq \int_{-\infty}^{\infty}\exp(\eps_1) \cdot f(z+1) \cdot g_{\mathcal{X'}}(z + 1) dz\\
        &= \int_{-\infty}^{\infty}\exp(\eps_1) \cdot f(z') \cdot g_{\mathcal{X'}}(z') dz' \;\;\; \text{let } z' = z + 1\\
        &=e^{\varepsilon_1} \cdot \prob[\mathcal{S}_{j, w}(\mathcal{X}') = \{fail\}^A \mid E_1].\label{eq:failures}
    \end{align}

    We now prove the privacy characteristics of the cases where the queries succeed. 
    {Similar to $E_1$,
    we write $E_2$ to denote an event where we fix some sampled edges and levels of
    vertices,
    with possibly different edge samples and vertex levels.}
    As in the case above, for all 
    $w \not\in \{x, y\}$, it holds that $\prob[\mathcal{S}_{j, w}(\mathcal{X}) = success \mid E_2] =
    \prob[\mathcal{S}_{j, w}(\mathcal{X}') = success \mid E_2]$ conditioned on $E_2$. Now, we consider the case where $w \in \{x, y\}$. As before, we show that 
    \begin{align}
        \prob[\mathcal{S}_{j, w}(\mathcal{X}) = success \mid E_2] &\leq e^{\varepsilon_1} \cdot \prob[\mathcal{S}_{j, w}(\mathcal{X}') = success \mid E_2].\label{core:privacy-sampling-2}
    \end{align}
    Using $f$ as the probability density function for the noise picked for the threshold,
    $h_{\mathcal{\mathcal{X}}}$ is the probability density function for success, and conditioning on $E_2$ (for simplicity 
    of expression, we do not write the conditioning on the RHS):
    \begin{align}
        \prob[\mathcal{S}_{j, w}(\mathcal{X}) = success \mid E_2] &= \int_{-\infty}^{\infty} f(z) h_{\mathcal{X}}(z)dz\\
        &=\int_{-\infty}^{\infty}f(z) \cdot \prob\left[\deg^+_{X_j}(w) + \nu_{w, j} \geq \frac{c_1\log^3(n)}{\eps} + z\right] dz\\
        &=\int_{-\infty}^{\infty}f(z) \cdot \prob\left[\nu_{w, j} \geq \frac{c_1\log^3(n)}{\eps} -\deg^+_{X_j}(w) + z\right] dz \\
        &\leq \int_{-\infty}^{\infty}f(z) \cdot \exp(\eps_1)\cdot \prob\left[\nu_{w, j} \geq \frac{c_1\log^3(n)}{\eps} -\deg^+_{X'_j}(w) + z + 1\right] dz \\
        &= \exp(\eps_1) \cdot \int_{-\infty}^{\infty}f(z) \cdot h_{\mathcal{X'}}(z + 1) dz\\
        &\leq \exp(2\eps_1) \int_{-\infty}^{\infty} f(z+1) \cdot h_{\mathcal{X'}}(z + 1) dz\\
        &= \exp(2\eps_1) \int_{-\infty}^{\infty} f(z') \cdot h_{\mathcal{X'}}(z') dz' \;\;\; \text{let } z' = z + 1\\
        &=e^{2\varepsilon_1} \cdot \prob[\mathcal{S}_{j, w}(\mathcal{X}') = success \mid E_2].\label{eq:successes}
    \end{align}

   Using the above proofs, the threshold in~\cref{core:update-check} is exceeded at most $|Q| \cdot F$ total times and
    so~\cref{core:update-check}
    is satisfied at most $|Q| \cdot F$ times. Because there is a deterministic mapping between the SVT answers and the levels, i.e., the levels only change when the SVT accepts, it is sufficient to show that the levels can be published differentially
    privately as the rest of the computation in \cref{alg:k-core-main} can be achieved through postprocessing.  
    
    Below, our algorithm is denoted $\mathcal{M}$ and $\mathcal{M}$ outputs a set of levels, one for each vertex, subgraph, and timestamp, $\ell_{j, v}^t$ for 
    each $v \in V, t \in \{1, \dots, T\}$, and $j \in \{1, \dots, \ceil{2\log_{(1+\eta)}(n)}\}$,
    which directly determines the approximate core number for each vertex. Each $\ell_{j, v}^t$ is determine
    solely by the success and failure of the threshold queries. Let $\mathcal{X}_{t}$ be the state of our subgraphs at time $t$ (where $\mathcal{X}_{0}$
    is the initial state where all vertices are on level $0$ for each subgraph)
    and we fix the set of outputs of $\mathcal{S}_{j, w}(\mathcal{X}_t)$ by $\textbf{b}^t = \left[b_{1, 1}^t, \dots, b_{\ceil{2\log_{(1+\eta)}(n)}, n}^t\right]$.
    We use $\mathcal{S}(\mathcal{X}_t) = \textbf{b}^t$ as shorthand for $\mathcal{S}_{j, w}(\mathcal{X}_t) = b^t_{j, w}$ for all $j, w$.
    Then, since $\eps_1 = \eps/(6|Q|F)$, we can use the chain rule 
    and the above expressions to show the following:
    \begin{align}
        &\prob[\mathcal{M}(G) = (\ell_{1, v_1}^1, \dots, \ell_{j, v}^t, \dots, \ell_{\ceil{2\log_{(1+\eta)}(n)}, v_n}^T)]\\ 
        &\leq \prod_{t \in \{1, \dots, T\}}\prob\left[\bigcap_{j, w}\mathcal{S}_{j, w}(\mathcal{X}_{t}) = b_{j, w}^t
        \mid \mathcal{S}(\mathcal{X}_{t-1}) = \textbf{b}^{t-1} \cap \cdots \cap \mathcal{S}(\mathcal{X}_{0}) = \textbf{b}^0\right] \label{eq:kcore1} \\
        &\leq \prod_{t \in \{1, \dots, T\}}\prod_{j, w}\prob\left[\mathcal{S}_{j, w}(\mathcal{X}_{t}) = b_{j, w}^t
        \mid \mathcal{S}(\mathcal{X}_{t-1}) = \textbf{b}^{t-1} \cap \cdots \cap \mathcal{S}(\mathcal{X}_{0}) = \textbf{b}^0\right] \label{eq:kcore2}\\
        &\leq \exp(\eps_1)^{2|Q|F} \cdot \exp(2\eps_1)^{2|Q|F} \cdot
        \prod_{t \in \{1, \dots, T\}}\prod_{j, w} \prob\left[\mathcal{S}_{j, w}(\mathcal{X}'_{t}) = b_{j, w}^t
        \mid \mathcal{S}(\mathcal{X}_{t-1}') = \textbf{b}^{t-1} \cap \cdots \cap \mathcal{S}(\mathcal{X}_{0}') = \textbf{b}^0\right] \label{eq:kcore3}\\
        &\leq \left(\exp(\eps_1)^{|Q| \cdot F}\right)^2 \cdot \left(\exp(\eps_1)^{2|Q| \cdot F}\right)^2 \cdot \prob[\mathcal{M}(G') = (\ell_{1, v_1}^1, \dots, \ell_{j, v}^t, \dots, \ell_{\ceil{2\log_{(1+\eta)}(n)}, v_n}^T)] \label{eq:kcore4}\\
        &\leq \exp(\eps)\cdot \prob[\mathcal{M}(G') 
        = (\ell_{1, v_1}^1, \dots, \ell_{j, v}^t, \dots, \ell_{\ceil{2\log_{(1+\eta)}(n)}, v_n}^T)].\label{eq:kcore5}
    \end{align}
    \cref{eq:kcore1} follows from the chain rule and since the levels are determined by the success and failures of the SVT threshold queries.
    \cref{eq:kcore2} follows since each output of the threshold query at $t$ is independent conditioned on the states of the vertices from $t-1$. 
    \cref{eq:kcore3} follows from~\cref{eq:failures} and~\cref{eq:successes} for vertices $x$ and $y$ and the fact that 
    there are at most $|Q|\cdot F$ successes for each of $x$ and $y$ (where $e_{t^*} = \{x, y\}$)
    and hence also $|Q|\cdot F$ consecutive runs of failures.
    \cref{eq:kcore4} follows since the levels of the vertices in $G'$ are determined by the successes and failures of the queries on $G'$. 
    Finally,~\cref{eq:kcore5} follows because we set $\eps_1 = \eps/(6|Q|F)$.
    This concludes the proof of the privacy of our algorithm since only the levels are used (via post-processing) to obtain the approximations.
\end{proof}

\subsection{Approximation and Space Guarantees}\label{apx:k-core-utility}
Given our privacy guarantees, we now prove our approximation guarantees in this section. We first prove our 
concentration bound on the samples we obtain in our procedure. Specifically, we show that with high probability, 
the sampled edges give an approximately accurate estimate of $\deg^+_{X_j}(v)$. Our proof strategy is as follows. 
We use the proof of the approximation given in \cite[Theorem 4.7]{DLRSSY22}. In order to use their theorem, 
we consider a hypothetical set of level data structures where we keep \emph{all} of the edges in the graph. Note that we do not 
maintain these level data structures in our algorithm but only use them for the sake of analysis. Let this set of 
level data structures be denoted as $\mathcal{K}$. 
We place each vertex $v$ in $\mathcal{K}$ on the exact same level as $v$ in the level data structures within \Cref{alg:k-core}. Then, we show that the vertices in 
$\mathcal{K}$ satisfy modified versions of Invariant 3 (\Cref{lem:kcore-invariant-1}) and Invariant 4 (\Cref{lem:kcore-invariant-2}) in~\cite{DLRSSY22}, which directly gives our approximation factor by a modified version of \cite[Theorem 4.7]{DLRSSY22}.

In the below proofs, let $\deg_{\mathcal{K}}(\ell, j, v)$ be the induced degree of $v$ in {the $j$-th level data structure within} $\mathcal{K}$ consisting
of all neighbors of $v$ that are on level $\ell$ and higher. We prove the following lemmas for graph $G_t$ and $G'_t$
which are formed after the first $t \in [T]$ updates. 

\begin{lemma}\label{lem:kcore-invariant-1}
    If vertex $v$ is in level $\ell < F - 1$, with high probability,
    in subgraph $X_j$ after the levels are updated by~\cref{kcore:update-levels}, then,
    $\deg_{\mathcal{K}}(\ell, j, v) \leq (1+\eta)^j + O\left(\frac{\log^3 n}{{\eta^2}\eps}\right)$, with high probability.
\end{lemma}

\begin{proof}
    We prove this statement via contradiction. 
    Suppose $v$ is in level $\ell < F - 1$ within subgraph $X_j$ with high probability
    and $\deg_{\mathcal{K}}(\ell, j, v)
    > (1+\eta)^j + \frac{a_1\log^3 n}{\eps}$ with probability at least $n^{-a_2}$
    for some fixed constants $a_1, a_2 \geq 1$. We are in graph $X_j$, hence the probability we used to sample is 
    $p_j = \frac{c_1\log^3(n)}{\eps(1+\eta)^j}$. The expected number of edges we sample out of the $\deg_{\mathcal{K}}(\ell, j, v)$ edges
    using $p_j$ is at least 
    \begin{align}
    \mu \geq \frac{c_1\log^3 n}{\eps(1+\eta)^j} \cdot \left((1+\eta)^j + \frac{a_1\log^3 n}{\eps}\right) = \frac{c_1 \log^3 n}{\eps} + 
    \frac{a_1 c_1 \log^6 n}{\eps^2 (1+\eta)^j} > \frac{c_1 \log^3 n}{\eps}.
    \end{align} 
    Let $S_{\ell, j, v}$ be the set of edges we sampled.
    Using a multiplicative Chernoff bound (\Cref{thm:multiplicative-chernoff}), we have that the probability that our sample 
    has size smaller than $\frac{(1-\psi)c_1\log^3 n}{\eps}$ 
    is as follows, for $\psi \in (0, 1)$:
    \begin{align}
        \prob\left[|S_{\ell, j, v}| \leq (1-\psi) \mu \right] \leq \exp\left(-\frac{\mu \psi^2}{3}\right) \leq \exp\left(-\frac{\psi^2 c_1\log^3 n}{3\eps}\right).
    \end{align}

    The SVT introduces at most $\frac{a_3\log^3(n)}{\eps}$ additive error for some constant $a_3$, with high probability,
    and so the value we are comparing against the threshold of $p_j \cdot (1+\eta)^{j-1}$ is at least 
    $|S_{\ell, j, v}| - \frac{2a_3\log^3(n)}{\eps}$,
    with high probability. We proved above that the probability that $|S_{\ell, j, v}| > \frac{(1-\psi)c_1\log^3(n)}{\eps}$ is at least
    $1 - \exp\left(-\frac{\psi^2 c_1\log^3(n)}{3\eps}\right)$. 
    Thus, conditioned on the SVT error bound and sampling bound above,
    the SVT comparison succeeds if
    \begin{align}
        |S_{\ell, j, v}| - \frac{2a_3\log^3(n)}{\eps} 
        &\geq \frac{(1-\psi)c_1\log^3(n)}{\eps} - \frac{2a_3\log^3(n)}{\eps} \\
        &\geq p_j \cdot (1+\eta)^{j-1} \\ 
        &= \frac{c_1\log^3 n}{\eps(1+\eta)}.
    \end{align}
    Note that this inequality always holds when
    \begin{align}
        (1-\psi)c_1 - 2a_3 
        \geq \frac{c_1}{1+\eta}.
    \end{align}
    To satisfy the above expression, we require $\psi < \frac{\eta}{\eta +1}$ and $\eta > 0$, which is easily satisfied by the 
    constraints of our problem, and
    also $c_1 \geq -\frac{(\eta + 1)2a_3}{\eta(\psi-1)+\psi}$. We set $c_1$ 
    to be an appropriately large enough constant in terms of 
    $\eta, a_3, \psi$ to amplify the probability of success to satisfy with high probability. 
    
    For an appropriate setting of $c_1, \psi, a_3$, 
    we have that the SVT succeeds 
    with probability at least $1 - n^{-c}$ for any constant $c \geq 1$. Taking the union bound over
    all levels $\ell' \leq \ell$ (such that SVT outputs succeed for all such $\ell'$),
    this contradicts with the fact that 
    $v$ is at level $\ell$ with high probability. 
\end{proof}

\cref{lem:kcore-invariant-1} directly shows that Invariant 3 is satisfied in~\cite{DLRSSY22}. Now, we prove that Invariant $4$ is also
satisfied. The proof follows a similar structure to the proof of~\cref{lem:kcore-invariant-1}.

\begin{lemma}\label{lem:kcore-invariant-2}
    If vertex $v$ is in level $\ell > 0$, with high probability,
    in subgraph $X_j$ after the levels are updated by~\cref{kcore:update-levels}, then,
    $\deg_{\mathcal{K}}(\ell - 1, j, v) \geq (1+\eta)^{j-2} - O\left(\frac{\log^3 n}{{\eta^2}\eps}\right)$, with high probability.
\end{lemma}

\begin{proof}
    As in the proof of~\cref{lem:kcore-invariant-1}, we prove this lemma by contradiction. Suppose $v$ is in level $\ell > 0$,
    with high probability, in subgraph $X_j$ and $\deg_{\mathcal{K}}(\ell, j, v)
    < (1+\eta)^{j-2} - \frac{a_1\log^3 n}{\eps}$ with probability at least $n^{-a_2}$
    for some fixed constants $a_1, a_2 \geq 1$. Suppose that $\deg_{\mathcal{K}}(\ell-1, j, v) = 
    (1+\eta)^{j-1} - \frac{a_1\log^3 n}{\eps} - 1$ since this is the worst case. Here, worst 
    case means the probability that the SVT is satisfied and hence the vertex moves up a level is maximized.
    We are in graph $X_j$, hence the probability we used to sample is 
    $\frac{c_1\log^3(n)}{\eps(1+\eta)^j}$. The expected number of edges we sample out of the $\deg_{\mathcal{K}}(\ell, j, v)$ edges
    using $p_j$ is at least
    \begin{align}
    \mu \geq \frac{c_1\log^3 n}{\eps(1+\eta)^j} \cdot \left((1+\eta)^{j-2} - \frac{a_1\log^3 n}{\eps} 
    - 1\right) &= \frac{c_1 \log^3 n}{\eps(1+\eta)^2} - 
    \frac{a_1 c_1 \log^6 n}{\eps^2 (1+\eta)^j} - p_j\\
    &\geq \frac{c_1 \log^3 n}{\eps(1+\eta)^3} - 
    \frac{2a_1 c_1 \log^6 n}{\eps^2 (1+\eta)^j}.\label{eq:mu-invariant-lower-bound}
    \end{align} 
    By~\cref{kcore:initial-threshold}, we set $j$ such that $(1+\eta)^j \geq \frac{c_3\log^3(n)}{\eps}$. Hence, 
    using this setting, we can further simplify~\cref{eq:mu-invariant-lower-bound} as follows:
    \begin{align}
        \frac{c_1 \log^3 n}{\eps(1+\eta)^2} - \frac{2a_1 c_1 \log^6 n}{\eps^2 (1+\eta)^j} \geq \frac{c_1 \log^3 n}{\eps(1+\eta)^2} - \frac{2a_1 c_1 \log^3(n)}{c_3\eps} = \left(\frac{1}{(1+\eta)^2} - \frac{2a_1}{c_3}\right)\frac{c_1\log^3(n)}{\eps}.
    \end{align}

    Furthermore, we can upper bound $\mu$ by
    \begin{align}
        \mu \leq \frac{c_1\log^3 n}{\eps(1+\eta)^2}.
    \end{align}
    
    Let $S_{\ell, j, v}$ be the set of edges we sampled.
    Using a multiplicative Chernoff bound (\Cref{thm:multiplicative-chernoff}), we have that the probability that our sample 
    has size larger than $(1 + \psi)\mu$ 
    is as follows, for $\psi \in (0, 1)$:
    \begin{align}
        \prob\left[|S_{\ell, j, v}| \geq (1+\psi) \mu \right] \leq \exp\left(-\frac{\mu \psi^2}{3}\right) \leq \exp\left(-\frac{\psi^2 \left(\frac{1}{(1+\eta)} - \frac{2a_1}{c_3}\right) c_1 \log^3 n}{3\eps}\right).
    \end{align}

    The SVT introduces at most $\frac{a_3\log^3(n)}{\eps}$ additive error for some constant $a_3$, with high probability,
    and so the value we are comparing against the threshold of $p_j \cdot (1+\eta)^{j-1}$ is at most
    $|S_{\ell, j, v}| + \frac{2a_3\log^3(n)}{\eps}$,
    with high probability. We proved above the probability that $|S_{\ell, j, v}| \leq \frac{(1+\psi)c_1\log^3(n)}{\eps(1+\eta)}$ is at least
    $1 - \exp\left(-\frac{\psi^2 \left(\frac{1}{(1+\eta)} - \frac{2a_1}{c_3}\right) c_1\log^3(n)}{3\eps}\right)$. 
    Conditioning on the SVT error bound
    and the sampling bound above,
    the SVT fails the check at level $\ell-1$ if
    \begin{align}
        |S_{\ell, j, v}| + \frac{2a_3\log^3(n)}{\eps} 
        &\leq \frac{(1+\psi)c_1\log^3(n)}{\eps(1+\eta)^3} + \frac{2a_3\log^3(n)}{\eps} \\
        &< p_j \cdot (1+\eta)^{j-1} \\ 
        &= \frac{c_1\log^3 n}{\eps(1+\eta)}.
    \end{align}
    Note that the inequality is satisfied as long as we have
    \begin{align}
        \frac{(1+\psi)c_1}{(1+\eta)^2} + 2a_3 < \frac{c_1}{1+\eta}.
    \end{align}
    To satisfy the above expression, we require $\psi < \eta$ and $\eta > 0$, which is easily satisfied by the 
    constraints of our problem, and
    also $c_1 > -\frac{(\eta + 1)^2 2a_3}{\psi-\eta}$. We set $c_1$ and $c_3$
    to be appropriately large enough constants in terms of 
    $\eta, a_1, a_3, \psi$ to amplify the probability of success to satisfy with high probability. 
    
    For an appropriate setting of $c_1, c_3, \psi, a_1, a_3$, 
    we have that the SVT failed the check at level $\ell-1$
    with probability at least $1 - n^{-c}$ for any constant $c\geq 1$. %
    This contradicts the fact that 
    $v$ is at level $\ell$, with high probability. 
\end{proof}

For the formal proof of approximation guarantee,
we also require the following folklore result.
\begin{lemma}[Folklore]\label{lem:folklore}
    Suppose we perform the following \emph{peeling} procedure. Provided an input
    graph $G = (V, E)$, we iteratively remove (peel) nodes with degree $\leq d^*$ for
    some $d^* > 0$ until no nodes with degree $\leq d^*$ remain. Then, $\core(i) \leq d^*$ for each removed node $v$
    and all remaining nodes $w$ (not peeled) have core number $\core(w) > d^*$.
\end{lemma}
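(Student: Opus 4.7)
The argument rests on two classical observations about the peeling procedure, both provable in a few lines. I would handle the two parts separately.

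For the first claim, I would argue by contradiction. Suppose some peeled node $v$ satisfies $\core(v) \geq d^* + 1$. By definition of the core number, there exists an induced subgraph $H \subseteq G$ containing $v$ in which every vertex has degree at least $d^* + 1$. Now consider the execution of the peeling procedure and let $u \in V(H)$ be the \emph{first} vertex of $H$ that is removed. At the moment just before $u$ is peeled, no vertex of $H$ has been removed yet, so all of $u$'s neighbors in $H$ are still present. Hence $u$'s degree in the current remaining graph is at least its degree in $H$, which is at least $d^* + 1$. But the peeling procedure only removes a vertex when its current degree is at most $d^*$, a contradiction. Therefore $\core(v) \leq d^*$.

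For the second claim, let $U \subseteq V$ be the set of nodes that remain after the peeling procedure terminates. By the stopping condition, every vertex $w \in U$ has degree strictly greater than $d^*$ in $G[U]$. Thus $G[U]$ itself witnesses that every $w \in U$ belongs to an induced subgraph of minimum degree at least $d^* + 1$, so $\core(w) \geq d^* + 1 > d^*$.

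The main obstacle, if one can call it that, is ensuring that the "first peeled vertex in $H$" argument is stated cleanly; there is no real technical difficulty, since the peeling is monotone (vertices are only removed, never reinserted) so the relevant degree in $H$ is preserved up to the moment of peeling. Once this is done, both parts follow immediately from the definitions of core number and the peeling procedure.
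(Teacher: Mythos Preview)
The paper states this lemma as folklore and does not provide a proof, so there is nothing to compare against. Your argument is correct and is precisely the standard proof: the ``first-peeled vertex of the witness subgraph'' contradiction handles the upper bound on $\core(v)$ for peeled vertices, and the surviving set $U$ directly witnesses $\core(w) > d^*$ for the unpeeled vertices. One cosmetic remark: writing $\core(v) \geq d^* + 1$ tacitly assumes $d^*$ is an integer; the cleaner negation is $\core(v) > d^*$, which (since core numbers are integers) still yields a witness subgraph of minimum degree strictly greater than $d^*$, and the rest of your argument goes through unchanged.
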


\cref{lem:kcore-invariant-1} and~\cref{lem:kcore-invariant-2} together with the proof of Theorem 4.7 of~\cite{DLRSSY22} 
gives our final utility guarantees for \cref{thm:kcore-formal} below.
\begin{lemma}\label{lem:kcore-approx-alg}
    \cref{alg:k-core-main} returns a $\left(2 + \eta, O\left({\eta^{-2}}\eps^{-1}\log^3 n\right)\right)$-approximate $k$-core decomposition.
\end{lemma}

\begin{proof}
    Our proof follows almost verbatim from \cite[Theorem 4.7]{DLRSSY22} except we use~\cref{lem:kcore-invariant-1} and~\cref{lem:kcore-invariant-2}
    instead. Since we return the same core estimate as the equivalent vertex in $\mathcal{K}$, we only need to show the approximation returned from 
    $\mathcal{K}$ gives the correct approximation. In the below proof, a \emph{group} $g$ is defined as the $g$-th data structure in $\mathcal{K}$; hence
    the topmost level in the $g$-th group is the topmost level in the $g$-th level data structure in $\mathcal{K}$. We let $Z_{\ell}$ denote the 
    set of nodes in levels $\geq \ell$ in any group $g$.

    In this proof, when we refer to the level of a node $i$ in subgraph $j$, we mean the level of any vertex $i \in V$
    in $\mathcal{K}$. Using notation and definitions from~\cite{DLRSSY22},
    let $\kest(i)$ be the core number estimate of $i$ in $\mathcal{K}$ 
    and $\core(i)$ be the
    core number of $i$. First, we show that 
    \begin{align}
        \text{if } \kest(i) \leq (2+\eta)(1+\eta)^{g'}, \text{ then }
        \core(i) \leq (1+\eta)^{g' + 1} + \frac{d\log^3 n}{\eps}\label{eq:kcore-bound-1}
    \end{align}
    with probability at least $1 - \frac{1}{n^{c-1}}$ for any group $g'
    \geq 0$, for any constant $c \geq 2$, and for appropriately large constant $d > 0$.
    Recall $F-1$ is the topmost level of group $g'$. In order
    for $(2+\lambda)(1+\eta)^{g'}$ to be the estimate of $i$'s core number, 
    $i$ must be in the topmost level of $g'$ but below the topmost levels of all $g'' > g'$.
    Suppose we consider $i$'s level in group $g' + 1$ and let $\ell$ be $i$'s level in $g' + 1$.
    By~\cref{lem:kcore-invariant-1}, if $\ell < F-1$,
    then $\deg_{\mathcal{K}}(\ell, g', i) \leq (1+\eta)^{g'} + O\left(\frac{\log^3 n}{\eps}\right)$ 
    with probability at least $1- \frac{1}{n^c}$ for any constant
    $c \geq 1$. Furthermore, each node $w$ at the same or lower level $\ell' \leq \ell$ 
    has $\deg_{\mathcal{K}}(\ell', g', w) \leq (1+\eta)^{g'} + O\left(\frac{\log^3 n}{\eps}\right)$, also by~\cref{lem:kcore-invariant-1}. 
    
    Suppose we perform the following iterative procedure:
    starting from level $level = 0$ in $g'$,
    remove all nodes in level $level$ during this turn and set $level
    \leftarrow level + 1$ for the next turn. Using this procedure, the nodes in
    level $0$ are removed in the first turn, the nodes in level $1$ are
    removed in the second turn, and so on until the graph is empty. 
    Let $d_{level}(i)$ be the induced degree of
    any node $i$ after the removal in the $(level-1)$-st turn and 
    prior to the removal in the $level$-th turn. Since we showed
    that $\deg_{\mathcal{K}}(level, g', i) \leq (1+\eta)^{g'} + O\left(\frac{\log^3 n}{\eps}\right)$ for any node
    $i$ at level $level < F-1$, node $i$ on level $level < F-1$
    during the $level$-th turn has $d_{level}(v) \leq (1+\eta)^{g'} + O\left(\frac{\log^3 n}{\eps}\right)$.
    Thus, when $i$ is removed in the $level$-th turn, it has degree $\leq
    (1+\eta)^{g'} + O\left(\frac{\log^3 n}{\eps}\right)$. Since all nodes removed before $i$
    also had degree $\leq (1+\eta)^{g'} + O\left(\frac{\log^3 n}{\eps}\right)$ when they were removed,
    by~\cref{lem:folklore}, node $i$
    has core number $\core(i) \leq (1+\eta)^{g'} + O\left(\frac{\log^3 n}{\eps}\right)$ with probability $\geq 1
    - \frac{1}{n^c}$ for any $c\geq 1$. By the union bound over all nodes,
    this proves that $\core(i) \leq (1+\eta)^{g'} + O\left(\frac{\log^3 n}{\eps}\right)$ for all $i \in [n]$ 
    with $\kest(i) \leq (2+\eta)(1+\eta)^{g'}$
    for all constants $c \geq 2$ with probability at least $1 - \frac{1}{n^{c-1}}$.
    
    Now we prove our lower bound on $\kest(i)$.
    We prove that for any $g' \geq 0$, 
    \begin{align}
    \text{if } \kest(i) \geq (1+\eta)^{g'}, \text{ then }
    \core(i) \geq \frac{(1+\lf)^{g'-2} - O\left(\frac{\log^3 n}{\eps}\right)}{\upexpold}\label{eq:kcore-bound-2}
    \end{align}
    for all nodes $i$
    in the graph with probability at least $1 -
    \probfactorminusone$ for any constant $c \geq 2$.
    We assume for contradiction 
    that with probability $> \frac{1}{n^{c-1}}$
    there exists a node $i$ where $\kest(i) \geq (1+\lf)^{g'}$ and $\core(i) <
    \frac{(1+\lf)^{g'-2} - O\left(\frac{\log^3 n}{\eps}\right)}{\upexpold}$.
    This, in turn, implies that $\core(i) \geq \frac{\downexp^{g'-2} - O\left(\frac{\log^3 n}{\eps}\right)}{\upexpold}$ 
    for all $i \in [n]$ where $\kest(i) \geq (1+\lf)^{g'}$ holds
    with probability $< 1 - \probfactorminusone$.
    To consider this case, we use the \emph{pruning} process defined
    in~\cref{lem:folklore}. In the below proof, let $d_S(i)$ denote the 
    induced degree of node $i$ in the subgraph induced by nodes in $S$.
    For a given subgraph $S$, we \emph{prune} $S$ 
    by repeatedly removing all nodes $i \in S$ whose $d_S(i)
    < \frac{(1+\lf)^{g'-2} - \frac{a\log^3 n}{\eps}}{\upexpold}$ for appropriately selected constant $a > 0$. 
    We consider levels from the same group $g'$ since levels 
    in groups lower than $g'$ will also have smaller upper bound cutoffs, leading to an easier proof. 
    Let $j$ be the number of levels below level $F-1$. We prove via induction that the
    number of nodes pruned from the subgraph induced by $Z_{F-1 - j}$ must
    be at least
    \begin{align}
        \left(\frac{\upexpold}{2}\right)^{j-1}\left(\lbexp\right)\label{eq:pruned}.
    \end{align}

    We first prove the base case when $j = 1$. In this case, we know that for any node $i$ in level $F-1$ in group $g'$,
    it holds that $\deg_{\mathcal{K}}(F - 1, g', i) \geq (1+\eta)^{g'-2} - O\left(\frac{\log^3 n}{\eps}\right)$ with probability
    $\geq 1 - \frac{1}{n^c}$ for any $n\geq 1$ by~\cref{lem:kcore-invariant-2}. Taking the union bound over all nodes in level $F-1$
    shows that this bound holds for all nodes $i \in [n]$ with probability at least $1 - \frac{1}{n^{c-1}}$.
    All below expressions hold with probability at least $1 - \frac{1}{n^{c-1}}$; for simplicity, we omit
    this phrase when giving these expressions.
    In order to prune $i$ from the graph, we must prune at least
    \begin{align*}
        &\left((1+\lf)^{g'-2} - \frac{a\log^3 n}{\eps}\right) -
        \frac{(1+\lf)^{g'-2} - \frac{a\log^3 n}{\eps}}{\upexpold}\\
        &= \left(\downexp^{g'-2} - \frac{a\log^3 n}{\eps}\right) \cdot \left(1 -
            \frac{1}{\upexpold}\right).
    \end{align*}
    neighbors of $i$ from $Z_{F - 2}$. We must prune at least this many neighbors
    in order to reduce the degree of $i$ to below the cutoff for pruning a vertex (as we show more formally below).
    In the case when $\downexp^{g'} \leq 4(1+\eta)^2(\frac{a\log^3 n}{\eps})$,\footnote{$4$ is an arbitrarily chosen large enough constant.}
    then our original approximation statement in the lemma
    is trivially satisfied because the core number is always non-negative and so even if $\core(i) = 0$, 
    this is still within our additive approximation bounds.
    Hence, we only need to prove the case when $\downexp^{g'} > 4(1+\eta)^2\left(\frac{a\log^3 n}{\eps}\right)$.

    Then, if fewer than $\lbexp$
    neighbors of $i$ are pruned from the graph, then $i$ is not pruned from the
    graph. If $i$ is not pruned from the graph, then $i$ is part of a $\left(
    \frac{(1+\lf)^{g'-2} - \frac{a\log^3 n}{\eps}}{\upexpold}\right)$-core (by~\cref{lem:folklore})
    and $\core(i) \geq
    \frac{(1+\lf)^{g'-2} - \frac{a\log^3 n}{\eps}}{\upexpold}$ for all $i \in [n]$ where $\kest(i) \geq (1+\lf)^{g'}$
    with probability $\geq 1 - \frac{1}{n^{c-1}}$,
    a contradiction with our assumption. Thus, it must be
    the case that there exists at least one $i$ where at least $\lbexp$
    neighbors of $i$ are pruned in $Z_{T(g') - 1}$.
    For our induction hypothesis, we
    assume that at least the number of nodes as indicated in~\cref{eq:pruned}
    is pruned for $j$ and prove this for $j + 1$.

    Each node $w$ in levels $F - 1 - j$ and above has
    $d_{Z_{F-2- j}}(w) \geq
    (1+\lf)^{g'-2} - \frac{a\log^3 n}{\eps}$ by~\cref{lem:kcore-invariant-2} (recall that all $j$ levels
    below $F-1$ are in group $g'$). For simplicity of expression,
    we denote $\lbabrv \triangleq \lbexp$.
    Then, in order to prune the
    $\left(\frac{\upexpold}{2}\right)^{j-1}\lbabrv$
    nodes by our induction hypothesis, we must prune at least
    \begin{align}
        &\left(\frac{\upexpold}{2}\right)^{j-1}\lbabrv \cdot
        \left(\frac{(1+\lf)^{g'-2} - \frac{a\log^3 n}{\eps}}{2}\right)
        \label{eq:pruned-edges}
    \end{align}
    edges where we ``charge'' the edge to the endpoint that is pruned last. 
    (Note that we actually need to prune at least 
    $\left(\downexp^{g'-2} - \frac{a\log^3 n}{\eps}\right) \cdot \left(1 -
    \frac{1}{\upexpold}\right)$ edges per pruned node
    as in the base case but 
    $\frac{\left(\downexp^{g'-2} - \frac{a\log^3 n}{\eps}\right)}{2}$ lower bounds this amount.)
    Each pruned node prunes less than
    $\frac{(1+\lf)^{g'-2} - \frac{a\log^3 n}{\eps}}{\upexpold}$
    edges. Thus, using~\cref{eq:pruned-edges}, the number of nodes
    that must be pruned from $Z_{F -2 - j}$ is
    \begin{align}
        \left(\frac{\upexpold}{2}\right)^{j-1} \lbabrv \cdot
        \frac{(1+\lf)^{g'-2} - \frac{a\log^3 n}{\eps}}{2\left(\frac{(1+\lf)^{g'-2}-\frac{a\log^3 n}{\eps}}
        {\upexpold}\right)} = \left(\frac{\upexpold}{2}\right)^j \lbabrv.
        \label{eq:induction}
    \end{align}
    \cref{eq:induction} proves our induction step.
    Using~\cref{eq:pruned}, the number of nodes that must be pruned from
    $Z_{F - 1 - 2\log_{\upexpold/2}\left(n\right)}$ is
    greater than $n$ since $J \geq 1$ by our assumption that $\downexp^{g'-2} > 4\left(\frac{a\log^3 n}{\eps}\right)$:
    \begin{align}
        \left(\frac{\upexpold}{2}\right)^{2\log_{\upexpold/2}\left(n\right)} \cdot \lbabrv \geq n^2.
        \label{eq:final-eq}
    \end{align}
    Thus, at $j =
    2\log_{\upexpold/2}\left(n\right)$, we run out of
    nodes to prune. We have reached a contradiction as we require pruning greater than 
    $n$ nodes with probability at least $\frac{1}{n^{c-1}} \cdot \left(1- \probfactorminusone\right) > 0$ 
    via the union bound over all nodes where $\kest(i) \geq (1+\lf)^{g'}$ and using our assumption that
    with probability $> \frac{1}{n^{c-1}}$ there exists an $i \in [n]$ 
    where $\core(i) < \frac{(1+\lf)^{g'-2} - \frac{a\log^3 n}{\eps}}{\upexpold}$. 
    This contradicts with the fact that
    more than $n$ nodes can be pruned with $0$ probability.
    
    From~\cref{eq:kcore-bound-1}, we can first obtain the inequality $\core(i) \leq \kest(i) + \frac{d\log^3 n}{\eps}$ since
    this bounds the case when $\kest(i) = (2+\eta)(1+\eta)^{g'}$; if $\kest(i) < (2+\lambda)(1+\eta)^{g'}$
    then the largest possible value for $\kest(i)$ is $(2+\eta)(1+\eta)^{g'-1}$ by our algorithm and
    we can obtain the tighter bound of $\core(i) \leq (1+\eta)^{g'} +  \frac{d\log^3 n}{\eps}$. 
    We can substitute $\kest(i) = 
    (2+\eta)(1+\eta)^{g'}$ since $(1+\eta)^{g' + 1} < (2+\eta)(1+\eta)^{g'}$ for all $\eta \in (0, 1)$ and 
    $\eta > 0$. Second, from~\cref{eq:kcore-bound-2},
    for any estimate $(2+\eta)(1+\eta)^{g}$, the largest $g'$ for which this estimate
    has $(1+\eta)^{g'}$ as a lower bound is $g' = g + \floor{\log_{(1+\eta)}(2+\eta)} \geq g 
    + \log_{(1+\eta)}(2+\eta) -1$. Substituting this $g'$ into $\frac{(1+\lf)^{g'-2} - \frac{a\log^3 n}{\eps}}{\upexpold}$ 
    results in 
    \begin{align*}
        \frac{(1+\lf)^{g + \log_{(1+\eta)}(2+\lambda) -3} - \frac{a\log^3 n}{\eps}}{\upexpold} = \frac{\frac{(2+\lambda)(1+\lf)^{g}}{(1+\lf)^3} - \frac{a\log^3 n}{\eps}}{\upexpold} = \frac{\frac{\kest(i)}{(1+\lf)^3} - \frac{a\log^3 n}{\eps}}{\upexpold}.
    \end{align*}
    Thus,
    we can solve $\core(i) \geq \frac{\frac{\kest(i)}{(1+\lf)^3}-\frac{a\log^3 n}{\eps}}{(2+\lambda)(1+\lf)^4}$
    and $\core(i) \leq \kest(i) + \frac{d\log^3 n}{\eps}$ to obtain
    $\core(i) - \frac{d\log^3 n}{\eps} \leq \kest(i) \leq
    ((2+\eta)(1+\eta)^5)\core(i) + \frac{a(2+\eta)(1+\eta)^5\log^3 n}{\eps}$. Simplifying, we obtain
    \begin{align*}
        \core(i) - &O(\eps^{-1}\log^3 n) \leq \kest(i) \leq 
        (2+\eta)(1+\eta)^5 \core(i) + O({\eps^{-1} \log^3 n})
    \end{align*}
    which is consistent with the definition of
    a $(2+\eta', O(\eps^{-1}\log n))$-factor approximation algorithm
    for core number for any constant $\eta' > 0$ and appropriately
    chosen constant $\eta' > \eta \in (0, 1)$ that depend on $\eta'$.
\end{proof}

It remains only to analyze the space usage of \Cref{alg:k-core-main}.
\begin{lemma}\label{lem:kcore-space}
    \cref{alg:k-core-main} uses $O\left(\frac{n\log^5 n}{{\eta^4}\eps}\right)$ space, with high probability.
\end{lemma}

\begin{proof}
    
    The proof follows from a Chernoff bound.
    We set the probability of sampling to $\frac{c_1\log^3 n}{\eps(1+\eta)^j}$,
    we sample $O\left(\frac{\log^3 n}{\eps}\right)$ edges per level, per node, for each of $F$ levels in each of $|Q|$ graphs. Hence, 
    in total over $F \cdot |Q|$ levels, we use $O\left(\frac{\log^5 n}{\eps}\right)$ space per node, with high probability.
\end{proof}

Finally,
combining \Cref{lem:dp-k-core,lem:kcore-approx-alg,lem:kcore-space} yields the proof of \Cref{thm:kcore-formal}.

\section{Proofs of \texorpdfstring{\Cref{thm:densest-subgraph,thm:densest-subgraph better approx more space}}{Densest Subgraph Theorems}}\label{apx:densest-subgraph}
\SetKwProg{Fn}{Function}{}{end}\SetKwFunction{FRecurs}{FnRecursive}%
\SetKwFunction{FnInsert}{PrivateDSG}
\SetKwFunction{FnUpdate}{SampleEdge}
\SetKwFunction{FnGetDensity}{GetNonPrivateDensity}
\SetKwFunction{FnSubgraph}{GetPrivateApproxDensestSubgraph}
\SetKwProg{myclass}{Class}{}{}

We now restate and prove \Cref{thm:densest-subgraph,thm:densest-subgraph better approx more space}.
\densestSubgraph*
\densestSubgraphBetterApproxMoreSpace*

We begin with an $\eps$-edge differentially private algorithm in the continual 
release model for insertion-only streams that releases, in addition to the approximate density of the densest subgraph,
a differentially private set of vertices whose induced subgraph is an approximation of the 
maximum densest subgraph. 
We now introduce a sampling method that uses $O\left(\eps^{-1} \eta^{-2} n\log n \right)$ %
total space\footnote{Here, $\eta$ is the 
factor used in the multiplicative approximation.} 
over the course of the stream allowing us to match the space bounds (up to factors of $\poly(\log n)$ and $1/\eps$)
of the best non-private algorithms for approximate densest subgraphs while also matching the multiplicative 
approximation factor.

\begin{algorithm}[htp]
    \caption{Data Structure for Densest Subgraph in Adaptive Insertion-Only Streams\label{alg:insertion-only dsg} }
    \myclass{\FnInsert{$\varepsilon$, $\eta$, $n$, $T$}}{
        Maintain sampled edges $X \leftarrow \varnothing$.\label{one-shot:sampled}\\
        Set $m' \leftarrow \frac{c_3 n \log^2(n)}{\varepsilon \eta^2}$ using a large enough constant $c_3 > 0$.\label{one-shot:max-edges}\\ %
        Initialize empty sampling hashmap $H$\label{one-shot:sample-hashmap} to store edges and their associated random weight.\\
        Initialize class \textsc{SVT}($\eps, 1, c_5\log_{1+\eta}(n)$). \qquad(\Cref{alg:sparse vector technique})\label{one-shot:svt-func} \\
        \Fn{\FnUpdate{$e_t$, $m$, $\eps$}}{
            \If{\textsc{SVT.ProcessQuery}$(m, m')$ is ``above''}{\label{one-shot:svt-m-query} %
                $m' \leftarrow (1+\eta) \cdot m'$.\label{one-shot:increase-m}\\

                $p \leftarrow \min\left(1, \frac{c_4 n\log^2(n)}{\eps \cdot \eta^2 m'}\right)$.\label{one-shot:set-prob}\\
                \For{each edge $e \in H$}{\label{one-shot:iterate-stored-edges}
                    \If{$H[e] \leq p$}{\label{one-shot:check-ht}
                        Keep $e$ in $X$.
                        \label{one-shot:keep-edge}
                    } 
                    \Else{
                        Remove $e$ from $X$ and $H$.\label{one-shot:remove-edge} %
                    }
                }
            }
            \If{$e_t\neq \perp$} {
                Sample $h_{e_t} \sim U[0, 1]$ uniformly at random in $[0, 1]$.\label{one-shot:sample-h-t-value}\\
                \If{$h_{e_t} \leq p$}{\label{one-shot:less-than-prob}
                    Store $X \gets X \cup \{e_t\}$.\label{one-shot:sample-edge}\\
                    Add $H[e_t] = h_{e_t}$.\label{one-shot:store-sample}
                }
            }
        }
        \Fn{\FnSubgraph{$\eps$}}{
            Return $\textsc{PrivateDensestSubgraph}(\eps, X)$. \qquad (\Cref{thm:improved private static densest subgraph}) \label{one-shot:return-private-ds}      
        }
        \Fn{\FnGetDensity{}}{
            Return $(\textsc{ExactDensity}(X), p)$. \qquad (\Cref{thm:densest value}) \label{one-shot:return-ds-value}
        }
	}

    \storelines
\end{algorithm}

\subsection{Detailed Algorithm Description}\label{apx:dsg-description}
We first define a data structure $\FnInsert$ (see \Cref{alg:insertion-only dsg}) for maintaining an approximate densest subgraph in insertion-only streams. 
This data structure takes as input an accuracy parameter $\eta$, the number of nodes $n$, and a bound on the stream size $T$ and uses as a black-box $\textsc{PrivateDensestSubgraph}$ (an $\eps$-DP static densest subgraph algorithm) and
$\textsc{Densest-Subgraph}$ (a non-private static densest subgraph algorithm).
This data structure has three procedures:
an update procedure ($\FnUpdate$), a procedure for getting a private densest subgraph ($\FnSubgraph$), and a procedure 
for returning the non-private exact density of the sampled subgraph ($\FnGetDensity$).

Our main algorithm (see~\cref{alg:insertion-main}) uses the data structure above and performs its initialization based on the input accuracy parameter $\eta$, the number of nodes $n$, the privacy parameter $\epsilon$, and
an upper bound $T$ on the total number of updates in the stream.

{
\begin{algorithm}[htp]
    \caption{Algorithm for Densest Subgraph in Adaptive Insertion-Only Streams \label{alg:insertion-main}}
    \restorelines
    Initialize counter of number of edges $m \leftarrow 0$. \label{one-shot:initialize-edge-counter}\\
    Initialize privacy parameters $\eps_1 \leftarrow \frac{\eps}{3c_2\log_{1 + \eta}(n)}, \eps_2 \leftarrow \eps/3$. \label{one-shot:initialize-eps}\\
    Initialize \textsc{PrivateDSG}($\eps_1, \eta$, $n$, $T$).\label{one-shot:initialize-dsg} \\
    Initialize density threshold
    $L \leftarrow \frac{(1+\eta) c_1 \cdot \log^2(n)}{\eps {\eta^2}}$. \label{one-shot:initialize-density-threshold}\\
    Initialize class \textsc{SVT}$(\eps_2, 1, c_2\log_{1+\eta}(n))$. \qquad (\Cref{alg:sparse vector technique}) \label{one-shot:initialize svt} \\
    \For{each new update $e_t$}{\label{one-shot:for}
        \If{$e_t \neq \bot$}{\label{one-shot:check-if-bot}
            $m \leftarrow m + 1$.\label{one-shot:increment-m}
        }
       \textsc{PrivateDSG.SampleEdge($e_t, m, \eps_2$)}.\label{one-shot:sample}\\ 
        $D, p \leftarrow \textsc{PrivateDSG.GetNonPrivateDensity}()$.\label{one-shot:non-private-density} \\
        $S \leftarrow V$.\\
        \If{\textsc{SVT.ProcessQuery}$(D, p\cdot L)$ is ``above''}{
        \label{one-shot:exceed-threshold-svt}       
            $L \leftarrow (1+\eta) \cdot L$.\label{one-shot:increase-L}\\
            $S \leftarrow \textsc{PrivateDSG.GetPrivateApproxDensestSubgraph}(\eps_1)$.\label{one-shot:get-new-private-graph}
        }
        Release $S$.\label{one-shot:release-stored} %
    }
\end{algorithm}}

We perform the following initializations:
\begin{enumerate}[(i)]
    \item an instance of \textsc{PrivateDSG}($\eps$, $\eta$, $n$, $T$)
     (\cref{one-shot:initialize-dsg}),
    \item a counter for the number of edge insertions we have seen so far in our stream (\cref{one-shot:initialize-edge-counter}),
    \item the privacy parameters for the different parts of our algorithm (\cref{one-shot:initialize-eps}), 
    \item and the initial density threshold for the density for returning a new private densest subgraph solution (\cref{one-shot:initialize-density-threshold}) where $c_1 > 0$ is a constant.
\end{enumerate}
The initial cutoff for the density is 
equal to our additive error since we can just return the entire set of vertices
as long as the density of the densest subgraph is (approximately)
less than our additive error.

We now receive an online stream of updates (which can be empty $\bot$) one by one. The $t$-th update is denoted $e_t$ (\cref{one-shot:for}).
For each update, we first check whether the update is an edge insertion or $\bot$ (\cref{one-shot:check-if-bot}); if it is not $\bot$, 
then we increment $m$ (\cref{one-shot:increment-m}). We then
call \textsc{PrivateDSG.SampleEdge($e_t, m, \eps_2$)} which decides how to sample the edge (\cref{one-shot:sample}).

The $\textsc{SampleEdge}$ procedure is within the \textsc{PrivateDSG} class which maintains the following:
\begin{enumerate}[(i)]
    \item a set $X$ (\cref{one-shot:sampled}) of sampled edges in the stream,
    \item an estimate, $m'$, of the number of edges seen so far in the stream (\cref{one-shot:max-edges}),
    \item and the probability $p$ by which the current edge in the stream is sampled (\cref{one-shot:set-prob}). 
\end{enumerate}
The initial probability that we sample an edge is $1$ since we have not seen many edges and can keep all of them within our space bounds. %

In order to sample according to the desired (private) probability,
we first privately check whether our current number of edges we have seen exceeds the threshold $m'$ using the \emph{sparse vector technique} (\cref{alg:sparse vector technique}). 
Our sparse vector technique function (\cref{one-shot:svt-func}) is initialized with 
\begin{enumerate}[(i)]
    \item the privacy parameter $\eps_2$, 
    \item the sensitivity of the query (which is $1$ in our setting), 
    \item and the maximum number of successful queries (we perform queries until we've reached the end of the stream or we exceed the successful queries threshold). 
\end{enumerate}
Then, the SVT query is run with the query, $m$, (which is the current number of edges we've seen so far) and the 
threshold, $m'$ (\cref{one-shot:svt-m-query}). 
If the SVT query passes, then we update $m'$ to a larger threshold (\cref{one-shot:increase-m}) and
also update the probability of sampling in terms of the new $m'$ (\cref{one-shot:set-prob}) (using constant $c_4$). 
We then sample the 
edges by sampling a value $h_{e_t}$ uniformly from $[0, 1]$ (\cref{one-shot:sample-h-t-value}).
If $h_{e_t} \leq p$ (\cref{one-shot:less-than-prob}), then we store $e_t$ in $X$ (\cref{one-shot:sample-edge}) and the value $h_{e_t}$ 
in our hashmap $H$ (\cref{one-shot:store-sample}). Whenever $p$ changes, we also have to resample the edges in $X$; to do this, 
we keep the edge in $X$ if $H[e_t] \leq p$ and remove $e_t$ otherwise (\cref{one-shot:check-ht,one-shot:keep-edge,one-shot:remove-edge}).
This ensures that the marginal probability of sampling an edge at any timestamp $t$ is $p$.

After we have sampled our edges, we now need to obtain the \emph{non-private} density of our current subgraph (\cref{one-shot:non-private-density}).
If the non-private density exceeds our density threshold $L$ via private SVT (\cref{one-shot:exceed-threshold-svt}), then
we increase our threshold (\cref{one-shot:increase-L}) and use our private densest subgraph algorithm to return 
a private approximate densest subgraph (\cref{one-shot:get-new-private-graph}). We release our 
stored private graph (\cref{one-shot:release-stored}) for all 
new updates until we need to compute a new private densest subgraph.

The densest subgraph algorithm that we use to obtain a differentially
private densest subgraph can be 
any existing static edge-DP algorithm for densest subgraph like \Cref{thm:improved private static densest subgraph} which gives a $(2, O(\eps_1^{-1} \log(n)))$-approximation.

\subsection{Privacy Guarantee}\label{apx:dsg-privacy}

In this section, we prove the privacy guarantees of our algorithm.

\begin{lemma}\label{lem:one-shot dsg privacy}
     \algoref{insertion-main} is $\eps$-DP in the continual release model on edge-neighboring insertion-only streams. 
 \end{lemma}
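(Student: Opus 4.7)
The plan is to decompose the total privacy budget $\eps$ into three $\eps/3$ components and then apply composition (\Cref{thm:composition}): (a) the sampling-probability SVT inside \textsc{SampleEdge} which checks $m\geq m'$ (\cref{one-shot:svt-m-query}); (b) the density SVT in the main loop which checks $D\geq p\cdot L$ (\cref{one-shot:exceed-threshold-svt}); and (c) the collection of calls to the static $\eps_1$-edge DP subroutine \textsc{PrivateDensestSubgraph} (\cref{one-shot:get-new-private-graph}). For (a), the query $m$ is a running count of non-$\perp$ edge updates and thus has sensitivity $1$ on edge-neighboring streams, so by \Cref{thm:sparse vector technique} this SVT instance with budget $\eps_2=\eps/3$ and cap $c_5\log_{1+\eta}(n)$ is $\eps/3$-DP.

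The heart of the argument is (b) and (c), where the queries depend on the \emph{sampled} graph $X$ whose distribution depends on the data-dependent probability sequence $p_1,\ldots,p_T$. Following the conditioning approach used in the matching proof culminating in \cref{eq:match-comb}, I would first condition on the full transcript of the sampling SVT in (a), which deterministically fixes $p_1,\ldots,p_T$. Under this conditioning, and using the natural coupling of the i.i.d.\ uniform variables $h_e$ stored in $H$, I would show that on any edge-neighboring streams $S\sim S'$ differing only in the update at time $t^*$, the samples $X_t$ and $X'_t$ differ by at most one edge (namely $e_{t^*}$, if $h_{e_{t^*}}$ places it into the sample at a time when the current probability has not fallen below $h_{e_{t^*}}$). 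Since resampling decisions and edge-seen counts are identical under the coupling for all $t\neq t^*$, the coupled sensitivity of $\textsc{ExactDensity}(X)$ is $1$, as is the coupled edge-sensitivity of the input to \textsc{PrivateDensestSubgraph}.

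With coupled sensitivity $1$ established, (b) follows directly from \Cref{thm:sparse vector technique} applied to the density SVT with budget $\eps_2=\eps/3$ and cap $c_2\log_{1+\eta}(n)$. For (c), each invocation of \textsc{PrivateDensestSubgraph} is triggered by an ``above'' answer from the density SVT, so there are at most $c_2\log_{1+\eta}(n)$ such calls; each is $\eps_1$-edge DP on $X$, and coupled sensitivity $1$ transfers this to $\eps_1$-edge DP with respect to the original stream. Basic composition bounds their joint cost by $c_2\log_{1+\eta}(n)\cdot \eps_1=\eps/3$. Combining (a), (b), (c) via a chain-rule calculation analogous to \cref{eq:match-comb} yields $\eps$-DP overall, since the vertex-set releases in the main loop (\cref{one-shot:release-stored}) are post-processing of the DSG outputs carried between updates.

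The main obstacle will be carefully writing down the conditional distributions and verifying that the coupling genuinely reduces the sampled-graph distance to $1$ across all times $t$ simultaneously---the subtlety being that resampling events triggered after timestep $t^*$ could in principle decouple the samples, but because the $h_e$ values are shared across $S$ and $S'$ and the probability schedule is frozen by the conditioning, the kept set at every future time differs only in whether $e_{t^*}$ is present. This step mirrors the more technical \cref{clm:cgs-st} from the matching analysis, which I expect to adapt essentially verbatim.
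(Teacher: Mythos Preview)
Your proposal is correct and follows essentially the same approach as the paper's proof: condition on the transcript of the sampling-probability SVT to freeze the $p_t$'s (giving an $\eps_2$ cost), use the shared $h_e$ coupling to argue the sampled graphs differ in at most the single edge $e_{t^*}$, apply the SVT theorem to the density check for another $\eps_2$, and then compose the at most $c_2\log_{1+\eta}(n)$ static DSG calls for the final $\eps/3$. The paper's write-up structures this via the same chain-rule calculation you anticipate, mirroring \cref{eq:match-comb} almost verbatim.
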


\begin{pf}[\Cref{lem:one-shot dsg privacy}]\sloppy
 By the definition of edge-neighboring streams (\cref{def:neighboring-streams}), 
 let $\cS =(e_1,\ldots, e_{{T}}), \cS'=(e'_1, \ldots, e'_{{T}})$ be neighboring streams of edges, i.e., 
 there exists $t^* \in [{T}]$ such that $e_{t^*} \neq  e'_{t^*}$ and $e'_{t^*} = \bot$.
 Note that $e_t$ for any $t \in [T]$ can either be an edge $\{u, v\}$ or $\bot$.
 We define the event $\cP$ where the probability for sampling $e_t$ 
 from the stream $\cS$ is set to $p(e_t)=p_t$ for every $t \in [{T}]$. 
 Similarly define $\cP'$ where the probability for sampling $e'_t$ from the stream $\cS'$ is set to $p(e'_t)=p_t$ for 
 every $t \in [{T}]$. Although we only use the probability to sample an update $e_t$ if $e_t \neq \bot$,
 our sampling of the update depends on our setting of $p_t$ which we show is the same across the 
 two streams with roughly equal probability. 

\begin{adjustwidth}{1em}{1em}
\begin{claim}\label{clm:prob-dp}
    \[\frac{\prob[\cP]}{\prob[\cP']} \leq e^{\eps_2}\]
\end{claim}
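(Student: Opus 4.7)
The plan is to observe that the sampling probability $p_t$ is a deterministic function of the outputs of the SVT instance initialized on \cref{one-shot:svt-func}. Specifically, $p$ changes only at line \ref{one-shot:increase-m}, which is executed exactly when the SVT in \cref{one-shot:svt-m-query} reports ``above''; after such a report, both $m'$ and $p$ are set to deterministic functions of the previous $m'$, the public constants $c_4, n, \eps_2, \eta$, and the (public) number of prior ``above'' answers. Hence the random vector $(p_1, \ldots, p_T)$ is entirely a post-processing of the SVT output transcript $\mathbf{a} \in \set{\text{``above''}, \text{``below''}, \text{``abort''}}^T$ produced inside \textsc{PrivateDSG.SampleEdge}.

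Next, I would verify that the SVT mechanism itself is $\eps_2$-DP on $\cS$ versus $\cS'$. The queries fed to this SVT are $f_t(\cS) = m_t$, the number of non-$\bot$ updates observed up through time $t$. Since $\cS$ and $\cS'$ differ in exactly one update at time $t^*$ (with $e'_{t^*} = \bot$), we have $|f_t(\cS) - f_t(\cS')| \le 1$ for every $t$, so the global sensitivity of the query stream is at most $\Delta = 1$. The thresholds $m'$ evolve adaptively but depend only on the previous transcript, which is precisely the adaptive setting handled by \cref{alg:sparse vector technique}. Applying \cref{thm:sparse vector technique} to the SVT instance instantiated with parameters $(\eps_2, 1, c_5 \log_{1+\eta}(n))$ gives, for every fixed transcript $\mathbf{a}$,
\[
  \frac{\prob[\text{SVT on } \cS \text{ outputs } \mathbf{a}]}{\prob[\text{SVT on } \cS' \text{ outputs } \mathbf{a}]} \le e^{\eps_2}.
\]

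Finally, since the event $\cP = \set{(p_1, \ldots, p_T) = (p_1, \ldots, p_T)}$ is the preimage of a fixed tuple under the deterministic post-processing described in the first paragraph, we can write it as a disjoint union of events of the form ``SVT transcript equals $\mathbf{a}$'' over some set $\mathcal{T}$ of transcripts consistent with the prescribed probability sequence, and the same set $\mathcal{T}$ characterizes $\cP'$. Summing the pointwise bound over $\mathbf{a} \in \mathcal{T}$ yields $\prob[\cP] \le e^{\eps_2} \prob[\cP']$, as desired.

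The only mildly subtle point, and the one I would be most careful about, is the implicit adaptivity: the thresholds $m'$ used by the SVT change over time based on prior ``above'' answers, and the SVT mechanism must be invoked in a way that permits this. This is resolved by appealing to \cref{alg:sparse vector technique} as stated, which allows the query/threshold pairs $(f_t, \tau_t)$ to be chosen adaptively from the transcript so far; nothing in the argument depends on the queries or thresholds being data-independent beyond the sensitivity bound, which we have already established.
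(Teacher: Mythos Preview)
Your proposal is correct and follows essentially the same approach as the paper: both argue that the edge-count queries $m_t$ fed to the SVT have sensitivity $1$, invoke \Cref{thm:sparse vector technique} to conclude the SVT transcript is $\eps_2$-DP, and then observe that the sampling probabilities $(p_1,\dots,p_T)$ are a deterministic post-processing of that transcript. Your write-up is in fact more careful than the paper's, explicitly handling the adaptive thresholds and spelling out the transcript-level decomposition, but the underlying idea is identical.
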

\begin{proof}
     Consider any output answer vector $\mathbf{a} \in \{ \text{``above''}, \text{``below''}, \text{``abort''} \}^T$ to SVT. 
     First, observe that the SVT (in \texttt{SampleEdge} of \Cref{alg:insertion-only dsg}) is $\eps_2$-DP by~\cref{thm:sparse vector technique}, therefore $\frac{\prob[SVT(m(\cS)) =\mathbf{a}]}{\prob[SVT(m(\cS')) = \mathbf{a}]} \leq e^{\eps_2}$
    where $m(S)$ is the number of edges in $S$. Now, observe that the probability $p$ of sampling only changes whenever SVT accepts. Since privacy is preserved after post-processing, this implies the statement in the claim. 
\end{proof}
\end{adjustwidth}

\begin{adjustwidth}{1em}{1em}
\begin{claim}
Define $I_{-t^*}^{(t)} =(I^{(t)}_1, \ldots, I^{(t)}_{t})$ for stream $\cS$ where for $t^* \neq j\leq t$,
\begin{align}
    I_j^{(t)} &= \begin{cases}
        1 & e_j \text{ is in the sample }X^{(t)} \\
        0 & \text{otherwise}.
    \end{cases}
\end{align}  
Define ${I'}_{-t^*}^{(t)}=({I'}^{(t)}_1, \ldots, {I'}^{(t)}_t)$ for stream $\cS'$ analogously. Let $\mathbf{b}^{(t)}$ be a $\{0,1\}$-vector with the same dimension as $I_{-t^*}^{(t)}$ and $ {I'}_{-t^*}^{(t)}$. 
Also let $\cI=(I_{-t^*}^{(1)}, \ldots, I_{-t^*}^{({T})})$ and $\cI'=(I_{-t^*}'^{(1)}, \ldots, I_{-t^*}'^{({T})})$.
Then for any choice of $\mathbf{b}^{(t)}$'s,
\begin{align}
    \frac{\prob[\cI=(\mathbf{b}^{(1)}, \ldots, \mathbf{b}^{({T})})\vert \cP]}{\prob[\cI'=(\mathbf{b}^{(1)}, \ldots, \mathbf{b}^{({T})}) \vert \cP']}\cdot \frac{\prob[\cP]}{\prob[\cP']} = \frac{\prob[\cI=(\mathbf{b}^{(1)}, \ldots, \mathbf{b}^{({T})}), \cP]}{\prob[\cI'=(\mathbf{b}^{(1)}, \ldots, \mathbf{b}^{({T})}), \cP']}\leq e^{\eps_2}
\end{align}
\end{claim}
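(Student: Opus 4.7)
The plan is to argue that, conditioned on the probability schedule $\cP$ (respectively $\cP'$), the sampling indicators for all edges at positions $j \neq t^*$ have the \emph{same} distribution across the two neighboring streams. Once this is established, the ratio of conditional probabilities equals exactly $1$, and the stated bound follows by multiplying through by the ratio $\prob[\cP]/\prob[\cP']$ bounded in \Cref{clm:prob-dp}. The leftmost equality in the display is just the chain rule $\prob[A\mid B]\prob[B]=\prob[A,B]$, so it requires no work.

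The key step is to produce a coupling between the executions of \Cref{alg:insertion-only dsg} on $\cS$ and $\cS'$ in which, for every $j\neq t^*$, the algorithm draws the same uniform $h_{e_j}\sim U[0,1]$ in both runs (this is valid because $e_j=e'_j$ by the definition of edge-neighboring streams). Under this coupling, and conditional on the event $\cP$ that fixes the sampling probability at timestep $t$ to $p_t$ (and similarly for $\cP'$), whether $e_j$ belongs to the sample $X^{(t)}$ is determined \emph{solely} by whether $h_{e_j}\leq p_t$: inspecting \FnUpdate, an edge $e_j$ is initially admitted to $X$ precisely when $h_{e_j}\leq p_j$, and is retained at later times $t>j$ precisely when $H[e_j]=h_{e_j}\leq p_t$ (see \cref{one-shot:check-ht,one-shot:keep-edge,one-shot:remove-edge}). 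Since the probability schedules fix $p_t=p'_t$ for all $t$ under $\cP$ and $\cP'$, and the coin flips $h_{e_j}$ are shared by the coupling, the indicator vector $I_{-t^*}^{(t)}$ is a deterministic function of these shared coin flips and the schedule; consequently its distribution in the two runs is identical.

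Formally, this gives
\[
\prob\bigl[\cI=(\mathbf{b}^{(1)},\ldots,\mathbf{b}^{(T)})\,\bigm|\,\cP\bigr]
\;=\;
\prob\bigl[\cI'=(\mathbf{b}^{(1)},\ldots,\mathbf{b}^{(T)})\,\bigm|\,\cP'\bigr],
\]
so the first ratio in the left-hand product equals $1$. Combining with \Cref{clm:prob-dp} yields
\[
\frac{\prob[\cI=\mathbf b,\cP]}{\prob[\cI'=\mathbf b,\cP']}
\;=\;
\frac{\prob[\cI=\mathbf b\mid\cP]}{\prob[\cI'=\mathbf b\mid\cP']}\cdot\frac{\prob[\cP]}{\prob[\cP']}
\;\leq\; 1\cdot e^{\eps_2}
\;=\; e^{\eps_2},
\]
as desired.

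The main conceptual obstacle is justifying that conditioning on $\cP$ really does decouple the sampling of edges at $j\neq t^*$ from the edge $e_{t^*}$ that distinguishes the two streams. The subtlety is that $e_{t^*}$ is processed in $\cS$ but replaced by $\bot$ in $\cS'$, and this processing could \emph{in principle} affect $X$ downstream through the SVT-driven resampling. However, once we condition on the entire schedule of sampling probabilities $\cP$, all future membership decisions for $e_j$ (with $j\neq t^*$) reduce to the threshold test $h_{e_j}\leq p_t$, and the presence or absence of $e_{t^*}$ has no remaining influence. Making this independence-under-conditioning argument precise (and observing that $e_{t^*}$ is excluded from $\cI$ by construction, so we never need to compare indicators that truly differ between the streams) is the only delicate part of the proof.
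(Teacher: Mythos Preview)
Your proposal is correct and follows essentially the same approach as the paper: show the conditional ratio equals $1$ because, once the sampling schedule is fixed, membership of each $e_j$ ($j\neq t^*$) in $X^{(t)}$ depends only on its own uniform draw, then multiply by the bound from \Cref{clm:prob-dp}. Your coupling argument and discussion of the conditioning subtlety are more explicit than the paper's two-line proof, but the structure is identical.
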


\begin{proof}
    By definition, $\cP$ and $\cP'$ defines the sampling probability for all elements in $\cS$ and $\cS'$ to be $p_t$. 
    Hence, 
    it holds that $\frac{\prob[\cI=(\mathbf{b}^{(1)}, \ldots, 
    \mathbf{b}^{({T})})\vert \cP]}{\prob[\cI'=(\mathbf{b}^{(1)}, \ldots, \mathbf{b}^{({T})}) \vert \cP']} = 1$. By~\cref{clm:prob-dp},
    $\frac{\prob[\cP]}{\prob[\cP']} \leq e^{\eps_2}$ and our claim follows.
\end{proof}
\end{adjustwidth}
Let $\cQ$ be the event that $(\cI=(\mathbf{b}^{(1)}, \ldots, \mathbf{b}^{(T)}), \cP)$ and define $\cQ'$ analogously.
\begin{adjustwidth}{1em}{1em}
\begin{claim}
    Let $\mathbf{a}\in \set{\text{``above''}, \text{``below''}, \text{``abort''}}^T$ be a sequence of SVT answers, $D(\cdot)$ denote
    the function computing the sequence of densities, and $X(\cdot)$ be the produced samples from the stream.
    We have
    \begin{align}
        \frac{\prob[SVT(D(X(\cS)))= \mathbf{a} \vert \cQ]}{\prob[SVT(D(X(\cS')))= \mathbf{a} \vert \cQ']} &\leq e^{\eps_2} 
    \end{align}
\end{claim}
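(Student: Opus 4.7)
The plan is to reduce the claim to a direct application of the SVT privacy guarantee (\cref{thm:sparse vector technique}) after using the conditioning on $\cQ$ to control the $L^\infty$-sensitivity of the density query stream. The first step is to observe what $\cQ = (\cI, \cP)$ actually fixes. The event $\cP$ fixes the sampling probability $p_t$ used at every timestep on both streams, while $\cI$ fixes, for every index $j \neq t^*$ and every time $t \geq j$, whether edge $e_j$ (equivalently $e_j'$, since $e_j = e_j'$ for $j \neq t^*$) is in the sample at time $t$. Consequently, conditioned on $\cQ$ (resp.\ $\cQ'$), the sample $X^{(t)}$ (resp.\ $X'^{(t)}$) is determined entirely by $\cI$, except that $X^{(t)}$ may additionally contain $e_{t^*}$ (which is sampled or not according to the internal $h_{e_{t^*}}$ draw), whereas $X'^{(t)}$ never contains any edge originating from index $t^*$ because $e'_{t^*} = \bot$.

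The second step is to bound the sensitivity. By the observation above, for every $t \in [T]$ the samples satisfy $|X^{(t)} \triangle X'^{(t)}| \leq 1$, differing at most in whether $e_{t^*}$ is present. Since the density of the densest subgraph has sensitivity $1$ with respect to inserting or deleting a single edge, the query sequence fed into the outer SVT (on \cref{one-shot:exceed-threshold-svt} of \cref{alg:insertion-main}) satisfies
\[
  \lVert D(X(\cS)) - D(X(\cS')) \rVert_\infty \leq 1,
\]
which matches the sensitivity parameter with which the outer SVT instance was initialized on \cref{one-shot:initialize svt}.

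The third step is to invoke \cref{thm:sparse vector technique} with sensitivity $\Delta = 1$ and privacy budget $\varepsilon_2$. Although the queries to the outer SVT are in general adaptive (because they depend on outputs of the sampling routine, which in turn uses the inner SVT), conditioning on $\cQ$ (resp.\ $\cQ'$) makes the entire query sequence a deterministic function of the (conditionally determined) samples, so the SVT guarantee applies directly to the pair $(D(X(\cS)), D(X(\cS')))$ of neighboring query streams. This yields the desired ratio bound $e^{\varepsilon_2}$. The main subtlety (and the only thing to handle carefully) is to articulate that conditioning on $\cQ$ does not break the applicability of the SVT bound: the randomness of the outer SVT (its threshold noise $\rho$ and per-query Laplace noises) is independent of $\cQ$, so the conditional distribution of the SVT output given $\cQ$ is exactly the SVT mechanism applied to the deterministic (conditionally determined) query stream, for which the standard SVT analysis yields the claimed ratio.
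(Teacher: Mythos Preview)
Your proposal is correct and follows essentially the same approach as the paper: condition on $\cQ, \cQ'$ so that the samples $X^{(t)}$ and $X'^{(t)}$ differ in at most the single edge $e_{t^*}$, deduce that the density query sequence has $L^\infty$-sensitivity at most $1$, and then invoke \Cref{thm:sparse vector technique} with budget $\eps_2$. Your write-up is in fact more careful than the paper's in spelling out that the outer SVT's internal noise is independent of $\cQ$ and in handling the residual randomness of whether $e_{t^*}$ was sampled.
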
  
\begin{proof}
Note that conditioned on the events $\cQ$ and $\cQ'$, the samples $X(\cS)$ and $X(\cS')$ differ by at most $1$ edge, specifically
the edge $e_{t^*}$. This is due to 
the fact that the same $e_t$ are sampled from both $\cS$ and $\cS'$ conditioned on $\cQ$ and $\cQ'$ except for 
$e_{t^*}$, and by definition of 
$\cS \sim \cS'$, for only one $t^* \in [{T}]$, it holds that $e_{t^*} \neq e_{t^*}'$.
Therefore the sensitivity of exactly computing the density on the sample is given by $\vert D(X(\cS)) - D(X(\cS')) \vert \leq 1$.  And SVT is $\eps_2$-DP conditioned on these events by~\cref{thm:sparse vector technique}.
\end{proof}
\end{adjustwidth}
By the chain rule of conditional probability, putting the above claims together gives us 
\begin{align}\label{eq:comb}
    \frac{\prob[SVT(D(X(\cS)))= \mathbf{a} \vert \cQ]}{\prob[SVT(D(X(\cS)))= \mathbf{a} \vert \cQ']} \cdot  \frac{\prob[\cI=(\mathbf{b}^{(1)}, \ldots, \mathbf{b}^{({T})})\vert \cP]}{\prob[\cI'=(\mathbf{b}^{(1)}, \ldots, \mathbf{b}^{({T})}) \vert \cP']}\cdot \frac{\prob[\cP]}{\prob[\cP']} &\leq e^{2\eps_2} = e^{2\eps/3}
\end{align}
Lastly, we call on the private densest subgraph algorithm at most $c_2\log_{1+\eta}(n)$ times, and since the privacy budget assigned to this procedure is $\eps_1=\frac{\eps}{3 c_2\log_{1+\eta}(n)}$, by sequential composition, this means that this operation is $(\nicefrac\eps3)$-DP. 

Hence combining \cref{eq:comb} and the observation about the private densest subgraph algorithm operation being $(\nicefrac\eps3)$-DP gives us our main theorem statement by sequential composition. 
\end{pf}

\begin{remark}
    We remark that although we are running a DP algorithm on a sample which may imply some privacy amplification bounds, these privacy savings only apply to our case if the sampling probability of the edges decreases per time step. Since we cannot guarantee that this will be the case for all input graphs, the privacy amplification bounds do not improve the worst-case privacy guarantees. 
\end{remark}

\subsection{Approximation and Space Guarantees}\label{apx:dsg-utility}

In this section, we prove the approximation factor of the approximate
densest subgraph we obtain via our sampling procedure. We first prove that 
our sampling procedure produces, with high probability, a set of edges $X = X_t \subseteq E_t$ at time $t$
such that $|\opt(G_t) - \den_{G_t}(\opt(X_t))| \leq \eta \cdot \opt(G_t)$ where $\opt(G_t)$ is the density
of the densest subgraph in the dynamic graph $G_t$ at time $t$
and $\den_{G_t}(\opt(X_t))$ is the density of the densest subgraph in the sampled graph $G[X_t]$
but with respect to the full graph $G_t$.
Then, we show that the approximate solution returned by the private static densest subgraph subroutine applied to the sub-sampled graph
translates to an approximate solution in the original graph.
In the following, we assume full independence in each of the sampled values for each edge (which is guaranteed by our algorithm).

\begin{lemma}\label{lem:induced-approx}
    Fix $\eta\in (0, 1]$.
    For every $t \in [T]$, \Cref{alg:insertion-main} returns a set of vertices $V^* \subseteq V$ where the induced
    subgraph $G_t[V^*]$ has density $\den_{G_t}(V^*) \geq 
    \frac{\den_{G_t}(V_{\opt})}{(2+\eta)} - O\left( \frac{\log^2 n}{\eps \eta} \right)$ (where $V_{\opt}$ is the set of vertices in 
    a densest subgraph at time $t$ consisting of all edges seen in the stream so far),  
    with high probability.
\end{lemma}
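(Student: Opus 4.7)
The plan is to combine three ingredients: (i) a multiplicative Chernoff concentration argument showing that sub-sampling with probability $p_t$ preserves the relevant subgraph densities up to a $(1\pm O(\eta))$ factor, (ii) the approximation guarantee of the black-box private static DSG algorithm (applied to $X_t$), and (iii) a staleness bound arising from the SVT-controlled lazy release of $V^*$. Throughout, I would condition on the high-probability event that neither SVT exceeds its ``above'' budget, which holds by \Cref{thm:sparse vector technique} together with a counting argument analogous to the one used in \Cref{lem:sampling-approx}.

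First, I would establish uniform concentration. By the subsampling SVT, with high probability $p_t$ takes one of only $O(\log n)$ distinct values $2^{-i}$, and whenever $p_t<1$ we have $p_t\cdot m_t = \Theta(n\log^2(n)/(\eps\eta^2))$ up to a $(1+\eta)$ factor. Fixing such a $p_t$, a multiplicative Chernoff bound (\Cref{thm:multiplicative-chernoff}) over the uniform hash values $h_e$ combined with the classical DSG sparsification union-bound argument of \cite{mcgregor2015densest, EHW16} yields, with probability $1-1/\poly(n)$, the following uniform statement for every $t\in [T]$ and every vertex subset $S\sset V$:
\[
\frac{\den_{X_t}(S)}{p_t} \in \left[(1-\eta)\den_{G_t}(S) - \epsilon_0,\ (1+\eta)\den_{G_t}(S) + \epsilon_0\right],
\]
where $\epsilon_0 = O(\log^2(n)/(\eps\eta))$. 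A union bound over the $O(\log n)$ values of $p_t$ and the $T=\poly(n)$ timestamps preserves this uniformly.

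Second, I would combine this with the DP DSG guarantee and the estimate SVT's lazy release. If $\opt(G_t) \leq c_1 \log^2(n)/(\eps\eta)$ the claim holds trivially since $\den_{G_t}(V^*)\geq 0$; otherwise, since the initial threshold $L$ is set at most to this value and $\opt(G_t)$ is monotone non-decreasing in an insertion-only stream, the estimate SVT must have fired at some time $t'\leq t$. At time $t'$, the static DP DSG returned $V^*$ with $\den_{X_{t'}}(V^*) \geq \opt(X_{t'})/2 - O(\log(n)/\eps_1) = \opt(X_{t'})/2 - O(\log^2(n)/\eps)$. Applying the concentration bound to $V_{\opt}$ to lower-bound $\opt(X_{t'}) \geq (1-\eta)p_{t'}\opt(G_{t'}) - p_{t'}\epsilon_0$, and applying it in reverse on $V^*$ to convert $\den_{X_{t'}}(V^*)/p_{t'}$ back to $\den_{G_{t'}}(V^*)$, I conclude
\[
\den_{G_{t'}}(V^*) \geq \frac{\opt(G_{t'})}{2(1+O(\eta))} - O\!\left(\frac{\log^2(n)}{\eps\eta}\right).
\]

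Finally, I would bound the staleness of $V^*$. In insertion-only streams $\den_{G_t}(V^*)\geq \den_{G_{t'}}(V^*)$, and since the estimate SVT did not fire between $t'$ and $t$, $\opt(G_t) \leq (1+O(\eta))\opt(G_{t'})$ by another application of the concentration bound, combined with the geometric growth of $L$. Putting it all together yields $\den_{G_t}(V^*) \geq \opt(G_t)/(2(1+O(\eta))) - O(\log^2(n)/(\eps\eta))$, which after reparameterizing $\eta$ gives the claimed $(2+\eta, O(\log^2(n)/(\eps\eta)))$ guarantee. The main obstacle will be the uniform Chernoff concentration in the first step: one must handle the adaptively-chosen probabilities $p_t$ (resolved by conditioning on the subsampling SVT's budget so that only $O(\log n)$ distinct values occur) and obtain uniform control over exponentially many vertex subsets (resolved by the classical DSG sparsifier trick that yields multiplicative control for ``dense enough'' subgraphs with the sparser ones absorbed into the additive term $\epsilon_0$).
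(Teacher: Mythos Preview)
Your three-step outline---uniform Chernoff concentration for the sub-sampled graph, invoking the black-box static DP DSG routine, and bounding the staleness via the estimate SVT---is exactly the structure of the paper's proof, and the overall logic is sound.

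One imprecision is worth flagging. Your uniform concentration statement asserts
\[
\frac{\den_{X_t}(S)}{p_t}\in\bigl[(1-\eta)\den_{G_t}(S)-\epsilon_0,\ (1+\eta)\den_{G_t}(S)+\epsilon_0\bigr]
\]
with a \emph{fixed} additive slack $\epsilon_0=O(\log^2(n)/(\eps\eta))$ holding for every $S$. This is too strong: for a subset $S$ with small induced density the Chernoff exponent is governed by $p_t\,|G_t[S]|$, and since $p_t=\Theta(n\log^2(n)/(\eps\eta^2 m_t))$ can be tiny for large $m_t$, a union bound over $\binom{n}{|S|}$ subsets does not go through with slack only $\epsilon_0$. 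The paper (following \cite{mcgregor2015densest,EHW16}) instead proves the bound with additive slack $\frac{\eta}{2}\opt(G_t)$, exploiting that $\opt(G_t)\geq m_t/n$ so that $p_t\cdot\eta^2\opt(G_t)=\Omega(\log^2(n)/\eps)$; this is what makes the exponent beat $|S|\log n$. The fix is painless for your argument: replace $\epsilon_0$ by $\frac{\eta}{2}\opt(G_t)$ in the concentration step, and then observe (as the paper does) that when you divide the DP DSG additive error $O(\eps_1^{-1}\log n)$ and the SVT noise by $p_t$, the result is again $O(\eta\,m_t/n)\leq O(\eta\opt(G_t))$, so these too are absorbed into the multiplicative factor. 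The residual $O(\log^2(n)/(\eps\eta))$ additive error in the final bound comes solely from the regime $p_t=1$ (before sub-sampling starts) and from the initial density threshold $L$, not from the concentration step.
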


\begin{proof}
    For the first part of this proof, we take inspiration from the techniques given in \cite[Lemma 2.3]{EHW16}, although our proof differs in several respects due to the noise
    resulting from our private mechanisms.

    \myparagraph{Sub-Sampling Concentration}
    Let $x_e$ be the random variable indicating whether edge $e$ exists in
    $X$ and $p$ be the probability of sampling the edges. %
    Fix an arbitrary non-empty subset of vertices $\varnothing\neq V'\sset V$.
    The number of edges in $X[V']$ is given by
    $\card{X[V']} = \sum_{e \in X[V']} 1 = \sum_{e \in G_t[V']}x_e$. We use $G_t$ to denote
    the graph on edges inserted in updates $u_1, \ldots, u_t$ at timestep $t$.
    Then, we know the expectation of $\den(X[V'])$ is
    \begin{align}
        \mu = \expect[\den(X[V'])] &= \expect\left[\frac{\sum_{e \in G_t[V']}x_e}{|V'|}\right] = \frac{\expect\left[\sum_{e \in G_t[V']}x_e\right]}{|V'|} = \frac{\sum_{e \in G_t[V']}p}{|V'|}\\ 
        &= p \cdot \frac{\sum_{e \in G_t[V']}1}{|V'|} = 
        p \cdot \den_{G_t}(V'),\label{eq:density}
    \end{align}
    and the expectation of $|X[V']|$ is also 
    \begin{align}
        \mu_{|X[V']|} = \expect[|X[V']|] = \expect\left[\sum_{e \in G_t[V']}x_e\right] = p \cdot |G_t[V']|,\label{eq:expected-num-edges}
    \end{align}
    where $|G_t[V']|$ denotes the number of edges in the induced subgraph of $G_t[V']$. 
    
    Since all of the variables $x_e$ are independent, we have 
    by the multiplicative Chernoff bound (\cref{thm:multiplicative-chernoff}),

    \begin{align}
        \prob\left[ |X[V']| \geq (1+\psi) \cdot \mu_{|X[V']|} \right]
        &\leq \exp\left(-\frac{\psi^2 \mu_{|X[V']|}}{2 + \psi}\right)\label{eq:chernoff-original}\\
        & \leq \exp\left(-\frac{\psi^2 \mu_{|X[V']|}}{14}\right) &\text{When $\psi\in (0, 12]$} \label{eq:upper-bound-special}\\
        &\leq \exp\left(-\frac{\psi^2 \cdot p \cdot |G_t(V')|}{14}\right) \label{eq:upper-bound}. 
    \end{align}

    Consider the private estimate $m_t'$ of $m_t$ stored in~\cref{one-shot:max-edges} of \Cref{alg:insertion-only dsg},
    where $m_t$ is the number of true edges in the graph. 
    We add $O(T)$ instances of $\Lap(O(\eps / \log_{1+\eta}(n)))$ noise in the SVT subroutine (\Cref{one-shot:initialize svt}).
    By a union bound over $T = \poly(n)$,
    all such noises are of order $O(\eps^{-1} \log_{1+\eta}(n) \log(n)) = O(\eps^{-1}\eta^{-1} \log^2(n))$ with probability $1-1/\poly(n)$.
    Hence
    $m_t - O\left(\frac{\log^2 n}{\eps\eta}\right) 
    \leq m'_t \leq (1+\eta) \cdot m_t + O\left(\frac{\log^2 n}{\eps \eta} \right)$, with high probability.
    Note that by the choice of initialization of $m'$ (\Cref{one-shot:max-edges}),
    we sub-sample edges only once $m_t \geq \Omega(\eps^{-1}\eta^{-2} n\log^2(n)) - O(\eps^{-1}\eta^{-1}\log^2(n)) = \omega(\eps^{-1}\eta^{-1} \log^2(n))$.
    Thus, we also know that (by~\cref{one-shot:set-prob} of \Cref{alg:insertion-only dsg}), for a constant $c > 0$,
    \begin{align}
    1 \geq p &= \frac{cn\log^2(n)}{\eps \cdot \eta^2 \cdot m_t'}\\
    &\geq \frac{cn\log^2(n)}{\eps \cdot \eta^2 \cdot \left((1+\eta) \cdot m_t + O\left(\frac{\log^2 n}{\eps \eta}\right)\right)}\\
    &\geq \frac{cn\log^2(n)}{c'\left(\eps \cdot \eta^2 \cdot m_t \right)} &&\text{since $m_t=\omega(\eps^{-1}\eta^{-1}\log^2 n)$} \label{eq:p-lower-bound}
    \end{align}
    for an appropriately large constant $c' > 0$. %

    Let $k = |V'|$.
    We set $\psi = \frac{pk\eta \opt(G_t)}{2\mu_{|X[V']|}}$ in~\cref{eq:upper-bound} where $\opt(G_t)$ is
    the density of the densest subgraph in $G_t$.
    For the analysis, we consider two cases. The first case is when $\psi \leq 12$ and the second is when $\psi > 12$.\footnote{The
    two cases selected are bounded by $12$ which is an arbitrarily chosen large enough constant.}
    Suppose in the first case that $\psi \leq 12$; substituting this 
    value for $\psi$ into~\cref{eq:chernoff-original} yields the following bound
    \begin{align}
        &\prob\left[|X[V']| \geq (1+\psi) \cdot \mu_{|X[V']|}\right] \\
        &\leq \exp\left(-\frac{p^2 k^2 \eta^2 \opt(G_t)^2 \cdot \mu_{|X[V']|}}{14 \cdot 4\mu_{|X[V']|}^2}\right) \\
        &\leq \exp\left(-\frac{pk\eta^2 \opt(G_t)}{56}\right), %
    \end{align}
    where the last expression follows since $p \cdot k \cdot \opt(G_t) \geq \mu_{|X[V']|}$ due to the fact
    that by~\cref{eq:expected-num-edges}, $p \cdot k \cdot \opt(G_t) \geq p \cdot k \cdot \den(G[V']) = p \cdot (|G[V']|) = \mu_{|X[V']|}$. 
    Then, suppose that $\psi > 12$; substituting this $\psi$ into~\cref{thm:multiplicative-chernoff} gives the following probability 
    expression: $\prob[X \geq (1+\psi)\mu] \leq \exp\left(-\frac{\psi \mu}{2}\right)$ since $\frac{\psi}{2+\psi} \geq \frac{1}{2}$ for the 
    given bound on $\psi$. Thus, using this value of $\psi$ in~\cref{eq:chernoff-original} yields the following bound
    \begin{align}
        &\prob\left[|X[V']| \geq (1+\psi) \cdot \mu_{|X[V']|}\right] \\
        &\leq \exp\left(-\frac{p k \eta \opt(G_t) \cdot \mu_{|X[V']|}}{2 \cdot 2\mu_{|X[V']|}}\right) \\
        &\leq \exp\left(-\frac{pk\eta \opt(G_t)}{4}\right), %
    \end{align}

    Altogether, for both cases, our expression is bounded by 
    \begin{align}
        \prob\left[|X[V']| \geq (1+\psi) \cdot \mu_{|X[V']|}\right] \leq \exp\left(-\frac{pk\eta^2 \opt(G_t)}{56}\right).\label{eq:final-probability-upper-bound}
    \end{align}
    
    We can simplify the LHS of the above inequality by:
    \begin{align}
         \prob\left[|X[V']| \geq \mu_{|X[V']|} + \frac{pk\eta\opt(G_t)}{2}\right] &= \prob\left[\frac{|X[V']|}{pk} \geq 
         \frac{\mu_{|X[V']|}}{pk} + \frac{\eta\opt(G_t)}{2}\right]\\
         &= \prob\left[\frac{1}{p} \cdot \den(X[V']) \geq \frac{1}{p} \cdot p \cdot \den_{G_t}(V') + \frac{\eta\opt(G_t)}{2}\right]\\
         &= \prob\left[\frac{1}{p} \cdot \den(X[V']) \geq \den_{G_t}(V') + \frac{\eta\opt(G_t)}{2}\right].\label{eq:lower-bound-lhs}
    \end{align}
    
    We now consider all subsets of vertices
    of size $k$. There are at most 
    $\binom{n}{k}$ such sets.
    There are also $n$ possible values of $k$ since there are at most $n$ vertices. 
    Hence, the total probability that the bound holds for all $V' \subseteq V$ (using~\cref{eq:final-probability-upper-bound}) is
    \begin{align}
        \sum_{k=1}^n \binom{n}{k} \cdot \exp\left(-\frac{pk\eta^2\opt(G_t)}{56}\right) &\leq \sum_{k = 1}^n n^k
        \cdot \exp\left(-\frac{pk\eta^2\opt(G_t)}{56}\right)\\
        &= \sum_{k = 1}^n \exp\left(k \cdot \ln(n) - \frac{pk\eta^2\opt(G_t)}{56}\right).
    \end{align}
    Note that the densest subgraph has density at least the density of the entire graph,
    $\frac{m_t}{n}$. 
    Combined with the substitution of \Cref{eq:p-lower-bound} into the above,
    we obtain
    \begin{align}
        &\sum_{k = 1}^n \exp\left(k \cdot \ln(n) - \frac{pk\eta^2\opt(G_t)}{56}\right) \\
        &\leq \sum_{k = 1}^n \exp\left(k \cdot \ln(n) - \frac{cn\log^2(n)}{c'\left(\eps \cdot \eta^2 \cdot m_t \right)} \cdot 
        \frac{k\eta^2 m_t}{56 n}\right) &&\text{by \Cref{eq:p-lower-bound}} \\
        &\leq \sum_{k = 1}^n \exp\left(k \cdot 2\log^2(n) - \frac{c\log^2(n)}{c'\eps} \cdot 
        \frac{k}{56}\right)\\
        &= \sum_{k = 1}^n \exp\left(- \left(\frac{c}{56c'\eps} - 2 \right) \cdot k \cdot \log^2(n)\right) \\
        &\leq \sum_{k = 1}^n \exp\left(- \left(\frac{c}{56c'\eps} - 2 \right) \cdot \log^2(n)\right) &&\text{since $k \geq 1$} \label{eq:last-approx} \\
        &= n \cdot \exp\left(- \left(\frac{c}{56c'\eps} - 2 \right) \cdot \log^2(n)\right)\\
        &\leq \exp\left(- \left(\frac{c}{56c'\eps} - 4 \right) \cdot \log^2(n) \right)
    \end{align}
    where \Cref{eq:last-approx} holds when $\left(\frac{c}{56c'\eps} - 2 \right) \geq 1$.
    To obtain the high probability bound of $1/\poly(n)$ over all time stamps $t$, 
    it suffices to set $c$ to be a sufficiently large absolute constant
    and take a union bound over $T=\poly(n)$ events. 
    Hence, we have proven that with high probability,
    the density of any subgraph $G_t[V']$ is at least $\frac1p \den_X(V') - \frac\eta2 \OPT(G_t)$
    for every $t$.
    In particular,
    no induced subgraph in $X$ has estimated density greater than $(1+\frac\eta2)p\cdot \opt(G_t)$, with high probability.

    Now, we show that $\opt(G_t)$ has large enough induced density in $X$ such that the returned subgraph is a $(1+\eta)$-approximate
    densest subgraph in $G_t$. We do this by setting $V' = V(\opt(G_t))$, i.e. to the set of vertices of the densest subgraph in $G_t$. 
    By applying a Chernoff bound (\Cref{thm:multiplicative-chernoff}) with $\psi = \eta\in (0, 1]$
    and using the fact that $\OPT(G_t) \geq \nicefrac{m_t}{n}$,
    we have that
    \begin{align}
        &\prob\left[ \frac1p \den(X[V']) \leq (1-\psi) \cdot \OPT(G_t) \right]\label{eq:upper-bound-lhs} \\
        &= \prob\left[\den(X[V']) \leq (1-\psi) \cdot \mu\right] \\
        &\leq \exp\left(-\frac{\psi^2 \mu}{2}\right) = \exp\left(-\frac{\psi^2 \cdot p \cdot \OPT(G_t)}{2}\right)\\
        &\leq \exp\left(-\frac{\psi^2}2 \cdot \frac{cn\log^2(n)}{c'\left(\eps \cdot \eta^2 \cdot m_t \right)} \cdot \frac{m_t}{n} \right) &&\text{by \Cref{eq:p-lower-bound}} \\
        &= \exp\left(-\frac{c \log^2(n)}{2c'\eps}\right) &&\text{by setting } \psi = \eta. \label{eq:lower-bound}
    \end{align}
    By setting a large enough constant $c > 0$ and together with
    what we showed above, we 
    obtain that with high probability, the densest subgraph in $X$ has induced density in $G_t$ that is a $(1+\eta)$-approximation of $\opt(G_t)$.
    We take the union bound over $T=\poly(n)$ to show that for each time step $t$, our bound holds.

    \myparagraph{Approximation}
    By \Cref{thm:improved private static densest subgraph},
    the vertex set $S$ we output is a $(2, O(\eps_1^{-1}\log n))$-approximate densest subgraph of $X$ after each call to the private static densest subgraph subroutine.
    If we have yet to begin sub-sampling,
    i.e. $X=G_t$,
    then we are done.
    
    Suppose the algorithm started to sub-sample.
    Then with high probability,
    after each call to the private static densest subgraph algorithm,
    \begin{align}
        &\den_{G_t}(S) \\
        &\geq \frac1p \den_X(S) - \frac{\eta \OPT(G_t)}2 && \text{by~\cref{eq:lower-bound-lhs}} \\
        &\geq \frac1p \left( \frac{\OPT(X)}2 - O(\eps_1^{-1}\log n) \right) - \frac{\eta \OPT(G_t)}2 &&\text{by \Cref{thm:improved private static densest subgraph}} \\
        &\geq \frac{(1-\eta)\OPT(G_t)}{2} - \frac{c'\left(\eps \cdot \eta^2 \cdot m_t \right)}{cn\log^2(n)}\cdot \frac{c''\log^2(n)}{\eps \eta} - \frac{\eta \OPT(G_t)}2 &&\text{since } \eps_1 = \Theta\left( \frac\eps{\log_{1+\eta}(n)} \right) \label{one-shot:space bottleneck} \\
        &= \frac{(1-2\eta)\OPT(G_t)}{2} - \frac{c'c''\eta m_t}{cn} \\
        &\geq \frac{(1-3\eta) \OPT(G_t)}2, &&\text{for } c\geq 2c'c'', \OPT(G_t)\geq \frac{m_t}n
    \end{align}
    where \cref{one-shot:space bottleneck} follows from~\cref{eq:upper-bound-lhs}. 

    Finally,
    we account for the additional error due to not updating $S$ when the SVT (\Cref{one-shot:exceed-threshold-svt}) does not output ``above''.
    Let $D_t$ denote the exact density of the subgraph 
    and $p_t$ the sampling probability at time $t$ as returned on \Cref{one-shot:non-private-density}.
    If the SVT did not output ``above'', 
    we know that with high probability,
    \begin{align}
         D_t &\leq p_t\cdot L + O(\eps^{-1}\eta^{-1} \log^2(n)) \\
         \frac{D_t}{p_t} &\leq L + \frac{O(\eps^{-1}\eta^{-1} \log^2(n))}{p_t} \\
         &\nonumber \\
         D_t &> p_t\cdot \frac{L}{1+\eta} - O(\eps^{-1}\eta^{-1} \log^2(n)) \\
         \frac{D_t}{p_t} &> \frac{L}{1+\eta} - \frac{O(\eps^{-1}\eta^{-1} \log^2(n))}{p_t}.
    \end{align}
    Similar to the calculation we made regarding $\OPT(X)$,
    either $p=1$ and there is nothing to prove
    or $p \geq \frac{cn\log^2(n)}{c' \eps\eta^2 m_t}$ and
    \begin{align}
        \frac{O(\eps^{-1}\eta^{-1} \log^2(n))}{p_t}
        &= O(\eps^{-1}\eta^{-1} \log^2(n)) \cdot \frac{c' \eps\eta^2 m_t}{cn\log^2(n)} \\
        &\leq \frac{\eta m_t}{n} \\
        &\leq \eta \OPT(G_t).
    \end{align}
    In other words,
    assuming a sufficiently large constant $c$,
    we incur an additional $(1+2\eta)$ multiplicative error 
    and $O(\eps^{-1}\eta^{-1}\log^2 n)$ additive error due to the SVT.
    Hence,
    we obtain a $(2+O(\eta), O(\eps^{-1}\eta^{-1}\log^2 n))$-approximate densest subgraph at every timestamp $t$.
\end{proof}

Combining \Cref{lem:induced-approx,lem:one-shot dsg privacy} proves the privacy and utility guarantees of \Cref{thm:densest-subgraph}.
Before addressing the space usage,
we briefly describe the minor changes needed to obtain similar guarantees for \Cref{thm:densest-subgraph better approx more space}.

\myparagraph{Reducing the Multiplicative Factor} We remark that we require $1/p = \Omega(\eps^{-1} \eta^{-2} n\log^2 n)$ to absorb the additive error of the private static densest subgraph subroutine (\Cref{thm:improved private static densest subgraph}) into the multiplicative error (see \Cref{one-shot:space bottleneck}).
If we instead use an alternative private static densest subgraph subroutine,
say \Cref{thm:private static densest subgraph},
which yields a $O(1+\eta, \eta^{-3}\eps_1^{-1}\log^4(n))$-approximation,
we would need to set $1/p= \Omega(\eps^{-1} {\eta^{-5}} n\log^5(n))$ (\Cref{one-shot:set-prob}) since $\eps_1 = \eps/\log_{1+\eta}(n)$.
Suppose we further adjust the initial edge threshold to $m' = \Theta(\eps^{-1} {\eta^{-5}} n\log^5(n))$ (\Cref{one-shot:max-edges}),
the initial density threshold to $L = \Theta(\eps^{-1} {\eta^{-5}} n\log^5(n))$,
and repeat the proof steps of \Cref{lem:induced-approx} for this version of our algorithm.
This yields a proof of \Cref{thm:densest-subgraph better approx more space}.

\myparagraph{Space Complexity} Finally, we prove that our algorithm uses {$\tilde O(n)$} space, with high probability.

\begin{lemma}\label{lem:space-usage}
    \cref{alg:insertion-only dsg} uses $O(\eps^{-1}\eta^{-2}n\log^2(n))$ (resp.\ $O(\eps^{-1}{\eta^{-5}}n\log^5(n))$)\footnote{This space usage depends on which static algorithm we use.} 
    space with probability at least $1 - 1/\poly(n)$.
\end{lemma}

\begin{proof}
    By~\cref{one-shot:set-prob} of~\cref{alg:insertion-only dsg}, each edge is sampled with probability $p = \min\left(1, \frac{c_4 n\log^2(n)}{\eps \cdot \eta^2 m'}\right)$.
    By the guarantees of SVT and our scaling procedure in~\cref{one-shot:max-edges}, the final $m'$ will be at least $\frac{m}{1+\eta} - O(\log^2(n)/\eps)$. Since 
    we initially set $m' = \frac{c_3 n \log^2(n)}{\varepsilon \eta^2} >> c_5\log^2(n)/\eps$ (or $m' = \frac{c_3 n \log^5(n)}{\varepsilon \eta^2} >> c_5\log^2(n)/\eps$ depending
    on which static algorithm we choose), it follows that $m' \geq m/3$, with high probability. Then, the expected number of edges we sample is $\frac{3c_3\log^2(n)}{\varepsilon \eta^2}$
    (resp.\ $\frac{3c_3\log^5(n)}{\varepsilon \eta^2}$) in expectation. Since all edges are sampled independently, we use the Chernoff bound to show that the total 
    space we use is $O(\eps^{-1}\eta^{-2}n\log^2(n))$ (resp.\ $O(\eps^{-1}\eta^{-2}n\log^5(n))$) with probability at least $1 - 1/\poly(n)$.
\end{proof}

\section{Proof of \texorpdfstring{\Cref{thm:edge-dp-matching}}{Matching Theorem}}\label{apx:matching}
We now prove \Cref{thm:edge-dp-matching},
whose statement we copy below for convenience.
\edgeDPMatching*

\subsection{Detailed Algorithm Description}\label{apx:matching-description}
The pseudocode for our algorithm is given in~\cref{alg:matching-sampling}. Our algorithm works as follows. 
We set the probability of sampling edges from the stream initially to $1$. For each update
$e_t$, we sample the update into $S$ with probability $p_t$ (\cref{match:sample-edge}).
If the sampled update is not $\bot$ (\cref{match:sample-not-bot}), then we initialize counters for its endpoints 
if one or both of these counters do not already exist (\cref{match:sample-edge}). A sampled $e_t$ that is 
$\bot$ does not increase the size of $S$ (it is a no-op). 
Then, we iterate through every other edge in $S$ (\cref{match:other-edge}) and if $e'$
shares an endpoint with $e_t$ (\cref{match:share-endpoint}), then we increment the counter $c_{e'}^w$
associated with $e'$ (\cref{match:increment-counter}). If the counter exceeds our cutoff $\tilde{\alpha}$ (\cref{match:exceed-alpha}),
then we remove $e'$ and its corresponding counters from $S$ (\cref{match:remove-e}).
After performing our edge sampling, we now check using SVT whether $|S|$ (the number of edges in $S$) 
exceeds $\frac{a_3\log^2 n}{\eps \eta^2}$ for some large enough constant $a_3 > 0$ (\Cref{match:svt-s}).
If the SVT AboveThreshold query is satisfied, then we halve the probability of sampling (\cref{match:half-prob})
and resample each edge in $S$ with $1/2$ probability (\cref{match:resample}).
Finally, we output a new estimate if $|S|/p$ is greater than our previous estimate by a significantly large amount. 
Specifically, we check via SVT queries (\cref{match:svt-while}) whether $|S|$ exceeds $p\cdot (1+\eta)^{j_t}$. 
If so, we increment the counter $j_t$ and output our new estimate as $(1+\eta)^{j_t}$ (\cref{match:new-estimate}); otherwise, we output our old estimate.

\begin{algorithm}[htb!]
\caption{Sampling $E_t^*$}\label{alg:matching-sampling}
\DontPrintSemicolon
\SetKwProg{MyClass}{Class}{}{}
\SetKwFunction{ClassName}{EdgePrivateSublinearMatching}
\SetKwProg{Fn}{Function}{}{}
\SetKwFunction{FnProcessUpdate}{ProcessUpdate}
\MyClass{\ClassName{$\tilde \alpha, \varepsilon, \eta, n, T$}}{
  $S \gets \varnothing$\;
  $p_1 \gets 1$\;
  $Q_1 \gets a_1\log(n), Q_2 \gets a_2 \eta^{-1} \log(n)$ \\
  $j_1 \gets 0$ \\
  estimate $\gets (1+\eta)^{j_1}$\; 
  Initialize class $\textsc{SubsampleSVT}\gets \SVT(\nicefrac\eps2, 2, Q_1)$ \qquad (\Cref{alg:sparse vector technique}) \label{match:initialize subsample svt} \\
  Initialize class $\textsc{EstimateSVT}\gets \SVT(\nicefrac\eps2, 2, Q_2)$ \label{match:initialize estimate svt} \\
  \BlankLine
  \Fn{\FnProcessUpdate{$e_t$}} {
    \If{$e_t \not= \bot$}{\label{match:sample-not-bot}
        With probability $p_t$ add $e_t$ to $S$ and initialize counters $c_e^u \gets 0$ and $c_e^v \gets 0$. \label{match:sample-edge} \\
        \For{each edge $e' \not= e_t \in S$}{\label{match:other-edge}
        \Comment{even if $e_t$ is not sampled, it affects the counters of other edges} \\
        \If{$e' = \set{w, w'}$ shares endpoint $w$ with $e_t$}{\label{match:share-endpoint}
            Increment $c_{e'}^w$\label{match:increment-counter}\;
                \If{$c_{e'}^w > \tilde{\alpha}$}{\label{match:exceed-alpha}
                    Remove $e'$ from $S$ and delete $c_{e'}^w, c_{e'}^{w'}$ \label{match:remove-e}\;
                }
            }
        }
    }
    \BlankLine
    \While{\textsc{EstimateSVT.ProcessQuery}$(|S|, p_t\cdot (1+\eta)^{j_t})$ is ``above''} {\label{match:svt-while}
        $j_t \leftarrow j_t + 1$\\
        estimate $\gets (1+\eta)^{j_t}$\label{match:new-estimate}\;
    } 
    Output estimate. \\
    \BlankLine
    $p_{t+1} \gets p_t$, $j_{t+1} \gets j_t$ \\
    \If{\textsc{SubsampleSVT.ProcessQuery}$(|S|, \frac{a_3 \log^2(n)}{\eps\eta^2})$ is ``above''}{\label{match:svt-s} 
      $p_{t+1} \gets p_t/2$\label{match:half-prob}\;
      \For{$e=\set{u, v}\in S$} {
        With probability $\nicefrac12$ remove $e$ from $S$ and delete $c_e^u, c_e^v$ \label{match:resample} \\
      }
    }
  }
}
\end{algorithm}

\subsection{Privacy Guarantee}\label{apx:matching-privacy}

We prove the privacy guarantees of our algorithm in this section. 

\begin{lemma}
    \cref{alg:matching-sampling} is $\eps$-edge differentially private.
\end{lemma}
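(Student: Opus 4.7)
My plan is to reduce the privacy of the algorithm to the privacy of the two SVT instances
(\textsc{SubsampleSVT} and \textsc{EstimateSVT}, each initialized with budget $\eps/2$)
composed with post-processing.
Observe that the algorithm never directly releases the sample $S$ or its counters:
at each timestep the only released quantity is the estimate $(1+\eta)^{j_t}$,
where $j_t$ is advanced exactly when \textsc{EstimateSVT} returns ``above'',
and $p_t$ is halved exactly when \textsc{SubsampleSVT} returns ``above''.
Hence the entire output sequence is a deterministic post-processing of the two SVT transcripts,
so by \cref{thm:composition} it suffices to show that each SVT instance is $(\eps/2)$-edge DP.

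By \cref{thm:sparse vector technique},
it then suffices to show that every query $|S_t|$ issued to either SVT
has sensitivity at most $\Delta=2$ on edge-neighboring streams.
To establish this, I would couple the intrinsic sampling and subsampling coins across the two executions on a per-edge basis,
so that an edge present in both streams undergoes identical Bernoulli trials.
Let $S,S'$ be edge-neighboring streams differing at timestamp $t^*$,
where $S$ contains $e_{t^*}$ and $S'$ contains $\bot$.
The bound $\bigl\lvert |S_t|-|S'_t| \bigr\rvert \leq 2$ at $t=t^*$ follows from two observations.
First, in the stream containing $e_{t^*}$, the edge itself is added to the sample at most once, contributing $+1$ to the size gap.
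Second, the counters $c_{e'}^w$ for edges $e' \in S_{t^*-1}$ sharing an endpoint $w$ with $e_{t^*}$
are pairwise distinct, because $c_{e'}^w$ equals exactly the number of edges adjacent to $w$
that arrived strictly after $e'$,
so if $e_1$ arrived before $e_2$ then $c_{e_1}^w \geq c_{e_2}^w + 1$.
Consequently, at most one edge adjacent to each of the two endpoints of $e_{t^*}$
has its counter pushed from $\tilde\alpha$ to $\tilde\alpha+1$ by the arrival of $e_{t^*}$,
yielding at most two extra removals from $S$ relative to $S'$.
Combining the two effects gives $\bigl\lvert |S_{t^*}|-|S'_{t^*}| \bigr\rvert \leq 2$.

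For $t > t^*$ the two executions receive identical updates,
and I would argue inductively that the bound $\bigl\lvert |S_t|-|S'_t| \bigr\rvert \leq 2$ is preserved:
each edge that was removed early from $S$ but is still present in $S'$ is flushed out of $S'$
at the very next arrival adjacent to the relevant endpoint (whose counter in $S'$ lags that in $S$ by exactly $1$ for precisely these edges),
and $e_{t^*}$ itself is flushed from $S$ in finite time.
The main subtlety, and the step I expect to require the most care,
is that the subsampling operation (line \ref{match:resample}) may be triggered at different timesteps in the two executions
because the perturbed sample sizes can cross the SVT threshold at slightly different times.
I would handle this by treating the subsampling coin for each edge at each potential subsampling event as pre-sampled and input-independent,
so that, conditioned on any fixed SVT transcript (the object whose privacy we are proving),
the sample sizes still obey the $\Delta = 2$ sensitivity bound.
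Applying \cref{thm:sparse vector technique} then gives $(\eps/2)$-edge DP for each SVT,
and \cref{thm:composition} together with post-processing yields $\eps$-edge DP for the overall algorithm.
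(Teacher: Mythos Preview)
Your plan is correct and follows essentially the same approach as the paper: both establish that $\bigl\lvert |S_t| - |S'_t| \bigr\rvert \leq 2$ via the distinct-counters observation, then argue each SVT instance is $(\eps/2)$-DP and combine to get $\eps$-DP. The paper phrases the argument by explicitly conditioning on the sampling-probability event $\cP$ (determined by the \textsc{SubsampleSVT} transcript) and the sampling-indicator event $\cI$, and then chaining the ratios $\frac{\Pr[\cP]}{\Pr[\cP']}$, $\frac{\Pr[\cI\mid\cP]}{\Pr[\cI'\mid\cP']}$, and $\frac{\Pr[\textsc{EstimateSVT}\mid\cQ]}{\Pr[\textsc{EstimateSVT}\mid\cQ']}$; your coupling of per-edge sampling/subsampling coins together with adaptive composition and post-processing is an equivalent formulation of the same proof.
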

\begin{proof}
By the definition of edge-neighboring streams (\cref{def:neighboring-streams}), 
 let $\cS =(e_1,\ldots, e_{{T}}), \cS'=(e'_1, \ldots, e'_{{T}})$ be neighboring streams of edges, i.e., 
 there exists $t^* \in [{T}]$ such that $e_{t^*} \neq  e'_{t^*}$ and $e'_{t^*} = \bot$.
 Note that $e_t$ for any $t \in [T]$ can either be an edge $\{u, v\}$ or $\bot$.
 We define the event $\cP$ where the probability for sampling $e_t$ 
 from the stream $\cS$ is set to $p(e_t)=p_t$ for every $t \in [{T}]$. 
 Similarly define $\cP'$ where the probability for sampling $e'_t$ from the stream $\cS'$ is set to $p(e'_t)=p_t$ for 
 every $t \in [{T}]$. %

Define $I_{-t^*}^{(t)} =(I^{(t)}_1, \ldots, I^{(t)}_{t})$ for stream $\cS$ where for $t^* \neq j\leq t$,
\begin{align}
    I_j^{(t)} &= \begin{cases}
        1 & e_j \text{ is in the sample }S_t \\
        0 & \text{otherwise}.
    \end{cases}
\end{align}  
Define ${I'}_{-t^*}^{(t)}=({I'}^{(t)}_1, \ldots, {I'}^{(t)}_t)$ for stream $\cS'$ analogously. Let $\mathbf{b}^{(t)}$ be a $\{0,1\}$-vector with the same dimension as $I_{-t^*}^{(t)}$ and $ {I'}_{-t^*}^{(t)}$. 
Also let $\cI=(I_{-t^*}^{(1)}, \ldots, I_{-t^*}^{({T})})$ and $\cI'=(I_{-t^*}'^{(1)}, \ldots, I_{-t^*}'^{({T})})$.

\begin{adjustwidth}{1em}{1em}
 \begin{claim}\label{clm:cgs-st}
     For any $t\in [T]$, let $S_t$ be the set of sampled edges at timestep $t$. Conditioned on the events $\cI=(\mathbf{b}^{(1)}, \ldots, \mathbf{b}^{({T})}), \cI'=(\mathbf{b}^{(1)}, \ldots, \mathbf{b}^{({T})})$,  $\cP,\cP'$, the sensitivity of $\vert S_t \vert$ is at most 2. 
 \end{claim}
 
 \begin{proof}
    Let $S_t$ and $S'_t$ denote the sampled edges of the neighboring streams at time $t$. By definition $\cP$ and $\cP'$ define the sampling probabilities at every timestep $t \in [T]$ to be $p_t$. We consider the case when the differing edge at timestep $t^*$ is such that $e_{t^*}=\{u,v\}$ and $e'_{t^*}=\perp$.
    If $t<t^*$,
    then the input stream so far is identical.
    Hence the sampled edges are the same, conditioned on the events in the lemma,
    and there is nothing to prove.
    Consider now the case that $t\geq t^*$.
    Processing $e_{t^*}$ may initially increase the sample size by $1$ but may also cause edges to be removed. 
    We show that 
    $e_{t^*} = \set{u, v}$ can at most cause $||S_t| - |S'_t|| \leq 2$
    for all $t\geq t^*$. 
    For every edge that is incident to $u$ or $v$ and is still 
    in $S$, there is at most one edge $e'$ incident to each $w \in \{u, v\}$ that has counter value $c_{e'}^w =\tilde{\alpha}$. 
    This is due to the 
    fact that no two edges are inserted in the same timestep and the counters are incremented when a later edge adjacent to the vertex
    arrives. 
    In other words, 
    for each vertex $u$,
    the set of edges adjacent to $u$ always has distinct counts. 
    Suppose that $e'$ (adjacent to $u$ or $v$) is the edge removed from $G'$ but not removed from $G$ due to the counter $c_{e'}^w$ exceeding 
    $\tilde{\alpha}$ ($w \in \{u, v\}$). Then, any future update that removes another edge from $G'$ incident to $w$ must first cause $e'$ to be removed.
    Hence, at any point in time $S'_t$ contains at most two fewer edges incident to $w \in \{u, v\}$ compared to $S$, at most one per endpoint. $e_{t^*}$
    does not affect the counters of any edges inserted after it. Hence, $||S_t| - |S'_t|| \leq 2$ for all $t\geq t^*$. 
 \end{proof}
\end{adjustwidth}

 \begin{adjustwidth}{1em}{1em}
 \begin{claim}\label{clm:match-prob-dp}
    \[\frac{\prob[\cP]}{\prob[\cP']} \leq e^{\eps/2}\]
\end{claim}
\begin{proof}
     By \cref{clm:cgs-st}, since the sensitivity of the sampled set of edges $S$ is upper bounded by 2. The subsample SVT output is $\eps/2$-DP by~\cref{thm:sparse vector technique}. Now, observe that the probability $p$ of sampling only changes whenever the subsample SVT (\Cref{match:svt-s}) accepts. Since privacy is preserved after post-processing, this implies the statement in the claim. 
\end{proof}
\end{adjustwidth}

\begin{adjustwidth}{1em}{1em}
\begin{claim}
    \begin{align}
        \frac{\prob[\cI=(\mathbf{b}^{(1)}, \ldots, \mathbf{b}^{({T})})\vert \cP]}{\prob[\cI'=(\mathbf{b}^{(1)}, \ldots, \mathbf{b}^{({T})}) \vert \cP']}\cdot \frac{\prob[\cP]}{\prob[\cP']} = \frac{\prob[\cI=(\mathbf{b}^{(1)}, \ldots, \mathbf{b}^{({T})}), \cP]}{\prob[\cI'=(\mathbf{b}^{(1)}, \ldots, \mathbf{b}^{({T})}), \cP']}\leq e^{\eps/2}
    \end{align}
\end{claim}

\begin{proof}
    By definition, $\cP$ and $\cP'$ defines the sampling probability for all elements in $\cS$ and $\cS'$ to be $p_t$. 
    Hence, 
    it holds that $\frac{\prob[\cI=(\mathbf{b}^{(1)}, \ldots, 
    \mathbf{b}^{({T})})\vert \cP]}{\prob[\cI'=(\mathbf{b}^{(1)}, \ldots, \mathbf{b}^{({T})}) \vert \cP']} = 1$. By~\cref{clm:match-prob-dp},
    $\frac{\prob[\cP]}{\prob[\cP']} \leq e^{\eps/2}$ and our claim follows.
\end{proof}
\end{adjustwidth}
Let $\cQ$ be the event that $(\cI=(\mathbf{b}^{(1)}, \ldots, \mathbf{b}^{(T)}), \cP)$ and define $\cQ'$ analogously.
\begin{adjustwidth}{1em}{1em}
\begin{claim}
Let $\mathbf{a}\in \set{\text{``above''}, \text{``below''}, \text{``abort''}}^T$ be a sequence of the estimate SVT answers in \cref{match:svt-while}.
    We have
    \begin{align}
        \frac{\prob[SVT(\vert S \vert)= \mathbf{a} \vert \cQ]}{\prob[SVT(\vert S' \vert)= \mathbf{a} \vert \cQ']} &\leq e^{\eps/2} 
    \end{align}
\end{claim}
\begin{proof}
    By \cref{clm:cgs-st}, since the sensitivity of the sampled set of edges $S$ is upper bounded by 2, the estimate SVT output is $\eps/2$-DP by~\cref{thm:sparse vector technique}.
\end{proof}
\end{adjustwidth}
By the chain rule, putting the above claims together gives us 
\begin{align}\label{eq:match-comb}
    \frac{\prob[SVT(\vert S \vert )= \mathbf{a} \vert \cQ]}{\prob[SVT(\vert S' \vert )= \mathbf{a} \vert \cQ']}\cdot   \frac{\prob[\cI=(\mathbf{b}^{(1)}, \ldots, \mathbf{b}^{({T})})\vert \cP]}{\prob[\cI'=(\mathbf{b}^{(1)}, \ldots, \mathbf{b}^{({T})}) \vert \cP']}\cdot \frac{\prob[\cP]}{\prob[\cP']} &\leq e^{2\eps/2} = e^{\eps}.
\end{align}
\end{proof}

\subsection{Approximation and Space Guarantees}\label{apx:matching-utility}
We now show the approximation guarantees of our algorithm. The below proofs modify the proofs of~\cite{mcgregor2018simple} in the 
non-private setting to the private setting. 
In particular, we modify their proof to hold for any $\tilde \alpha \geq \alpha$.

Let $M(G_t)$ be the maximum size of a matching in input graph $G_t$ (consisting of all updates
up to and including update $e_t$).
We wish to compute an approximation of $M(G_t)$. 
Let $B_u^t$ be the \emph{last} $\tilde{\alpha} + 1$ edges incident to $u \in V$ that appears in $G_t$. 
Let $E_{\tilde\alpha}^t$ be the set of edges $\{u, v\} \in G_t$ 
where $|B_u^t| \leq \tilde\alpha$ and $|B_v^t| \leq \tilde\alpha$. 
That is, $E_{\tilde\alpha}^t$ consists of the 
set of edges $\{u, v\}$ where the number of edges incident to $u$ and $v$ that appear in the stream after 
$\{u, v\}$ is at most $\tilde{\alpha}$. We say an edge $\{u, v\}$ is \defn{good} if $\{u, v\} \in B_u^t \cap B_v^t$, and 
an edge is \defn{wasted} if $\{u, v\} \in B_u^t \oplus B_v^t = (B_u^t \cup B_v^t) \setminus (B_u^t \cap B_v^t)$.
Then, $E_{\tilde\alpha}^t$ is precisely the set of good edges in $G^t$.
In other words, $E_{\tilde\alpha}^t = \bigcup_{u \neq v\in V} \left(B_u^t \cap B_v^t\right)$. 

Our algorithm estimates the cardinality of $E^t_{\tilde\alpha}$ as an approximation of $M(G_t)$. 
So, we first relate $|E^t_{\tilde\alpha}|$ to $M(G_t)$.

\begin{lemma}\label{lem:e-alpha-mg}
    $M(G^t) \leq |E_{\tilde\alpha}^t| \leq (\tilde{\alpha} + 2) \cdot M(G_t)$.
\end{lemma}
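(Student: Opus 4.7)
My plan is to prove the two inequalities in the lemma separately.

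For the upper bound $|E_{\tilde\alpha}^t|\leq (\tilde\alpha+2)M(G_t)$, I would begin by observing that $E_{\tilde\alpha}^t$ has maximum degree at most $\tilde\alpha+1$: every good edge incident to a vertex $u$ must lie in $B_u^t$, which by definition contains at most $\tilde\alpha+1$ edges. Applying Vizing's edge coloring theorem, the edges of $E_{\tilde\alpha}^t$ can be properly edge-colored using at most $\tilde\alpha+2$ colors. Each color class is a matching in $G_t$ (since $E_{\tilde\alpha}^t\subseteq E(G_t)$), so each has size at most $M(G_t)$. Summing over the $\tilde\alpha+2$ color classes yields $|E_{\tilde\alpha}^t|\leq (\tilde\alpha+2)M(G_t)$.

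For the lower bound $M(G_t)\leq |E_{\tilde\alpha}^t|$, I would invoke the arboricity hypothesis $\alpha\leq \tilde\alpha$ in force throughout this section, and follow a global double-counting argument in the spirit of \cite{mcgregor2018simple}. Let $W$ denote the set of wasted edges and define the total degree slack $s:=\sum_v \max(0,\deg_{G_t}(v)-\tilde\alpha-1)$. The two identities $\sum_v |B_v^t|=2|E(G_t)|-s$ and $\sum_v |B_v^t|=2|E_{\tilde\alpha}^t|+|W|$, together with $|W|\leq s$ (since each wasted edge is ``early'' for exactly one of its endpoints, and each vertex $v$ has at most $\max(0,\deg_{G_t}(v)-\tilde\alpha-1)$ such early edges), combine to give $|E_{\tilde\alpha}^t|\geq |E(G_t)|-s$. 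The arboricity bound $|E(H)|\leq \tilde\alpha(|V(H)|-1)$ applied to induced subgraphs (in particular to the set of high-degree vertices $V':=\{v:\deg_{G_t}(v)>\tilde\alpha+1\}$) then controls $s$ in terms of $|E(G_t)|$, and a careful inclusion-exclusion across matched and unmatched vertices reduces the target inequality to $s\leq |E(G_t)|-M(G_t)$, which closes the proof.

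The hard part will be the lower bound, specifically establishing the final inequality $s\leq |E(G_t)|-M(G_t)$ from the arboricity constraint. A purely local charging --- e.g.\ mapping each edge $\{u,v\}$ of a maximum matching to an incident good edge --- will not work in general: a matched vertex can have zero incident good edges, for instance when its incident edges all arrive early in the stream and its neighbors subsequently accumulate many later incident edges, so even a Hall-type condition fails locally. Consequently, the argument must leverage the arboricity bound globally, carefully balancing the vertex-level degree slack against the edge budget and the non-$B$ edges. Making this double counting tight enough to absorb the full matching size $M(G_t)$ without any multiplicative loss is the technical crux of the proof.
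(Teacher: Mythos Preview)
Your upper bound argument via Vizing's theorem is correct and genuinely different from the paper's. The paper instead defines the fractional matching $Y_e = \frac{1}{\tilde\alpha+1}$ on $E_{\tilde\alpha}^t$ and invokes a corollary of Edmonds' matching polytope theorem to bound its total weight by $(1+\frac{1}{\tilde\alpha+1})M(G_t)$. Your Vizing-based proof is arguably more elementary and equally clean; the Edmonds route has the minor advantage of generalizing to fractional settings, but for the integral bound stated here your argument is perfectly adequate.

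For the lower bound, your setup is right but the step you flag as ``the technical crux'' is in fact short, and you have already identified all the pieces. From your identities you have $|E_{\tilde\alpha}^t| \geq |E| - s$, so it suffices to show $M(G_t) \leq |E| - s$. Writing $V'=\{v:\deg(v)>\tilde\alpha+1\}$ and decomposing $\sum_{v\in V'}\deg(v)=2e(V')+e(V',\overline{V'})$, one gets
\[
|E|-s = e(\overline{V'}) + (\tilde\alpha+1)|V'| - e(V') \;\geq\; e(\overline{V'}) + |V'|,
\]
the last inequality using the arboricity bound $e(V')\leq \alpha|V'|\leq \tilde\alpha|V'|$ on the induced subgraph on $V'$. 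Now simply observe that every edge of a maximum matching either lies inside $\overline{V'}$ or is incident to $V'$, so $M(G_t)\leq e(\overline{V'})+|V'|$. This closes the argument; no delicate balancing is needed.

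The paper reaches the same endpoint more directly: it classifies good and wasted edges by how many endpoints lie in the heavy set $H$ (variables $w,x,y,z$), uses $\sum_{u\in H}|B_u|=(\tilde\alpha+1)|H|$ together with the arboricity bound $y+z\leq \alpha|H|$ to get $x+y\geq |H|$, and then concludes $|E_{\tilde\alpha}^t|=w+x+y\geq |E_L|+|H|\geq M(G_t)$. Your global degree-sum route is a slightly more roundabout packaging of the same two ingredients (arboricity on the heavy set, and the matching bound $M\leq |H|+|E_L|$), but it is correct once completed as above.
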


\begin{pf}[\Cref{lem:e-alpha-mg}]
    We first prove the right-hand side of this expression. To do this, we define a fractional matching using $E_{\tilde\alpha}^t$. Let $Y_e = 
    \frac{1}{\tilde \alpha + 1}$ if $e \in E_{\tilde \alpha}^t$ and $Y_e = 0$ otherwise. Then, $\{Y_e\}_{e \in E}$ is a fractional matching with maximum
    weight $\frac{1}{\tilde\alpha+1}$.\footnote{A fractional matching is defined to be a set of weights $f(e)\geq 0$ on every edge $e$ where
    $\forall v \in V: \sum_{e \ni v} f(e) \leq 1$.} We now show a corollary of Edmonds' matching polytope theorem \cite{edmonds1965maximum}. Edmonds' matching 
    polytope theorem implies that if the weight of a fractional matching on any induced subgraph $S \subseteq G$ is at most $\frac{|S| -1}{2}$,
    then the weight on the entire graph is at most $M(G)$. Now, we show the following proposition:

    \begin{adjustwidth}{1em}{1em}
    \begin{proposition}\label{cor:edmonds matching polytope}
        Let $\{Y_e\}_{e \in E}$ be a fractional matching where the maximum weight on any edge is $\psi$. Then, $\sum_{e \in E} Y_e \leq (1+\psi)
        \cdot M(G)$. 
    \end{proposition}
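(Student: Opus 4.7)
The plan is to invoke Edmonds' matching polytope theorem, which characterizes the convex hull of matchings as the set of $x\in \bbR^E_{\geq 0}$ satisfying the degree constraints $\sum_{e\ni v} x_e \leq 1$ and the odd-set constraints $\sum_{e\in E(S)} x_e \leq (|S|-1)/2$ for every odd $|S|\geq 3$. Since $M(G)$ is the maximum of $\sum_e x_e$ over this polytope, it suffices to exhibit a rescaling of $Y$ that lies in the polytope. The natural candidate is $Z_e := Y_e/(1+\psi)$; once $Z\in P_{\text{match}}(G)$ we immediately conclude $\sum_e Y_e = (1+\psi)\sum_e Z_e \leq (1+\psi) M(G)$.

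The degree constraint for $Z$ is automatic, since $\sum_{e\ni v} Z_e \leq \sum_{e\ni v} Y_e/(1+\psi) \leq 1/(1+\psi) \leq 1$. The real content is verifying the odd-set constraint
\begin{equation*}
    \sum_{e\in E(S)} Y_e \leq (1+\psi) \cdot \frac{|S|-1}{2}
\end{equation*}
for every odd $|S|\geq 3$. I would handle this by a case split driven by the ``threshold size'' $|S| = 1/\psi + 1$, using two complementary bounds on $\sum_{e\in E(S)} Y_e$.

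For \emph{small} odd sets with $|S|\leq 1/\psi + 1$, use the crude bound that there are at most $\binom{|S|}{2}$ edges in $G[S]$ and each carries weight at most $\psi$, yielding
\begin{equation*}
    \sum_{e\in E(S)} Y_e \leq \binom{|S|}{2}\psi = \frac{|S|(|S|-1)\psi}{2} \leq \frac{(1+\psi)(|S|-1)}{2},
\end{equation*}
where the last inequality rearranges to $|S|\psi \leq 1+\psi$, i.e., $|S|\leq 1/\psi + 1$. For \emph{large} odd sets with $|S| \geq 1/\psi + 1$, sum the degree constraint over $v\in S$ to get $2\sum_{e\in E(S)} Y_e \leq \sum_{v\in S}\sum_{e\ni v} Y_e \leq |S|$, hence $\sum_{e\in E(S)} Y_e \leq |S|/2$. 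This is at most $(1+\psi)(|S|-1)/2$ precisely when $|S|\geq (1+\psi)/\psi = 1/\psi + 1$, so the regimes mesh.

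The main (and really the only) thing to get right is choosing the threshold between the two bounds and checking that they cover all odd $|S|$ without a gap; the computation above shows both bounds are tight exactly at $|S| = 1/\psi + 1$, so there is no issue. Once both constraints are verified, Edmonds' theorem yields $\sum_e Z_e \leq M(G)$ and the proposition follows.
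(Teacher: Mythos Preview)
Your proof is correct and follows essentially the same approach as the paper: rescale by $1/(1+\psi)$, verify the Edmonds polytope constraints via the two bounds $\binom{|S|}{2}\psi$ and $|S|/2$, and case-split at the crossover point. The paper phrases the case split as $\frac{|S|}{|S|-1} \lessgtr \psi|S|$, which rearranges exactly to your threshold $|S| \lessgtr 1/\psi + 1$, so the arguments are identical up to presentation.
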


    \begin{pf}[\Cref{cor:edmonds matching polytope}]
        Let $S$ be an arbitrary subset of vertices, and let $E(S)$ be the edges in the induced subgraph of $S$. We know that $|E(S)| \leq 
        \frac{|S|(|S|-1)}{2}$ and by the definition of fractional matching, $\sum_{e \in E(S)} Y_e = \frac{\sum_{u \in S}\sum_{v \in N(u)} Y_{\set{u, 
        v}}}{2} \leq \frac{\sum_{u \in S} 1}{2} = \frac{|S|}{2}$ where $\sum_{v \in N(u)} Y_{\set{u, v}} \leq 1$ by the constraints of the fractional
        matching. Thus, we know that (given the maximum weight on any edge is $\psi$)
        \begin{align}
            \sum_{e \in E(S)} Y_e &\leq \min\left(\frac{|S|}{2}, \frac{\psi |S|(|S|-1)}{2}\right) \\
            &\leq \frac{|S|-1}{2} \cdot \min\left(\frac{|S|}{|S|-1}, \psi |S|\right)\\
            &\leq \frac{|S|-1}{2} \cdot (1+\psi).
        \end{align}

        We can show the last inequality by considering two cases:
        \begin{itemize}
            \item If $\frac{|S|}{|S|-1} \leq \psi |S|$, then
            \begin{align}
                \frac{1}{|S|-1} &\leq \psi \\
                1+ \frac{1}{|S|-1} &\leq 1+ \psi \\
                \frac{|S|}{|S|-1} &\leq 1+\psi.
            \end{align}
            \item If $\psi |S| \leq \frac{|S|}{|S|-1}$, then
            \begin{align}
                \psi &\leq \frac{1}{|S|-1}\\
                \psi |S| - \psi &\leq 1\\
                \psi |S| &\leq 1 + \psi.
            \end{align}
        \end{itemize}

        Finally, let $Z_e = \frac{Y_e}{1+\psi}$. 
        We can use Edmonds' polytope theorem to show that $\sum_{e \in E}Z_e \leq \frac{|S| -1}{2} \leq M(G)$ and 
        so $\sum_{e \in E}Y_e \leq (1+\psi)\sum_{e \in E} Z_e \leq (1+\psi) \cdot M(G)$. We have proven our corollary.
    \end{pf}
    \end{adjustwidth}
    Using the proposition above with maximum weight $\frac{1}{\tilde\alpha + 1}$ implies that $\sum_{e \in E} Y_e  = \frac{|E_{\tilde\alpha}^t|}{\tilde\alpha+1} \leq (1+\frac{1}
    {\tilde\alpha+1}) M(G) = \frac{\tilde\alpha +2}{\tilde\alpha + 1} \cdot M(G)$. Hence, we have that $|E_{\tilde\alpha}^t| \leq (\tilde\alpha +2) \cdot M(G)$.

    Now, we prove the left inequality. To prove the left inequality, let $H$ be the set of vertices in $G_t$ with degree at
    least $\tilde\alpha + 1$. These are the \defn{heavy} vertices. We also define the following variables:
    \begin{itemize}
        \item $w := $ the number of good edges with \emph{no} endpoints in $H$,
        \item $x := $ the number of good edges with \emph{exactly one} endpoint in $H$, 
        \item $y := $ the number of good edges with \emph{two} endpoints in $H$,
        \item $z := $ the number of \emph{wasted} edges with \emph{two} endpoints in $H$.
    \end{itemize}

 First, $|E^t_{\tilde\alpha}| = w + x + y$. 
 Below, we omit the superscript $t$ for clarity but everything holds with respect
 to $t$.
 Now, we show the following additional equalities. 
    We will first calculate the number of edges in the $B_u$ of every $u \in H$ in terms of the variables
    we defined above. Since every 
    vertex $u \in H$ is heavy, $|B_u| = \tilde\alpha + 1$. Hence, $\sum_{u \in H}|B_u| = (\tilde\alpha + 1)|H|$. 
    Every edge $\{u, v\}$ in each of these $B_u$ must either be a good edge or a wasted edge by definition since 
    $\set{u, v}\in B_u\cup B_v$.
    For each 
    edge counted in $x$, it has one endpoint in $H$ so it is counted in exactly one $B_u$ in $H$. Furthermore, 
    for each edge counted in $z$, it is also counted in the $B_u$ of exactly one of its two endpoints since
    it is wasted (i.e.\ the other endpoint doesn't have at most $\tilde\alpha$ edges that come after it). This 
    leaves every good edge counted in $y$ which is counted in exactly two $B_u$'s. Hence, $\sum_{u \in H} |B_u| = 
    x + 2y + z = (\tilde\alpha + 1)|H|$. 

    Now, we can also compute $z + y \leq \alpha |H|$  since $z+y$ is a subset of the total number of edges in the
    induced subgraph consisting of $H$. Since we know that the graph has arboricity $\alpha$, we also know that (by the 
    properties of arboricity) that the induced subgraph consisting of $H$ has at most $\alpha |H|$ edges. 
    Hence, we can sum our inequalities: $x + 2y + z = (\tilde\alpha +1)|H|$ and $-z - y \geq -\alpha|H|$ to obtain
    $x + y \geq (\tilde\alpha - \alpha + 1)|H|$. Finally, let $E_L$ be the set of edges with no endpoints in $H$. 
    Every edge $\set{u, v}\in E_L$ is good since $B_u = N(u)$ and $B_v = N(v)$ by assumption.
    Note that $w = |E_L|$. 
    Therefore, $|E^t_{\tilde\alpha}| = w + x + y \geq |H| + |E_L|$ since $\tilde\alpha \geq \alpha$. 
    We can show that $|H| + |E_L| \geq M(G)$ since
    we can partition the set of edges in the maximum matching to edges in $E_L$ and edges incident to $H$. All of the edges
    in $E_L$ can be in the matching and at most one edge incident to each vertex in $H$ is in the matching. We have 
    successfully proven our lower bound: $|E^t_{\tilde \alpha}| \geq M(G)$. 
\end{pf}

Now we have the following algorithm for estimating $|E_{\tilde\alpha}^t|$ for every $t\in [T]$. For every sampled edge, 
$e = \{u, v\}$, the algorithm also stores counters $c_{e}^u$ and $c_e^v$ for the degrees of $u$ and $v$
in the rest of the stream. This requires an additional factor of $O(\log(\tilde\alpha))$ space. 
Thus, the algorithm maintains the invariant that each edge stored in the sample is a good edge with respect to the current $G_t$.

Let  $E^*_t = \max_{t' \leq t}(|E_{\tilde\alpha}^{t'}|)$, we now show that we get a $\left( 1+\eta, \frac{\log^2(n)}{\eps \eta} \right)$-approximation of $E_t^*$ with high probability. 
First, we note that $E^*_t$
satisfies $M(G^t) \leq E^*_t \leq (2+\tilde\alpha)M(G_t)$ since $E^*_t \geq |E_{\tilde\alpha}^t|$, $M(G_{t'}) \leq M(G_t)$, and~\cref{lem:e-alpha-mg}.

We now show our main lemma for our algorithm.

\begin{lemma}\label{lem:sampling-approx}
    \cref{alg:matching-sampling} returns a $\left( 1+\eta, \frac{\log^2(n)}{\eps \eta } \right)$-approximation of $E_t^*$ for every $t\in [T]$ with probability at least $1 - 1/\poly(n)$.
\end{lemma}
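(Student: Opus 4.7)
The plan is to combine multiplicative Chernoff concentration for the sample size $|S_t|$ with the additive Laplace-noise guarantees of the two SVT instances inside \Cref{alg:matching-sampling}. Throughout I would condition on a single high-probability good event $\mcal E$ controlling (i) the deviations of $|S_t|$ from its expectation at every timestep and (ii) the Laplace noises of both SVT instances; each has sensitivity $2$ and is queried at most $Q_1 = a_1\log n$ times (subsample SVT) or $Q_2 = a_2\log(n)/\eta$ times (estimate SVT), giving \whp noise magnitudes $O(\log(n)/\eps)$ and $O(\log^2(n)/(\eps\eta))$ respectively. A union bound with suitably large $a_1, a_2$ makes $\prob[\mcal E]\geq 1 - 1/\poly(n)$.

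First I would show that each good edge $e\in E^t_{\tilde\alpha}$ appears in $S_t$ with probability exactly $p_t$ and independently across edges: $e$ survives its initial inclusion at time $\tau$ with probability $p_\tau$, and each of the subsequent halvings retains it with independent probability $1/2$, so the composite survival probability telescopes to $p_t$. Hence $\mu := \E[|S_t|] = p_t|E^t_{\tilde\alpha}|$, and \Cref{thm:multiplicative-chernoff} gives $|S_t|\in [(1-\eta/4)\mu,\,(1+\eta/4)\mu]$ whenever $\mu \geq C\log(n)/\eta^2$ for a sufficiently large constant $C$, which I fold into $\mcal E$ by union-bounding over the $T = \poly(n)$ timesteps.

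I then split into two regimes. When $p_t = 1$ (the subsample SVT has never fired) we have $|S_t| = |E^t_{\tilde\alpha}|$ exactly, and under $\mcal E$ we also get $E^*_t \leq O(\log^2(n)/(\eps\eta^2))$; the termination conditions of the estimate-SVT while-loop then yield $(1+\eta)^{j_t}\in [E^*_t - O(\log^2(n)/(\eps\eta)),\,(1+\eta)E^*_t + O(\log^2(n)/(\eps\eta))]$ by monotonicity of $j_t$. When $p_t < 1$, the subsample SVT's behaviour under $\mcal E$ pins $p_t|E^t_{\tilde\alpha}|$ to the window $[\Omega(\log^2(n)/(\eps\eta^2)),\,O(\log^2(n)/(\eps\eta^2))]$, easily meeting the Chernoff prerequisite. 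At the time $t^\star\leq t$ attaining $|E^{t^\star}_{\tilde\alpha}|=E^*_t$, the estimate-SVT termination combined with Chernoff at $t^\star$ gives
\[
    p_{t^\star}(1+\eta)^{j_{t^\star}} \geq (1-\eta/4)\,p_{t^\star} E^*_t - O(\log^2(n)/(\eps\eta)),
\]
and a symmetric upper bound on $(1+\eta)^{j_t}$ follows by applying Chernoff at the time $t'$ when the last estimate-SVT increment fired and using $|E^{t'}_{\tilde\alpha}|\leq E^*_t$.

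The main obstacle is this second regime: dividing the additive SVT noise by a small $p_{t^\star}$ could in principle blow up the error. The resolution is precisely the dual role of the subsample SVT---it simultaneously controls space and guarantees $1/p_{t^\star} = O(\eps\eta^2 E^*_t/\log^2 n)$, so the magnified noise becomes $O(\log^2(n)/(\eps\eta))\cdot O(\eps\eta^2 E^*_t/\log^2(n)) = O(\eta E^*_t)$, cleanly absorbed into the multiplicative $(1+\eta)$ slack. Rescaling $\eta$ by a constant factor then recovers the theorem's stated $(1+\eta,\,O(\log^2(n)/(\eps\eta)))$ approximation.
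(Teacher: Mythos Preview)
Your plan follows the paper's route: Chernoff concentration of $|S_t|$ around $p_t|E_{\tilde\alpha}^t|$, SVT noise bounds, and the key observation that when $p_t<1$ the subsample SVT forces $p_t E_t^*=\Omega(\eps^{-1}\eta^{-2}\log^2 n)$, so dividing the $O(\eps^{-1}\eta^{-1}\log^2 n)$ estimate-SVT noise by $p_t$ yields only $O(\eta E_t^*)$, absorbable into the $(1+\eta)$ slack. Where you argue this directly from the subsample SVT's last firing, the paper instead defines a deterministic ideal probability $p_t'=2^{-i_t}$ tied to the dyadic level $i_t$ of $E_t^*$ and proves by induction on $t$ that $p_t\geq p_t'$; both framings land on the same inequality.

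Two gaps remain. First, you never argue that the two SVTs stay within their ``above'' budgets $Q_1,Q_2$. If the estimate SVT aborts, the output $(1+\eta)^{j_t}$ freezes while $E_t^*$ may keep growing, destroying the approximation; the paper devotes a separate Chernoff calculation to this. Second, your independence claim---each edge of $E_{\tilde\alpha}^t$ lies in $S_t$ with probability exactly $p_t$, independently---is not literally correct: $p_t$ is itself a random variable determined through the subsample SVT by past sample sizes, so applying Chernoff with random $p_t$ in the exponent is circular. The standard fix (and what the paper's $p_t'$ device accomplishes) is to union-bound over the $O(\log n)$ possible values $p_t\in\{1,\nicefrac12,\nicefrac14,\dots\}$ and apply Chernoff to each deterministic level. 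Relatedly, your assertion that the subsample SVT ``pins $p_t|E_{\tilde\alpha}^t|$ to the window $[\Omega(\cdot),O(\cdot)]$'' for all $t$ is too strong, since $|E_{\tilde\alpha}^t|$ can decrease after a halving and break the lower bound; fortunately you only invoke the lower bound at $t^\star$, where $|E_{\tilde\alpha}^{t^\star}|=E_t^*\geq |E_{\tilde\alpha}^{t_k}|$ for the last halving time $t_k$, so that particular use survives.
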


\begin{proof}
    Fix some $t\in [T]$.
    We need to show at time $t$,
    we do not exceed the SVT ``above'' budgets
    as well as that sub-sampling yields sufficiently concentrated estimates
    with probability $1-1/\poly(n)$.
    Then, taking a union bound over $T=\poly(n)$ timestamps yields the desired result.

    First, let $\tau = \frac{d\log^2(n)}{\eps\eta^2}$ for some large enough constant $d > 0$ 
    and define level $i_t$ (starting with $i_t = 2$) to be the level that contains $E_t^*$ if
    $2^{i_t-1} \cdot \tau < E^*_t \leq 2^{i_t} \cdot \tau$. Level $i_t = 1$ is defined as $0 \leq E^*_t \leq 2 \cdot \tau$. 
    Let $p_t$ denote the marginal probability that an edge is sampled from $E_{\alpha}^t$.
    Note that this probability is prior to sub-sampling on \Cref{match:svt-s}.
    In a perfect situation, 
    at time $t$,
    edge $e$ is sampled in level $i$ with probability $p_t' := 2^{-i_t}$ when $i_t \geq 2$ and with probability
    $p_t' = 1$ when $i_t = 1$, but this is not the case as our algorithm determines $p_t$ adaptively depending on the size of the sampled edges at time $t$. 
    
    \myparagraph{SVT Budget}
    We show that with probability at least $1-1/\poly(n)$,
    neither of the SVT ``above'' budgets $Q_1$, $Q_2$ in \Cref{alg:matching-sampling} are exceeded. 
    First, for the sub-sampling SVT,
    we take $T = \poly(n)$ noisy comparisons with $\Lap(O(\eps/\log(n)))$ noise.
    Hence with probability $1-1/\poly(n)$,
    all Laplace realizations are of order $O(\eps^{-1}\log^2(n))$.
    
    $Q_1 = a_1\log(n)$ bounds the number of times the subsampling SVT in~\cref{match:svt-s} is satisfied. 
    Each time this SVT is satisfied, 
    we must have
    \begin{align}
        \card{S_t} 
        \geq a_3\eps^{-1}\eta^{-2} \log^2(n) - O(\eps^{-1}\log^2(n)) 
        \geq \frac{a_3}2\eps^{-1}\log^2(n) \label{match:svt-s necessary}
    \end{align}
    for sufficiently large constant $a_3 > 0$,
    leading to the probability of sampling edges being halved. The expected number of edges that are sampled at any point is $p_t \cdot \card{E_{\tilde \alpha}^t} \leq p_t\cdot n^2$. 
    Suppose $a_1$ is sufficiently large and we halved the sampling probability $\nicefrac{Q_1}2 \geq \ceil{\log(n^2)}$ times
    so that $p_t\leq \frac1{n^2}$. 
    We can upper bound the probability that the SVT answers ``above''
    by considering the experiment 
    where we obtain a sample $\tilde S_t$
    by sampling each edge with probability $\tilde p = \frac1{\card{E_{\tilde \alpha}^t}} \geq p_t$.
    By a multiplicative Chernoff bound (\Cref{thm:multiplicative-chernoff}), 
    for $\psi = \frac{a_3}2\eps^{-1}\log^2(n) - 1$,
    \begin{align}
        \prob\left[ \card{S_t} \geq \frac{a_3}2 \eps^{-1} \log^2(n) \right]
        &\leq \prob\left[ \card{\tilde S_t} \geq \frac{a_3}2 \eps^{-1} \log^2(n) \right] \\
        &= \prob[\card{\tilde S_t} \geq (1+\psi) \cdot 1] \\
        &= \prob[\card{\tilde S_t} \geq (1+\psi) \tilde p\cdot \card{E_{\tilde \alpha}^t}] \\
        &\leq \exp\left( -\frac{\psi^2\cdot 1}{2+\psi}  \right) \\
        &\leq \exp\left(- \Omega(\log^2(n)) \right) &&\text{$n$ sufficiently large} \\
        &\leq \frac1{\poly(n)}. 
    \end{align}    
    Hence, for large enough constants $a_1, a_3$, 
    we do not exceed the subsampling SVT ``above'' budget.
    
    A similar argument shows that we do not exceed the estimate SVT ``above'' budget of $Q_2$ with probability at least $1-1/\poly(n)$. 
    In the worst case, the number of times $p$ is halved is $Q_1$ times. Thus, the maximum value of $|S|/p$ is $2^{Q_1} \cdot n^2$
    since the maximum number of edges in the graph is upper bounded by $n^2$.
    Since we are increasing our threshold by a factor of $(1+\eta)$ each time, our threshold exceeds 
    $2^{Q_1}n^2$ after $\log_{1+\eta}(2^{Q_1} n^2) = O(\eta^{-1}Q_1 + \eta^{-1}\log(n)) = O(\eta^{-1} Q_1)$ times. If we set $a_2$ to be a large enough constant, 
    we do not exceed the ``above'' budget for the estimate SVT.

    \myparagraph{Concentration}
    In a perfect situation, edge $e$ is sampled at time $t$ with probability $p'_t= 2^{-i_t}$ for $i_t \geq 2$ and with probability
    $p'_t = 1$ for $i_t = 1$. 
    In that case, we can use the multiplicative Chernoff bound to bound the probability that our estimate concentrates as our sampling probability would be sufficiently high enough to do so. However, it is not the
    case that $p_t$ is guaranteed to be $2^{-i_t}$ since it is determined adaptively by~\cref{alg:matching-sampling}. 
    Hence, we need a slightly more
    sophisticated analysis than simply using the multiplicative Chernoff bound with $p'_t$. 
    
    We consider two cases. If $p_t \geq p'_{t}$, then we can lower bound the probability of success by what we would obtain using $p_t'$ and we can use the multiplicative Chernoff bound in this case. 
    Now, suppose that
    $p_t < p'_t$; then, we claim that this case never occurs in~\cref{alg:matching-sampling} with probability at least 
    $1-1/\poly(n)$. 
    We now formally present these arguments.

    If $p_t \geq p_t'$, then we use a multiplicative Chernoff bound (\Cref{thm:multiplicative-chernoff}) to bound our probability of success. 
    Let $s_t$ be the number of edges we sampled for $E_{\tilde \alpha}^t$ prior to any sub-sampling.
    That is,
    we obtain $s_t$ by sampling with probability $p_t$ before \Cref{match:svt-s}.
    For $\psi := \eta \cdot E_t^* / \card{E_{\tilde\alpha}^t}$,
    \begin{align}
        \prob\left[s_t - p_t \cdot |E_{\tilde\alpha}^t| \geq \eta \cdot p_t \cdot E_t^* \right]
        &= \prob\left[s_t - p_t \cdot |E_{\tilde\alpha}^t| \geq \psi \cdot p_t\cdot |E_{\tilde\alpha}^t| \right] \\
        &\leq \exp\left( - \frac{\psi^2}{2+\psi}\cdot p_t\cdot |E_{\tilde\alpha}^t| \right) \\
        &\leq \exp\left( - \frac{\psi^2}{2+\psi}\cdot p_t'\cdot |E_{\tilde\alpha}^t| \right). &&p_t\geq p_t'
    \end{align}
    When $\psi \geq 1$,
    we have $\frac{\psi}{2+\psi}\geq \frac13$ and this is at most
    \begin{align}
        \exp\left( - \frac{\psi}{3}\cdot p_t'\cdot |E_{\tilde\alpha}^t| \right)
        &= \exp\left( - \frac{\eta E_t^*}3 p_t' \right) \\
        &\leq \exp\left( -\Omega(\eps^{-1}\eta^{-1} \log^2(n)) \right) &&\text{definition of $p_t'$, sufficiently large $a_3$} \\
        &= \frac1{\poly(n)}.
    \end{align}
    When $\psi \in (0, 1)$,
    this is at most
    \begin{align}
        \exp\left( - \frac{\psi^2}{3}\cdot p_t'\cdot E_t^* \right)
        &\leq \exp\left( - \eta^2 E_t^* p_t' \right) \\
        &\leq \exp\left( -\Omega(\eps^{-1}\log^2(n)) \right). &&\text{definition of $p_t'$, sufficiently large $a_3$} \\
        &= \frac1{\poly(n)}
    \end{align}
    Since we need only consider the case of $\psi\in (0, 1)$ for the lower bound,
    an identical calculation yields the same concentration bound above but for the lower bound.
    All in all,
    $\frac{s_t}{p_t}\in [|E_{\tilde\alpha}^t| \pm \eta E_t^*]$ with high probability.

    Now, we consider the case when $p_t < p_t'$. We show that with 
    probability at least $1 - 1/\poly(n)$, 
    this case \emph{does not occur}. We prove this via 
    induction on $t$. For $t=1$, $p_t = p_t'$ trivially since both
    equal $1$. Now, we assume our claim holds for $t$ and show it holds
    for $t+1$. 
    First, note
    that $E^*_t$ cannot decrease so $p_t'$ cannot increase as $t$ increases.
    Then, by \Cref{match:svt-s necessary},
    $p_{t+1} < p_{t+1}'$ only if we sample more than $\frac{a_3'\log^2(n)}{2\eps \eta^2}$
    elements of $E_{\tilde\alpha}^{t}$ at probability $p_{t} = p_{t}'$ before \Cref{match:svt-s}.
    However,
    we have just shown that with high probability,
    \begin{align}
        s_t\leq (1+\eta) p_t E_t^*
        \leq 2p_t' E_t^*
        \leq 2\tau
        < \frac{a_3\log^2(n)}{2\eps \eta^2}
    \end{align}
    by the definition of $p_t'$, $\tau$, and assuming a sufficiently large $a_3$.
    Hence this event does not occur with high probability.

    Finally,
    the estimate SVT ensures that with high probability,
    \begin{align}
        s_t &\leq p_t (1+\eta)^{j_t} + O(\eps^{-1}\eta^{-1}\log^2(n)) \\
        s_t &> p_t(1+\eta)^{j_t-1} - O(\eps^{-1}\eta^{-1}\log^2(n)).
    \end{align}
    But we wish to approximate $s_t/p_t$,
    and dividing the inequalities by $p_t$ yield
    \begin{align}
        \frac{s_t}{p_t} &\leq (1+\eta)^{j_t} + \frac{O(\eps^{-1}\eta^{-1}\log^2(n))}{p_t} \\
        \frac{s_t}{p_t} &> (1+\eta)^{j_t-1} - \frac{O(\eps^{-1}\eta^{-1}\log^2(n))}{p_t}.
    \end{align}
    If $i_t = 1$ then $p_t = 1$ and we are done.
    Thus consider the case where $i_t\geq 2$.
    We have shown above that $p_t\geq p_t' = 2^{-i_t}$ and $2^{i_t-1}\tau \leq E_t^*$ for $i_t\geq 2$.
    Thus in this case,
    \begin{align}
        \frac{O(\eps^{-1}\eta^{-1}\log^2(n))}{p_t} 
        &\leq \frac{O(\eps^{-1}\eta^{-1}\log^2(n)) E_t^*}{\tau} \\
        &= O(\eps^{-1}\eta^{-1}\log^2(n)) E_t^*\cdot \frac{\eps \eta^2}{d\log^2(n)} \\
        &\leq \eta E_t^*
    \end{align}
    assuming a sufficiently large constant $d$.
    
    Thus, combining the case work above with our concentration bounds on $s_t/p_t$, 
    we see that $(1+\eta)^{j_t}$ is a $\left( 1+ \eta, O\left( \eps^{-1}\eta^{-1}\log^2 n \right) \right)$-approximation of $E_t^*$.

\end{proof}

\begin{lemma}\label{lem:space-mm}
    \cref{alg:matching-sampling} requires $O\left( \frac{\log^2(n)\log(\tilde{\alpha})}{\epsilon \eta^2} \right)$ space, with high probability.
\end{lemma}

\begin{proof}
In terms of space complexity, we need $O\left( \frac{\log^2(n)\log(\tilde{\alpha})}{\epsilon \eta^2} \right)$ space to store the sampled edges and the two counters per edge. 
We check when the number of sampled edges exceeds $\frac{a_3 \log^2(n)}{\eps\eta^2}$ in~\cref{match:half-prob} using an SVT check. 
By the guarantees of SVT, the number of sampled edges is at most $\frac{a_3 \log^2(n)}{\eps\eta^2} + \frac{c'\log(n)}{\eps} = O\left(\frac{\log^2(n)}{\eps\eta^2}\right)$ for some large enough constant $c' > 0$,
with high probability, before it exceeds the threshold. Thus, with high probability, we store $O\left(\frac{\log^2(n)}{\eps\eta^2}\right)$ sampled edges and each 
edge uses $O(\log(\tilde{\alpha}))$ bits to store the counters, resulting in our stated space bounds.
\end{proof}

Now, combining~\cref{lem:e-alpha-mg,lem:sampling-approx,lem:space-mm} gives us our final accuracy guarantee in~\cref{thm:edge-dp-matching}.

\section{Proofs of \texorpdfstring{\Cref{thm:matching lower bound small epsilon,thm:connected components lower bound small epsilon}}{Theorems}}\label{apx:lower-bounds}
We now fill in the omitted details from \Cref{sec:lower-bounds}.

\subsection{Inner Product Queries}\label{apx:inner product queries}
Given a private database $y\in \set{0, 1}^n$
and a set of $k$ linear queries $q^{(j)}\in \set{0, 1}^n, j\in [d]$,
the \emph{inner product problem} asks for the values
\[
    \iprod{y, q^{(j)}}\in \Z_{\geq 0}\,, \qquad \forall j\in [k]\,.
\]
We use the following lower bound stated in \cite{jain2023counting}
that is based on \cite{Dinur2003Revealing,Dwork2007Price,MMNW11,De2012Lower}.
\begin{theorem}[Theorem 4.5 in \cite{jain2023counting}]\label{thm:inner product lower bound}
    There are constants $c_1, c_2 > 0$ such that, 
    for sufficiently large $n > 0$:
    if an algorithm $\mcal A$ answers $c_1n$ inner product queries within additive error $c_2\sqrt{n}$ with probability at least $0.99$,
    then $\mcal A$ is not $(1,\nicefrac13)$-DP.
\end{theorem}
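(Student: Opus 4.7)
The plan is to prove this by a reconstruction-style attack in the spirit of Dinur and Nissim, pushed through the standard fingerprinting/packing framework for $(\eps,\delta)$-DP lower bounds. Suppose for contradiction that $\mcal A$ is $(1,\nicefrac13)$-DP and, for every $y \in \set{0,1}^n$, simultaneously answers $k := c_1 n$ inner product queries $q^{(1)}, \dots, q^{(k)}$ within additive error at most $E := c_2\sqrt{n}$ with probability at least $0.99$. I would draw the queries $q^{(j)}$ independently and uniformly at random from $\set{0,1}^n$ (or, equivalently for the proof, from $\set{\pm 1}^n$ after an affine shift), run $\mcal A(y)$ to obtain noisy answers $\tilde a_j$, and then run the Dinur--Nissim decoder that searches for any $\hat y \in \set{0,1}^n$ satisfying $\abs{\iprod{\hat y, q^{(j)}} - \tilde a_j} \leq E$ for all $j \in [k]$. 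Such a $\hat y$ exists since the true $y$ is feasible with probability at least $0.99$.

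Next, I would invoke the reconstruction lemma: for $k = \Omega(n)$ uniformly random queries and $E = O(\sqrt{n})$, with probability $1 - o(1)$ over the choice of queries, every feasible $\hat y$ must agree with $y$ on a $(1-\gamma)$ fraction of coordinates, for a constant $\gamma > 0$ that shrinks as $c_1$ grows and $c_2$ shrinks. The proof is a Chernoff bound: for any fixed $y' \in \set{0,1}^n$ differing from $y$ in $\Omega(n)$ positions, the random variable $\iprod{y - y', q^{(j)}}$ is a sum of $\Omega(n)$ signed $\set{0, \pm 1}$ terms, so it concentrates at scale $\sqrt{n}$ and with constant probability exceeds $4E$ in absolute value; taking $k = c_1 n$ independent queries makes the failure probability exponentially small in $n$, enough to union-bound over all $y' \in \set{0,1}^n$ that differ from $y$ in more than $\gamma n$ places.

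Finally, I would combine the reconstructor with the DP guarantee to obtain a contradiction. By a standard packing / fingerprinting lower bound (e.g.\ Bun--Ullman--Vadhan, or the Dwork--McSherry--Talwar blatant-non-privacy framework in its $(\eps,\delta)$-DP form), no $(1,\nicefrac13)$-DP mechanism can admit a decoder that, on a uniformly random $y \in \set{0,1}^n$, outputs $\hat y$ agreeing with $y$ on more than a $(1-\gamma_0)$ fraction of coordinates with probability bounded away from zero, for some universal constant $\gamma_0 > 0$. Averaging the reconstruction guarantee over uniform $y$ and over the random queries, my attack exhibits exactly such a decoder with agreement $1-\gamma$; choosing $c_1$ large enough and $c_2$ small enough forces $\gamma < \gamma_0$, a contradiction.

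The main obstacle is calibrating the constants $c_1$ and $c_2$ so that the reconstruction step produces a bit-level accuracy $1-\gamma$ that provably exceeds what is achievable under $(1,\nicefrac13)$-DP. The constant $\delta = \nicefrac13$ is non-negligible, so the packing bound I invoke must tolerate such a large failure parameter; this is why one uses the stronger DMT/Bun--Ullman type impossibility rather than a pure-DP argument. Since the target statement is exactly Theorem~4.5 of \cite{jain2023counting}, I would adopt their parameter choices and concentration bookkeeping verbatim rather than re-optimizing, and simply verify that the two ingredients (Chernoff-based reconstruction, and the DP anti-reconstruction bound) glue together at the stated $(\eps,\delta) = (1,\nicefrac13)$.
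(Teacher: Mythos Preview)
The paper does not prove this statement at all: it is stated as an imported result (Theorem~4.5 in \cite{jain2023counting}, itself building on \cite{Dinur2003Revealing,Dwork2007Price,MMNW11,De2012Lower}) and then used as a black box in the reductions of \Cref{sec:lower-bounds}. So there is no ``paper's own proof'' to compare against.

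That said, your outline is the standard and correct route to this lower bound. The two ingredients you identify---(i) the Dinur--Nissim reconstruction lemma showing that $\Theta(n)$ random $\set{0,1}^n$ queries answered to additive error $O(\sqrt n)$ pin down $y$ up to a small constant fraction of coordinates, and (ii) an anti-reconstruction statement asserting that no $(1,\nicefrac13)$-DP mechanism admits a post-processing decoder recovering a $1-\gamma_0$ fraction of a uniformly random $y$---are exactly the pieces used in the cited references. One small point worth tightening in your write-up: the anti-reconstruction step is cleanest phrased per-coordinate. From any decoder that recovers a $(1-\gamma)$-fraction of bits with probability $0.99$, one obtains by averaging a single-bit predictor that is correct with probability at least $0.99(1-\gamma)$ on a uniformly random index; then the $(\eps,\delta)$-DP constraint (applied to the neighboring pair that flips that one bit) bounds this success probability by $\frac{e^\eps+\delta}{1+e^\eps}$, which for $(\eps,\delta)=(1,\nicefrac13)$ is strictly below $1$. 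Choosing $c_1$ large and $c_2$ small drives $\gamma$ below the resulting gap. This is equivalent to what you wrote but avoids having to invoke a separate packing theorem for non-negligible $\delta$.
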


\subsection{Maximum Cardinality Matching}\label{apx:matching lower bound}
We now restate and prove the lower bound for computing the size of a maximum matching in the fully-dynamic continual release model.
\matchingLowerBoundSmallEpsilon*

\subsubsection{Description of Reduction}
We reduce to the inner product query problem similar to \cite{jain2023counting}.
Let $n$ be the dimension of a secret dataset $y\in \set{0, 1}^n$.
Consider $3n$ vertices $V = \set{u_i, v_i, w_i: i\in [n]}$.
We use the first $\Theta(n)$ updates to encode the non-zero bits of $y$ as a matching between the vertices $u_i, v_i$,
i.e. we add the edge $\set{u_i, v_i}$ if and only if $y_i = 1$.
Note that the size of the maximum matching is precisely $\norm{y}_0$ by construction. 

Given linear queries $q^{(1)}\in \set{0, 1}^n$,
we make the following updates:
For each $i$ such that $q_i^{(1)} = 1$,
add the edge $\set{v_i, w_i}$.
Then the size of the maximum matching is $\norm{y\mid q^{(1)}}_0$,
where $a\mid b$ denotes the bitwise OR of $a, b\in \set{0, 1}^n$.
The inner product can then be recovered using the inclusion-exclusion principle:
\[
    \iprod{y, q^{(1)}}
    = \norm{y}_0 + \norm{q^{(1)}}_0 - \norm{y\mid q^{(1)}}_0.
\]
We can then delete the added edges and process the next query $q^{(2)}$
and so on.

Since emulating each query requires $\Theta(n)$ updates,
we can answer $\Theta(n)$ inner product queries after $T = \Theta(n^2)$ updates.
At a high level,
since the error lower bound for $O(n)$ inner product queries is $\Omega(\sqrt{n})$,
we have a lower bound of $\Omega(\min(n^{1/2}, T^{1/4}))$.

\begin{algorithm}[htp!]
\caption{Reduction from Inner Product Queries to Maximum Matching}\label{alg:inner product to matching}
\SetKwFunction{FAlg}{Alg}
\SetKwProg{Fn}{Function}{}{}
\Fn{\FAlg{private dataset $y\in \set{0, 1}^n$, public queries $q^{(1)}, \dots, q^{(k)}\in \set{0, 1}^n$, DP mechanism for matchings $\mcal M$}}{
  $V \gets \set{u_i, v_i, w_i: i\in [n]}$ ($3n$ vertices) \\
  $E_0 \gets \varnothing$ empty edge set \\
  $S^{(0)} \gets$ empty update stream of length $n$ \\
  $S^{(1)}, \dots, S^{(k)} \gets$ $k$ empty update streams each of length $2n$ \\
  \For{$i=1, 2, \dots, n$}{
    \If {$y_i = 1$} {
        $S_i^{(0)} \gets +\set{u_i, v_i}$ (insert edge $\set{u_i, v_i}$) \\
    }
  }
  \For{$j=1, 2, \dots, k$} {
    \For{$i=1, 2, \dots, n$} {
        \If{$q^{(j)}_i = 1$} {
            $S_{i}^{(j)}\gets +\set{v_i, w_i}$ \\
            $S_{n+i}^{(j)}\gets -\set{v_i, w_i}$ (delete edge $\set{v_i, w_i}$) \\
        }
    }
  }
  $S \gets S^{(0)} + S^{(1)} + \dots + S^{(k)}$ concatenation of all streams \\
  $r^{(0)}, r^{(1)}, \dots, r^{(k)} \gets \mcal A(S)$ answers to queries \\
  \For{$j=1, 2, \dots, k$} {
    Output $\norm{q^{(j)}}_0 + r_n^{(0)} - r_n^{(j)}$ as approximate answer to $\iprod{q^{(j)}, y}$ \\
    \Comment{$\iprod{q^{(j)}, y} = \norm{q}_0 + \norm{y}_0 - \norm{q^{(j)}\mid y}_0$} \\
  }
}
\end{algorithm}

\subsubsection{Proof of Lower Bound}
\begin{lemma}\label{lem:inner product to matching}
    Given an $\varepsilon$-DP mechanism for maximum matching in the continual release setting
    that outputs estimates within additive error at most $\zeta$ with probability $0.99$
    for a fully-dynamic stream of length $T\geq n+2nk$,
    \Cref{alg:inner product to matching} is an $\varepsilon$-DP mechanism for answering $k$ inner product queries within additive error $2\zeta$ with probability $0.99$.
\end{lemma}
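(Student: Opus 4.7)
The proof naturally splits into a privacy part and a utility part. For privacy, I will argue that the reduction maps neighboring inner-product datasets $y\sim y'$ to edge-neighboring streams, so that the differential privacy of $\mcal M$ transfers directly. Concretely, if $y,y'\in\set{0,1}^n$ differ only in bit $i$, then in the construction of Algorithm~\ref{alg:inner product to matching} all sub-streams $S^{(1)},\dots,S^{(k)}$ are identical (they depend only on the public queries $q^{(j)}$), and $S^{(0)}$ differs only at position $i$: one stream has the update $+\{u_i,v_i\}$ while the other has $\bot$. Hence $S$ and $S'$ are edge-neighboring in the sense of \cref{def:neighboring-streams}. Since $\mcal M$ is $\varepsilon$-edge DP, the distributions of $\mcal M(S)$ and $\mcal M(S')$ are $\varepsilon$-indistinguishable. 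The final outputs $\norm{q^{(j)}}_0+r_n^{(0)}-r_n^{(j)}$ are a (public) deterministic function of $\mcal M(S)$ and the public queries, so by post-processing Algorithm~\ref{alg:inner product to matching} is $\varepsilon$-DP as a mechanism on $y$.

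For utility, I will verify that the true size of the maximum matching at the probed timesteps encodes exactly the quantities needed to recover the inner products via inclusion-exclusion. After processing $S^{(0)}$ the graph consists of the disjoint matching $\{\{u_i,v_i\}:y_i=1\}$, so its maximum matching has size $\norm{y}_0$. After processing the first $n$ updates of $S^{(j)}$ (all deletions from previous sub-streams having already been issued), the graph consists of the $y$-edges $\{u_i,v_i\}$ together with the $q^{(j)}$-edges $\{v_i,w_i\}$; since any $i$ with $y_i\vee q^{(j)}_i=1$ contributes exactly one edge to a maximum matching (choose $\{u_i,v_i\}$ if $y_i=1$, else $\{v_i,w_i\}$) and these are vertex disjoint across $i$, the maximum matching size is $\norm{y\mid q^{(j)}}_0$. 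Thus, letting $r_n^{(0)}$ and $r_n^{(j)}$ denote the algorithm's estimates at these timesteps, the event that every released estimate is within additive error $\zeta$ (which holds with probability at least $0.99$ by assumption on $\mcal M$) implies simultaneously
\[
    \abs{r_n^{(0)}-\norm{y}_0}\leq \zeta,\qquad \abs{r_n^{(j)}-\norm{y\mid q^{(j)}}_0}\leq \zeta\quad \text{for all }j\in[k].
\]

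Conditioned on this event, the triangle inequality and the identity $\iprod{y,q^{(j)}}=\norm{y}_0+\norm{q^{(j)}}_0-\norm{y\mid q^{(j)}}_0$ give
\[
    \abs[\big]{\left(\norm{q^{(j)}}_0+r_n^{(0)}-r_n^{(j)}\right)-\iprod{y,q^{(j)}}}
    \leq \abs{r_n^{(0)}-\norm{y}_0}+\abs{r_n^{(j)}-\norm{y\mid q^{(j)}}_0}
    \leq 2\zeta
\]
for every $j\in [k]$, which is the claimed additive error bound with the claimed success probability. No separate union bound is needed because the assumption on $\mcal M$ already bounds the error uniformly across all timesteps of the stream with probability $0.99$.

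The main thing to get right is the indexing: I need to pin down which timestep in the concatenated stream $S=S^{(0)}+S^{(1)}+\cdots+S^{(k)}$ corresponds to each $r_n^{(j)}$, and to confirm that at that timestep all previously-inserted $\{v_i,w_i\}$ edges from earlier sub-streams have indeed been deleted, so that the maximum matching cleanly equals $\norm{y\mid q^{(j)}}_0$. This is a straightforward bookkeeping check given the structure of the $S^{(j)}$ (insertions in the first $n$ slots, matching deletions in the next $n$), but it is where any error in the reduction would most likely hide.
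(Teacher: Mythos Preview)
Your proposal is correct and follows essentially the same approach as the paper; in fact, your version is more complete, since the paper's proof focuses only on the utility calculation (verifying $r_n^{(0)}$ estimates $\norm{y}_0$, $r_n^{(j)}$ estimates $\norm{y\mid q^{(j)}}_0$, and combining via inclusion--exclusion) while you also spell out the privacy argument (edge-neighboring streams from bit-neighboring $y$'s, then post-processing) and the structural verification of the matching sizes at the relevant timesteps.
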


\begin{pf}[\Cref{lem:inner product to matching}]
    Let $\mcal M$ be such a $\varepsilon$-DP mechanism
    and suppose we are given a private dataset $y\in \set{0, 1}^n$
    as well as public linear queries $q^{(1)}, \dots, q^{(k)}$.
    Let $S^{(0)}, S^{(1)}, \dots, S^{(k)}$ be the partial update streams as in \Cref{alg:inner product to matching}
    and
    \[
        (r^{(0)}, r^{(1)}, \dots, r^{(k)})
        = \mcal M(S^{(0)} + S^{(1)} + \dots + S^{(k)})
    \]
    be the corresponding stream of outputs when we run $\mcal M$ on the concatenation of partial streams.
    By assumption,
    $r^{{(0)}}_n$ is an estimate of $\norm{y}_0$ within additive error $\zeta$
    and $r^{(j)}_n$ is an estimate of $\norm{y\mid q^{(j)}}_0$ within additive error $\zeta$.
    Hence the value of
    \[
        \norm{q^{(j)}}_0 + r_n^{(0)} - r_n^{(j)}
    \]
    is an estimate of $\iprod{y, q^{(j)}}$ with additive error within $2\zeta$.
\end{pf}

\begin{theorem}\label{thm:matching lower bound}
    If $\mcal A$ is a $1$-DP mechanism that answers maximum matching queries on graphs with $n$ vertices in the continual release model within additive error $\zeta$
    with probability at least $0.99$
    for fully dynamic streams of length $T$,
    then
    \[
        \zeta
        = \Omega(\min(\sqrt{n}, T^{1/4})).
    \]
    Moreover,
    we may assume the graph is bipartite,
    has maximum degree 2,
    and arboricity 1.
\end{theorem}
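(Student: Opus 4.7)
The plan is to combine the reduction in \Cref{lem:inner product to matching} with the inner product lower bound in \Cref{thm:inner product lower bound}, tuning the dimension parameter of the inner product instance to optimize against both $n$ and $T$. Let $\mcal A$ be a $1$-DP matching mechanism with additive error $\zeta$ on $n$-vertex graphs, valid for streams of length $T$. By \Cref{lem:inner product to matching}, invoking $\mcal A$ on a stream produced by \Cref{alg:inner product to matching} with inner product dimension $n'$ and $k$ queries yields a $1$-DP mechanism answering all $k$ queries within additive error $2\zeta$ with probability at least $0.99$, provided both $3n' \leq n$ (the graph has $3n'$ vertices) and $n' + 2n'k \leq T$ (the concatenated stream fits).

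The key step is to choose $n'$ as large as possible subject to these constraints while still allowing $k \geq c_1 n'$ queries, so that \Cref{thm:inner product lower bound} is applicable. Setting $k = \lceil c_1 n' \rceil$ and solving $n' + 2 c_1 (n')^2 \leq T$ gives $n' = \Theta(\sqrt{T})$. Combined with the vertex-budget constraint $n' \leq n/3$, I would take
\[
    n' := \left\lfloor \min\!\left( \tfrac{n}{3},\ \sqrt{\tfrac{T}{4 c_1}} \right) \right\rfloor.
\]
If the resulting mechanism for $c_1 n'$ inner product queries had additive error $2\zeta < c_2 \sqrt{n'}$, then \Cref{thm:inner product lower bound} (whose $(1, \tfrac13)$-DP bound certainly rules out any $1$-DP algorithm achieving that error) would be violated. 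Therefore $\zeta = \Omega(\sqrt{n'}) = \Omega(\min(\sqrt{n}, T^{1/4}))$, which is the claimed bound (the trivial bounds $\zeta = O(n)$ and $\zeta = O(T)$ are absorbed into the $\min$ statement of \Cref{thm:matching lower bound small epsilon}).

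For the ``moreover'' clause, I would simply inspect the construction in \Cref{alg:inner product to matching}: the only edges ever inserted are of the form $\{u_i, v_i\}$ or $\{v_i, w_i\}$, so the graph is bipartite with parts $\{u_i, w_i : i \in [n']\}$ and $\{v_i : i \in [n']\}$. Each $u_i$ and $w_i$ has degree at most $1$, and each $v_i$ has degree at most $2$, giving maximum degree $2$. The graph is a disjoint union of paths of length at most $2$, hence a forest, hence has arboricity $1$.

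The main obstacle is arguably bookkeeping rather than conceptual: one must ensure the parameter $n'$ is large enough that $c_1 n'$ queries is a nontrivial number (handled by taking $n$ sufficiently large as in \Cref{thm:inner product lower bound}) and that the reduction's $0.99$ success probability and $2\zeta$ error slack are compatible with the hypotheses of \Cref{thm:inner product lower bound} after absorbing constants. A minor subtlety is that \Cref{thm:inner product lower bound} is stated for $(1, \tfrac13)$-DP, but since every $\varepsilon$-DP mechanism is $(\varepsilon, \delta)$-DP for any $\delta \geq 0$, a $1$-DP mechanism is in particular $(1, \tfrac13)$-DP, so the implication goes through.
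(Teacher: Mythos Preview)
Your proposal is correct and follows essentially the same route as the paper: both instantiate \Cref{lem:inner product to matching} with an inner-product dimension $n' = \Theta(\min(n,\sqrt{T}))$ so that the vertex and stream-length budgets are met, then invoke \Cref{thm:inner product lower bound} to force $2\zeta \geq c_2\sqrt{n'}$. Your version is in fact slightly more explicit than the paper's, both in tracking the $3n'\leq n$ vertex constraint and in verifying the bipartite/degree/arboricity claims from the structure of \Cref{alg:inner product to matching}.
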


\begin{pf}[\Cref{thm:matching lower bound}]
    Let $c_1, c_2$ be the constants from \Cref{thm:inner product lower bound}.
    Fix $n\in \N$ and suppose $T\geq n+2c_1n^2$.
    By \Cref{lem:inner product to matching},
    $\mcal A$ implies a $1$-DP mechanism for answering $k=c_1 n$ inner product queries within additive error $2\zeta$ with probability $0.99$.
    But then by \Cref{thm:inner product lower bound},
    we must have $\zeta \geq c_2\sqrt{n}/2$.

    Now consider the regime of $T < n+2c_1n^2$.
    The same argument holds for $\tilde n = \Theta(\sqrt{T})$ such that $\tilde n + 2c_1 \tilde n^2\leq T$
    to yield a lower bound of $\zeta \geq c_2\sqrt{\tilde n} = \Omega(T^{1/4})$.
\end{pf}

\begin{lemma}\label{lem:small epsilon matching to matching}
    Let $\varepsilon \in (0, 1)$,
    $\ell := \floor{\nicefrac1\varepsilon}$,
    $\tilde T\geq \ell$,
    $\zeta: \R\times \R\to \R$ be an increasing error function.
    Suppose there is an $\varepsilon$-DP mechanism $\mcal A$ for answering maximum matching queries on graphs of $\tilde n$ vertices in the continual release model
    that outputs estimates within additive error at most
    \[
        \ell\cdot \zeta(\nicefrac{\tilde n}\ell, \nicefrac{\tilde T}\ell)
        = O\left( \frac{\zeta(\varepsilon \tilde n, \varepsilon \tilde T)}\varepsilon \right)
    \]
    with probability $0.99$ for fully-dynamic streams of length $\tilde T$.

    Let $T = \nicefrac{\tilde T}\ell$,
    and $n = \nicefrac{\tilde n}\ell$.
    There is a $1$-DP mechanism for answering maximum matching queries on graphs of $n$ vertices in the continual release model
    that outputs estimates within additive error at most $\zeta(n, T)$ with probability $0.99$
    for fully-dynamic streams of length $T$.
\end{lemma}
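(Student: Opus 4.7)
The plan is to use a standard \emph{$\ell$-fold copying} trick together with the group privacy theorem (\Cref{thm:group-privacy}) to boost a small-$\varepsilon$ mechanism on a large instance into a $1$-DP mechanism on a smaller one. Given a stream $S$ of length $T$ on $n$ vertices for which we want a $1$-DP matching mechanism, I would partition the $\tilde n = \ell n$ vertices into $\ell$ disjoint groups $V^{(1)}, \dots, V^{(\ell)}$, each of size $n$, together with a fixed bijection from $V$ onto each $V^{(i)}$. Then I construct a stream $\tilde S$ of length $\tilde T = \ell T$ by replacing each update $u_t$ of $S$ with the block of $\ell$ updates obtained by applying the analogous edge insertion/deletion in each of the $\ell$ copies (for empty updates just emit $\ell$ empty updates). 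Running $\mcal A$ on $\tilde S$ yields an estimate at every timestamp; our reduced mechanism outputs, at each original timestamp $t\in [T]$, the estimate produced by $\mcal A$ after the $\ell t$-th update of $\tilde S$, divided by $\ell$.

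For the utility analysis, observe that after processing the first $\ell t$ updates of $\tilde S$ the induced graph is exactly $\ell$ disjoint copies of $G_t$, so its maximum matching has size exactly $\ell \cdot M(G_t)$. By assumption $\mcal A$ returns this quantity within additive error $\ell \cdot \zeta(\tilde n/\ell, \tilde T/\ell) = \ell \cdot \zeta(n, T)$ with probability $0.99$, and dividing by $\ell$ yields an estimate of $M(G_t)$ with additive error $\zeta(n, T)$. A union bound is not needed across $t$ since the single high-probability event on $\mcal A(\tilde S)$ already covers all outputs.

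For the privacy analysis, notice that if $S\sim S'$ are two edge-neighboring streams differing only in the update at some time $t^*$, then the corresponding $\tilde S\sim_\ell \tilde S'$ differ in exactly $\ell$ updates (the block of $\ell$ copied updates at positions $\ell(t^*-1)+1, \dots, \ell t^*$). By \Cref{thm:group-privacy} applied $\ell$ times, $\mcal A(\tilde S)$ and $\mcal A(\tilde S')$ are within a multiplicative factor of $e^{\ell \varepsilon}$; since $\ell = \floor{1/\varepsilon}\leq 1/\varepsilon$, we have $\ell\varepsilon\leq 1$, so the composed mechanism is $1$-DP. Post-processing (dividing by $\ell$ and restricting to timestamps $\ell t$) does not affect this guarantee.

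The construction itself is routine, so the only real subtlety I anticipate is bookkeeping: making sure that $\tilde T = \ell T$ fits into the stated $\tilde T \geq \ell$ setup (which forces $T\geq 1$), that the error bound $\ell\cdot \zeta(\tilde n/\ell,\tilde T/\ell)$ precisely simplifies to $\ell\cdot \zeta(n,T)$ under the substitutions $n = \tilde n/\ell$ and $T=\tilde T/\ell$, and that the per-timestep outputs we select correspond to moments when all $\ell$ parallel copies are synchronized (which is why we emit the $\ell$ copied updates contiguously rather than interleaving them across different original timestamps). Once these accounting details are pinned down, the result follows immediately from group privacy applied to $\mcal A$.
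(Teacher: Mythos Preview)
Your proposal is correct and essentially identical to the paper's own proof: both construct $\tilde S$ by making $\ell$ disjoint copies of the vertex set and replacing each original update by a contiguous block of $\ell$ copied updates, then output $\mcal A(\tilde S)_{\ell t}/\ell$ and invoke group privacy to convert the resulting $\ell\varepsilon$-DP guarantee into $1$-DP. The bookkeeping points you flag (synchronization at times $\ell t$, the matching size being exactly $\ell\cdot M(G_t)$, and the error simplification) are precisely the details the paper handles, so there is nothing further to add.
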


\begin{pf}[\Cref{lem:small epsilon matching to matching}]
    Let $S$ be an edge update stream of length $T$ from a graph on $n$ vertices.
    
    We construct an edge update stream of length $\tilde T = \ell T$ from a graph on $\tilde n=\ell n$ vertices as follows:
    Duplicate each vertex $v$ in the graph $\ell$ times into $v_1, \dots, v_\ell$. 
    For each update $S_i\in \set{+\set{u, v}, -\set{u, v}, \perp}$ from $S$,
    define the length $\ell$ partial stream
    \[
        S^{(i)} :=
        \begin{cases}
            +\set{u_1, v_1}, +\set{u_2, v_2}, \dots, +\set{u_\ell, v_\ell}, &S_i = +\set{u, v} \\
            -\set{u_1, v_1}, -\set{u_2, v_2}, \dots, -\set{u_\ell, v_\ell}, &S_i = -\set{u, v} \\
            \perp, \perp, \dots, \perp, &S_i = \perp.
        \end{cases}
    \]
    Here $u_i, v_i$ are the copies of the original vertices $u, v$.
    Then we take the new stream $\tilde S := S^{(1)} + \dots + S^{(T)}$ to be the concatenation of all partial streams.

    Let $\tilde r = \mcal A(\tilde S)$ be the result of the algorithm on $\tilde S$
    and output $r = (\nicefrac{\tilde r_\ell}\ell, \nicefrac{\tilde r_{2\ell}}\ell, \dots, \nicefrac{\tilde r_{T\ell}}\ell)$ as the query answers to the original stream $T$.

    By construction,
    the size of the maximum matching $\mu(\tilde S_{i\ell})$ in the new stream at time $i\ell, i\in [T]$ is exactly $\ell\cdot \mu(S_i)$ 
    in the original stream at time $i$.
    Thus by assumption,
    our output has additive error within
    \[
        \frac{\zeta(\varepsilon\cdot \ell n, \varepsilon\cdot \ell T)}{\varepsilon \ell}
        = O( \zeta(n, T) )
    \]
    with probability $0.99$.
    Any neighboring streams from $S$ becomes $\ell$-neighboring streams from $\tilde S$
    and thus by group privacy,
    our output is $1$-DP.
\end{pf}

We are now ready to prove \Cref{thm:matching lower bound small epsilon}.
\begin{pf}[\Cref{thm:matching lower bound small epsilon}]
    In the regime $T\geq \nicefrac1\varepsilon$,
    combining \Cref{lem:small epsilon matching to matching} and \Cref{thm:matching lower bound} implies a lower bound of
    \[
        \frac1\varepsilon \Omega(\min(\sqrt{\varepsilon n}, (\varepsilon T)^{1/4}))
        = \Omega\left( \min\left( \sqrt{\frac{n}\varepsilon}, \frac{T^{1/4}}{\varepsilon^{3/4}} \right) \right).
    \]

    In the second regime $T<\nicefrac1{\varepsilon}, \nicefrac{n}2$,
    we prove a lower bound of $\Omega(T)$.
    Suppose towards a contradiction that there is an $\varepsilon$-DP mechanism in this regime with additive error within $\zeta = \nicefrac{T}4$.
    Let $r, \tilde r$ be the outputs of $\mcal A$ on the empty update stream of length $T$
    and the stream that adds an edge of a fixed perfect matching at every time step,
    respectively.
    By the accuracy of $\mcal A$,
    we have $\prob[r_T\leq \nicefrac{T}4] \geq 0.99$.
    By $\varepsilon$-DP and group privacy,
    we have
    \[
        \prob[\tilde r_T > \nicefrac{T}4]
        \leq e^{\varepsilon T} \prob[r_T > \nicefrac{T}4]
        \leq 0.01e
        < 0.99.
    \]
    But then since the true matching size in the second stream at time $T$ is $T$,
    $\mcal A$ does not have additive error at most $\zeta = \nicefrac{T}4$ with probability at least $0.99$.
    By contradiction.
    it follows that we must have $\zeta = \Omega(T)$.

    In the final regime $\nicefrac{n}2 \leq T < \nicefrac1{\varepsilon}$,
    we show a lower bound of $\Omega(n)$.
    Simply apply our second argument with $\tilde T = \nicefrac{n}2$ to arrive at a lower bound of $\Omega(\tilde T) = \Omega(n)$.
\end{pf}

\subsection{Connected Components}\label{apx:connected components lower bound}
We now prove \Cref{thm:connected components lower bound small epsilon},
whose statement we repeat below for convenience.
\connectedComponentsLowerBoundSmallEpsilon*

\subsubsection{Description of Reduction}
Using a similar technique,
we show a lower bound for the problem of privately estimating the number of connected components.
In particular,
we encode a single bit of a private database $y\in \set{0, 1}^n$ using a subgraph on 4 vertices $u, v, a, b$.
The edges $\set{u, a}, \set{v, b}$ always exist but $\set{a, b}$ exists if and only if the bit is non-zero.
Then notice that adding the edge $\set{u, v}$ decreases the number of connected components
if and only if the bit is zero.

\begin{algorithm}[htp!]
\caption{Reduction from Inner Product Queries to Connected Components}\label{alg:inner product to connected components}
\SetKwFunction{FAlg}{Alg}
\SetKwProg{Fn}{Function}{}{}
\Fn{\FAlg{private dataset $y\in \set{0, 1}^n$, public queries $q^{(1)}, \dots, q^{(k)}\in \set{0, 1}^n$, DP mechanism for connected components $\mcal M$}}{
  $V \gets \set{u_i, v_i, a_i, b_i: i\in [n]}$ ($4n$ vertices) \\
  $E_0 \gets \varnothing$ empty edge set \\
  $S^{(0)} \gets$ empty update stream of length $3n$ \\
  $S^{(1)}, \dots, S^{(k)} \gets$ $k$ empty update streams each of length $2n$ \\
  \For{$i=1, 2, \dots, n$}{
  $S_i^{(0)} \gets +\set{u_i, a_i}$ (insert edge $\set{u_i, a_i}$) \\
  $S_{n+i}^{(0)} \gets +\set{v_i, b_i}$ \\
    \If {$y_i = 1$} {
        $S_{2n+i}^{(0)} \gets +\set{a_i, b_i}$ \\
    }
  }
  \For{$j=1, 2, \dots, k$} {
    \For{$i=1, 2, \dots, n$} {
        \If{$q^{(j)}_i = 1$} {
            $S_{i}^{(j)}\gets +\set{u_i, v_i}$ \\
            $S_{n+i}^{(j)}\gets -\set{u_i, v_i}$ (delete edge $\set{u_i, v_i}$) \\
        }
    }
  }
  $S \gets S^{(0)} + S^{(1)} + \dots + S^{(k)}$ concatenation of all streams \\
  $r^{(0)}, r^{(1)}, \dots, r^{(k)} \gets \mcal A(S)$ answers to queries \\
  \For{$j=1, 2, \dots, k$} {
    Output $\norm{q^{(j)}}_0 + r^{{(j)}}_{n} - r^{(0)}_{3n}$ \\
  }
}
\end{algorithm}

\subsubsection{Proof of Lower Bound}
\begin{lemma}\label{lem:inner product to connected components}
    Given an $\varepsilon$-DP mechanism for connected components in the continual release setting
    that outputs estimates within additive error at most $\zeta$ with probability $0.99$
    for a fully-dynamic stream of length $T\geq 3n+2nk$,
    \Cref{alg:inner product to connected components} is an $\varepsilon$-DP mechanism for answering $k$ inner product queries within additive error $2\zeta$ with probability $0.99$.
\end{lemma}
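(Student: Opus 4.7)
The plan is to mirror the maximum matching reduction (\Cref{lem:inner product to matching}) by exactly computing the number of connected components at two carefully chosen timestamps and using inclusion--exclusion to isolate $\iprod{q^{(j)}, y}$. Privacy will then follow immediately by post-processing, and the factor of $2\zeta$ in the error will come from combining two $\zeta$-accurate coordinates of $\mcal M$'s output.

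First, I will verify the component-count identity in the noise-free case. At the end of $S^{(0)}$ (time $3n$), the graph decomposes into $n$ disjoint blocks, one on each vertex set $\{u_i, v_i, a_i, b_i\}$: block $i$ is a single component when $y_i = 1$, because the edges $\{u_i, a_i\}, \{a_i, b_i\}, \{b_i, v_i\}$ form a path, and consists of two components otherwise. The total is therefore $2n - \norm{y}_0$. At the midpoint of $S^{(j)}$ (after all $n$ insertions but before any of the $n$ deletions), the extra edges $\{u_i, v_i\}$ have been inserted exactly for those $i$ with $q^{(j)}_i = 1$, and each such edge merges two components if and only if $y_i = 0$. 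Thus the exact count at that moment equals
\[
    2n - \norm{y}_0 - \bigl(\norm{q^{(j)}}_0 - \iprod{q^{(j)}, y}\bigr).
\]
Subtracting the value at time $3n$ and adding $\norm{q^{(j)}}_0$ leaves precisely $\iprod{q^{(j)}, y}$, matching what \Cref{alg:inner product to connected components} reports in the exact-oracle regime.

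Next, I will upgrade to the noisy setting. The plan is to condition on the $0.99$-probability event that every coordinate of $\mcal M(S)$ lies within $\zeta$ of the true count; both $r^{(0)}_{3n}$ and each $r^{(j)}_n$ are then $\zeta$-accurate, and a single application of the triangle inequality yields additive error at most $2\zeta$ simultaneously on all $k$ reported inner products. No union bound is needed because the accuracy assumption on $\mcal M$ is already a joint guarantee over the whole output vector.

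Finally, for privacy I will observe that \Cref{alg:inner product to connected components} is a deterministic post-processing of $\mcal M(S)$, while the map $y \mapsto S$ sends neighboring databases $y \sim y'$ to edge-neighboring streams: they differ only in whether position $2n+i$ of $S^{(0)}$ is $+\{a_i, b_i\}$ or $\bot$, and the query-induced portions of the stream depend only on the public $q^{(j)}$. The $\varepsilon$-edge-DP guarantee of $\mcal M$ therefore transfers directly to the reduction. The only subtlety worth spelling out carefully is the indexing convention, namely that $r^{(j)}_n$ must refer to the timestamp immediately after the $n$ insertions of $\{u_i, v_i\}$ but before their $n$ deletions within $S^{(j)}$, which is exactly when the component count encodes the desired inner product; I do not anticipate any genuine obstacle beyond this bookkeeping.
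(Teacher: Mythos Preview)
Your proposal is correct and follows essentially the same approach as the paper: both compute the exact number of connected components at time $3n$ (namely $2n-\norm{y}_0$) and after the $n$ insertions in $S^{(j)}$ (namely $2n-\norm{y\mid q^{(j)}}_0$), then apply inclusion--exclusion to recover $\iprod{y,q^{(j)}}$ with error at most $2\zeta$. Your write-up is slightly more detailed than the paper's in spelling out why adding $\{u_i,v_i\}$ merges components exactly when $y_i=0$, and in explicitly noting the post-processing/privacy argument, but the underlying reduction is identical.
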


\begin{pf}[\Cref{lem:inner product to connected components}]
    Let $\mcal M$ be such a $\varepsilon$-DP mechanism
    and suppose we are given a private dataset $y\in \set{0, 1}^n$
    as well as public linear queries $q^{(1)}, \dots, q^{(k)}$.
    Let $S^{(0)}, S^{(1)}, \dots, S^{(k)}$ be the partial update streams as in \Cref{alg:inner product to connected components}
    and
    \[
        (r^{(0)}, r^{(1)}, \dots, r^{(k)})
        = \mcal M(S^{(0)} + S^{(1)} + \dots + S^{(k)})
    \]
    be the corresponding stream of outputs when we run $\mcal M$ on the concatenation of partial streams.
    By assumption,
    $(2n-r^{{(0)}}_{3n})$ is an estimate of $\norm{y}_0$ within additive error $\zeta$
    and $(2n-r^{(j)}_n)$ is an estimate of $\norm{y\mid q^{(j)}}_0$ within additive error $\zeta$.
    Hence the value of
    \[
        \norm{q^{(j)}}_0 + (2n-r^{{(0)}}_{3n}) - (2n-r^{(j)}_n)
        = \norm{q^{(j)}}_0 + r^{{(j)}}_{n} - r^{(0)}_{3n}
    \]
    is an estimate of $\iprod{y, q^{(j)}}$ with additive error within $2\zeta$.
\end{pf}

\begin{theorem}\label{thm:connected components lower bound}
    If $\mcal A$ is a $1$-DP mechanism that answers connected component queries on graphs with $n$ vertices in the continual release model within additive error $\zeta$
    with probability at least $0.99$
    for fully dynamic streams of length $T$,
    then
    \[
        \zeta
        = \Omega(\min(\sqrt{n}, T^{1/4})).
    \]
    Moreover,
    we may assume the graph is bipartite,
    has maximum degree 2,
    and arboricity 2.
\end{theorem}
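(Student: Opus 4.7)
\begin{pf}[\Cref{thm:connected components lower bound}]
The plan is to mirror the argument for \Cref{thm:matching lower bound}, using \Cref{lem:inner product to connected components} in place of \Cref{lem:inner product to matching} to convert a hypothetical DP connected components mechanism into a DP mechanism for inner product queries, and then invoking the inner product lower bound \Cref{thm:inner product lower bound}.

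Let $c_1, c_2$ be the constants from \Cref{thm:inner product lower bound}. First, I would handle the ``long stream'' regime: suppose $T \geq 3n + 2c_1 n^2$. Then by \Cref{lem:inner product to connected components} applied with $k = c_1 n$, the mechanism $\mcal A$ yields a $1$-DP algorithm answering $c_1 n$ inner product queries over a database $y \in \set{0,1}^n$ within additive error $2\zeta$ with probability $0.99$. Combined with \Cref{thm:inner product lower bound}, this forces $2\zeta \geq c_2\sqrt{n}$, and hence $\zeta = \Omega(\sqrt{n})$.

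Next, I would handle the ``short stream'' regime $T < 3n + 2c_1 n^2$ by rescaling. Pick $\tilde n = \Theta(\sqrt{T})$ large enough so that $3\tilde n + 2c_1 \tilde n^2 \leq T$; padding the remaining updates with $\perp$ does not affect the reduction. Applying the argument above on $\tilde n$ vertices yields $\zeta = \Omega(\sqrt{\tilde n}) = \Omega(T^{1/4})$. Combining the two regimes gives the desired bound $\zeta = \Omega(\min(\sqrt{n}, T^{1/4}))$.

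Finally, for the structural claim, I would verify directly from the construction in \Cref{alg:inner product to connected components}. The vertex set partitions as $\set{u_i, b_i : i \in [n]} \sqcup \set{v_i, a_i : i \in [n]}$, and every edge type used (namely $\set{u_i, a_i}, \set{v_i, b_i}, \set{a_i, b_i}, \set{u_i, v_i}$) crosses the partition, so the graph is bipartite. Each of the four vertices in an $i$-th ``gadget'' has at most two incident edges from within its gadget and none outside, so the maximum degree is $2$. Since a graph of maximum degree $2$ decomposes into paths and even cycles (here, at most a single $4$-cycle per gadget), its arboricity is at most $2$. I do not expect any technical obstacle here: the only nontrivial ingredient is the reduction in \Cref{lem:inner product to connected components}, which has already been established, and the remainder is a direct adaptation of the case analysis used for matching.
\end{pf}
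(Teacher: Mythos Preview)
Your proposal is correct and follows the paper's proof essentially verbatim: invoke \Cref{lem:inner product to connected components} with $k=c_1 n$ to reduce to \Cref{thm:inner product lower bound} in the regime $T\geq 3n+2c_1 n^2$, then shrink to $\tilde n=\Theta(\sqrt{T})$ in the complementary regime. Your explicit verification of bipartiteness, maximum degree $2$, and arboricity $2$ from the gadget structure is a nice addition that the paper leaves implicit.
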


\begin{pf}[\Cref{thm:connected components lower bound}]
    Let $c_1, c_2$ be the constants from \Cref{thm:inner product lower bound}.
    Fix $n\in \N$ and suppose $T\geq 3n+2c_1n^2$.
    By \Cref{lem:inner product to connected components},
    $\mcal A$ implies a $1$-DP mechanism for answering $k=c_1 n$ inner product queries within additive error $2\zeta$ with probability $0.99$.
    But then by \Cref{thm:inner product lower bound},
    we must have $\zeta \geq c_2\sqrt{n}/2$.

    Now consider the regime of $T < 3n+2c_1n^2$.
    The same argument holds for $\tilde n = \Theta(\sqrt{T})$ such that $3\tilde n + 2c_1 \tilde n^2\leq T$
    to yield a lower bound of $\zeta \geq c_2\sqrt{\tilde n} = \Omega(T^{1/4})$.
\end{pf}

By following an identical argument to \Cref{lem:small epsilon matching to matching},
i.e. creating duplicate instances of the lower bound construction within a longer stream,
we derive a similar lower bound to \Cref{thm:matching lower bound small epsilon} 
that accounts for the privacy parameter $\varepsilon$.

\section{Private Static Densest Subgraph}
\begin{theorem}[Theorem 6.1 in \cite{DLL23}]\label{thm:improved private static densest subgraph}
    Fix $\eta \in (0, 1]$.
    There is an $\varepsilon$-edge DP densest subgraph algorithm
    that runs in polynomial time and $O(m+n)$ space
    and returns a subset $V'\sset V$ of vertices
    that induces a $O\left( 2, O(\varepsilon^{-1} \log n) \right)$-approximate densest subgraph
    with probability at least $1-O(1/\poly(n))$.
\end{theorem}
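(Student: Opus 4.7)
The plan is to privatize Charikar's classical peeling algorithm, which is known to yield a $2$-approximation for the densest subgraph problem in the non-private setting. Recall that the peeling algorithm produces a nested sequence of induced subgraphs $G = G_0\supseteq G_1\supseteq \dots \supseteq G_{n-1}$, where $G_i$ is obtained from $G_{i-1}$ by deleting the minimum-degree vertex, and outputs $V(G_{i^*})$ for $i^* = \arg\max_i \den(G_i)$. The key structural fact we will use is that for any fixed peeling order $\pi$, the sequence of observed edge counts $(|E(G_i^\pi)|)_{i=0}^{n-1}$ has $\ell_\infty$-sensitivity $1$ with respect to single-edge changes, since flipping one edge changes each $|E(G_i^\pi)|$ by at most one.

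The algorithm I would design is: at each of the $n-1$ peeling steps, select the vertex to remove via the exponential mechanism (with score equal to the negated current degree and privacy parameter $O(\varepsilon/\log n)$), and afterwards compute a noisy density $\widetilde{\den}(G_i) = |E(G_i)|/|V(G_i)| + \Lap(O(\log n /(\varepsilon |V(G_i)|)))$. Return the vertex set $V(G_{\hat i})$, where $\hat i$ is the noisy argmax. First, I would carry out the accuracy analysis: by standard tail bounds on the exponential mechanism, each peeled vertex has degree within $O(\log n/\varepsilon)$ of the current minimum degree with high probability. Propagating this additive error through Charikar's standard analysis (which compares the minimum degree at each step to the optimal density) shows that the best subgraph along the \emph{noisy} peeling order still has density $\ge \tfrac{1}{2}\OPT - O(\log n/\varepsilon)$. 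The noisy argmax then loses an additional $O(\log n/\varepsilon)$ additive term by a union bound over $n$ Laplace samples.

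For privacy, I would argue in two layers. First, the sequence of vertex removals is $\varepsilon/2$-DP by composition of $n-1$ invocations of the exponential mechanism, each with budget $O(\varepsilon/\log n)$. Once the peeling order is fixed, releasing the density sequence is $\varepsilon/2$-DP: by the structural fact above, the vector of $n$ density estimates has $\ell_1$-sensitivity bounded by $\sum_i 1/|V(G_i)| = O(\log n)$, so adding $\Lap(O(\log n/\varepsilon))$ noise to each coordinate suffices; alternatively, one can add noise inversely scaled by $|V(G_i)|$ and invoke a weighted composition argument. Releasing the argmax is post-processing. Basic composition yields $\varepsilon$-DP overall. Space and running time are clearly $O(m+n)$ and polynomial, since we only maintain the current graph and a priority queue of (noisy) degrees.

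The main obstacle will be the privacy bookkeeping at the interface between the exponential-mechanism selections and the density releases: a naive composition loses a $\log n$ factor in the additive error of the density. I expect the cleanest way to avoid this is to exploit the fact that a single edge change only perturbs the degrees of its two endpoints, so the induced change in the exponential-mechanism distribution is localized to steps where those endpoints are still present; this gives a bounded-range composition that saves the extra $\log n$ factor and matches the claimed $O(\varepsilon^{-1}\log n)$ additive error. If that refinement is too delicate, one can instead appeal directly to the sparse-vector/above-threshold view of the peeling process used in earlier work, at the cost of an extra $\log n$ in intermediate calculations that is absorbed into the final bound.
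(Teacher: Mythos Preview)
This theorem is not proved in the paper at all: it is stated as an imported black-box result from \cite{DLL23} and used without proof. So there is no paper proof to compare against. That said, your sketch has a real gap that would need to be fixed before it could stand on its own.

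The central error is in the privacy accounting for the peeling phase. You write that ``the sequence of vertex removals is $\eps/2$-DP by composition of $n-1$ invocations of the exponential mechanism, each with budget $O(\eps/\log n)$.'' Basic composition of $n-1$ mechanisms with budget $\Theta(\eps/\log n)$ each yields $\Theta(n\eps/\log n)$-DP, not $\eps/2$-DP. If instead you set the per-step budget to $\Theta(\eps/n)$ so that composition gives $\eps/2$, the exponential-mechanism accuracy degrades to $O(n\log n/\eps)$ additive error per step, which is far worse than the claimed $O(\log n/\eps)$. So the discrepancy is a factor of $n/\log n$, not the ``$\log n$ factor'' you flag as the main obstacle.

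Your proposed repair---that a single edge change only perturbs the scores of its two endpoints, so the privacy loss is ``localized to steps where those endpoints are still present''---does not close this gap. The two endpoints $u,v$ of the differing edge can each survive for $\Theta(n)$ peeling steps (e.g.\ if they are high-degree), and at every such step the score function has sensitivity $1$, so you still accumulate $\Theta(n)$ contributions to the privacy loss. Worse, once the exponential mechanism selects a different vertex in $G$ versus $G'$ at some step, the entire downstream degree sequence diverges, so the ``same scores except at $u,v$'' picture breaks down after the first disagreement. Standard bounded-range or sparse composition does not rescue a factor of $n$ here. The algorithm in \cite{DLL23} avoids this by \emph{not} running a fresh exponential mechanism per peeling step; it instead uses a level-data-structure / multidimensional-AboveThreshold approach (cf.\ the discussion in \cref{sec:kcore} of this paper) in which each vertex triggers only $O(\log n)$ ``above'' events across the whole execution, and the privacy loss is charged only to the two endpoints of the differing edge across those $O(\log n)$ events.
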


\begin{theorem}[Theorem 5.1 in \cite{DLRSSY22}]\label{thm:private static densest subgraph}
    Fix $\eta \in (0, 1]$.
    There is an $\varepsilon$-edge DP densest subgraph algorithm
    that runs in $O((m+n)\log^3 n)$ time and $O(m+n)$ space
    and returns a subset $V'\sset V$ of vertices
    that induces a $O\left( 1+\eta, O(\varepsilon^{-1} {\eta^{-3}}\log^4 n) \right)$-approximate densest subgraph
    with probability at least $1-O(1/\poly(n))$
    for any constant $\eta > 0$.
\end{theorem}

\section{The Density of the Densest Subgraph}\label{apx:densest value}
We now recall the following theorem about computing the exact density of a densest subgraph in the static setting.

\begin{restatable}{theorem}{densestValue}\label{thm:densest value}
    Given an undirected graph $G=(V, E)$,
    there is an algorithm \textsc{ExactDensity} that computes the exact density of the densest subgraph.
    Moreover,
    the algorithm terminates in $O(\card{E}^3\log(\card V))$ time
    and uses $O(\card{E})$ space.
\end{restatable}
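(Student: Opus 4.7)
The plan is to invoke Goldberg's classical max-flow based exact densest subgraph algorithm \cite{goldberg1984finding} (whose pseudo-code fits within the claimed resource budget), combined with a binary search over the candidate density values. First I would recall Goldberg's reduction: given a guess $g\in \mathbb{Q}_{\geq 0}$ for the density, construct an $s$–$t$ flow network on $V\cup\{s,t\}$ with capacities $c(s,v)=|E|$, $c(v,t)=|E|+2g-\deg_G(v)$, and $c(u,v)=c(v,u)=1$ for each edge $\{u,v\}\in E$. The minimum $s$–$t$ cut has value exactly $|V|\cdot|E|$ if and only if $g\geq \den(G)$, and otherwise the source side (minus $s$) of any minimum cut is a vertex set $S$ with $|E(S)|/|S|\geq g$.

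Next I would handle the binary search. Every feasible density is a rational $|E(S)|/|S|$ with $|E(S)|\leq|E|$ and $|S|\leq|V|$, so any two distinct densities differ by at least $1/(|V|(|V|-1))$. Performing a binary search over the interval $[0,|E|]$ with resolution $1/|V|^2$ therefore isolates the true optimum in $O(\log(|E|\cdot|V|^2))=O(\log|V|)$ iterations (using $|E|\leq\binom{|V|}{2}$). The value of the density can then be recovered exactly by reading off $|E(S)|/|S|$ from the source side of the final min-cut.

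For each iteration I would compute the max-flow on Goldberg's network, which has $|V|+2$ vertices and $2|V|+2|E|=O(|E|)$ edges (assuming $|V|=O(|E|)$; isolated vertices can be pruned first without changing the densest subgraph). Running Edmonds--Karp (or any max-flow algorithm with complexity $O(|V|\cdot|E|^2)$ on the reduced network) gives a per-iteration cost of $O(|E|^3)$ and per-iteration space $O(|E|)$ to store the residual graph and BFS layers. Multiplying by the $O(\log|V|)$ iterations yields the claimed $O(|E|^3\log|V|)$ time, and since the binary search state is $O(\log|V|)$ bits we retain the $O(|E|)$ space bound.

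The main conceptual obstacle is simply confirming that a single-flow solver can be reused across iterations in $O(|E|)$ space (i.e.\ that we do not need to retain state between iterations), and that rounding/search bookkeeping does not blow up the space. Both are handled by noting that the network is rebuilt from scratch each iteration given only $G$ and the current guess $g$, and that the min-cut side-assignment needed to read off $S$ can be recovered from one BFS in the final residual graph; no additional persistent data structure beyond $G$ itself is required.
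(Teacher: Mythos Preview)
Your proposal is correct and follows essentially the same strategy as the paper: binary search over candidate density values with precision $1/\card{V}^2$, using a max-flow/min-cut computation on an auxiliary network to test each guess, for $O(\log\card{V})$ iterations in total. The only differences are cosmetic: you use Goldberg's original network on $V\cup\{s,t\}$ together with Edmonds--Karp, whereas the paper uses the bipartite edge--vertex network of \cite{chekuri2022densest} together with Dinic's algorithm; both constructions yield $O(\card{E})$-sized networks and $O(\card{E}^3)$ per-iteration flow cost, so the resulting time and space bounds coincide.
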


We first describe the auxiliary flow graph describe in \cite{boob2020flowless, chekuri2022densest}.
Given an undirected graph $G = (V, E)$ and a value $\lambda > 0$,
we construct the following bipartite flow network $D = (N, A)$.
We use the terminology of nodes and arcs to distinguish from the original graph $G$.
$N$ consists of a source node $s$,
a node $a_e$ for every edge $e\in E$,
a node $a_v$ for every vertex $v\in V$,
and a sink node $t$.
There is a directed arc from $s$ to $a_e$ for every $e\in E$
with capacity 1,
directed arcs from $a_e$ to $a_v$
and from $a_e$ to $a_u$ for every $e = \set{u, v}\in E$,
both with infinite capacity,
and a directed arc from $a_v$ to $t$ for every $v\in V$
with capacity $\lambda$.

For every $U\sset V$ in the original graph,
the set of nodes
\[
    N_U :=
    \set{s}
    \cup \set{a_{uv}: u, v\in U}
    \cup \set{a_v: v\in U}
\]
is an $(s, t)$-cut with capacity
\[
    \card{E} - \card{E[U]} + \lambda\card{U}.
\]
Any minimum $(s, t)$-cut must be of this form (allowing for $U=\varnothing$).
This leads to the following observation.

\begin{proposition}[\cite{chekuri2022densest}]\label{prop:densest value maximum flow}
    Let $G=(V, E)$ be an undirected graph,
    $\lambda \geq \nicefrac{\card{E}}{\card V}$,
    and $D=(N, A)$ be the directed flow graph obtained from $G$.
    Let $N_U$ induce a minimum $(s, t)$-cut in $D$.
    Either the densest subgraph in $G$ has density at most $\lambda$
    and $\delta(N_U)$ has capacity $\card E$,
    or $G[U]$ has density strictly greater than $\lambda$
    and $\delta(N_U)$ has capacity strictly less than $\card E$.
\end{proposition}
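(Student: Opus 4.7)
The plan is to leverage the structural characterization of minimum $(s,t)$-cuts that is already stated in the excerpt immediately preceding the proposition: every $(s,t)$-cut of the form $N_U$ has capacity $|E| - |E[U]| + \lambda|U|$, and every minimum cut is of this form for some (possibly empty) $U \subseteq V$. The proposition then essentially amounts to unpacking what the value of this minimum can tell us about $G$.

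First, I would observe that plugging in $U = \varnothing$ gives a cut $N_\varnothing = \{s\}$ of capacity exactly $|E|$ (the $|E|$ unit-capacity arcs out of $s$, with no $E[\varnothing]$ and no vertex terms). Hence the minimum cut capacity is always at most $|E|$, and the assumption $\lambda \geq |E|/|V|$ ensures that the other extreme $U = V$ gives capacity $\lambda|V| \geq |E|$, so nothing degenerate happens on that side.

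Next, I would split into the two cases claimed by the proposition. In the first case, suppose the minimum cut capacity equals $|E|$. Since for every $U' \subseteq V$ we must then have $|E| - |E[U']| + \lambda|U'| \geq |E|$, rearranging gives $|E[U']| \leq \lambda|U'|$ for every nonempty $U'$, i.e.\ $\den(G[U']) \leq \lambda$ for all nonempty $U'$. Hence the densest subgraph in $G$ has density at most $\lambda$. In the second case, suppose the minimum cut capacity is strictly less than $|E|$, and let $N_U$ be a minimum cut achieving it. Then $|E| - |E[U]| + \lambda|U| < |E|$, which rearranges to $|E[U]| > \lambda|U|$. In particular $U \neq \varnothing$ (otherwise the LHS would be $0$), and $\den(G[U]) = |E[U]|/|U| > \lambda$, as required.

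Since both directions of the equivalence follow from the same arithmetic inequality, there is no real obstacle: the work was already done by establishing the capacity formula and the characterization of minimum cuts. The only minor subtlety is verifying that the hypothesis $\lambda \geq |E|/|V|$ is consistent with both cases (it ensures $U = V$ is not a spuriously better cut than $U = \varnothing$ when the density is at most $\lambda$), which I would note briefly in passing.
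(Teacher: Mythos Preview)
Your proposal is correct and follows essentially the same approach as the paper: both arguments use the capacity formula $|E| - |E[U]| + \lambda|U|$ for the cut $N_U$, note that $U=\varnothing$ gives capacity exactly $|E|$, and then split into the two cases by comparing the minimum cut value to $|E|$ and rearranging into a density inequality. The paper phrases the case split from the density side (either every $G[U]$ has density $\leq \lambda$ or some $G[U^*]$ has density $> \lambda$) while you phrase it from the cut side, but the content is identical.
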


\begin{pf}[\Cref{prop:densest value maximum flow}]
    For $U=\varnothing$,
    the cut induced by $N_\varnothing = \set{s}$ has capacity $\card{E}\leq \lambda\card{V}$.

    Let $U\sset V$.
    If the density of $G[U]$ is at most $\lambda$,
    then $\card{E[U]}\leq \lambda\card{U}$ so that the cut induced by $N_U$ has capacity at least $\card E$.
    If this holds for all $\varnothing\neq U\sset V$,
    then the cut induced by $N_\varnothing$
    with capacity $\card E$
    is a minimum cut
    and by the max-flow min-cut theorem,
    the maximum flow has value $\card E$.

    Otherwise,
    assume that the density of $G[U^*]$ is strictly greater than $\lambda$.
    Then $\card{E[U^*]} > \lambda\card{U^*}$ so that the cut induced by $N_{U^*}$ has capacity strictly less than $\card E$.
    Another application of the max-flow min-cut theorem concludes the proof.
\end{pf}
Now we recall the classical result of \textcite{dinitz2006dinitz}.
\begin{theorem}[\cite{dinitz2006dinitz}]\label{thm:dinic algorithm}
    Given directed graph $D=(N, A)$ 
    with a source-sink pair $s, t\in N$
    and non-negative arc capacities,
    there is an algorithm for computing an exact maximum $(s, t)$-flow.
    Moreover,
    the algorithm terminates in $O(\card{N}^2\card{A})$ time
    while using $O(\card N + \card A)$ space.
\end{theorem}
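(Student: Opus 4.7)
The plan is to describe Dinic's algorithm and analyze its time and space complexity. At a high level, the algorithm proceeds in phases. In each phase, we first compute the \emph{level graph} $L$ of the current residual network $D_f$ via a breadth-first search from $s$: the level of a node is its BFS distance from $s$, and $L$ retains only arcs $(u, v)$ with positive residual capacity such that $\level(v) = \level(u) + 1$. We then find a \emph{blocking flow} in $L$, i.e., a feasible flow $g$ in $L$ such that every $s$-$t$ path in $L$ contains at least one arc saturated by $g$. We augment the current flow $f$ by $g$ and repeat until $t$ is unreachable from $s$ in the residual network, at which point $f$ is a maximum $(s,t)$-flow by the max-flow min-cut theorem.

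The key lemma I will invoke is that the residual distance $d_f(s, t)$ from $s$ to $t$ strictly increases after each phase. Roughly, any augmenting path in the residual graph of the augmented flow must use a ``backward'' arc with respect to the previous level graph, so its length is at least $d_f(s, t) + 1$. Since $d_f(s, t) \leq \card N$ throughout, and $d_f(s, t)$ is a strictly increasing integer across phases, the total number of phases is $O(\card N)$. I would then bound the cost of a single phase: building the level graph by BFS costs $O(\card N + \card A)$; a blocking flow can be computed in $O(\card N \cdot \card A)$ time using a standard augmenting DFS with ``dead-end pruning'' (each DFS advance either saturates an arc — which happens at most $O(\card A)$ times per phase — or retreats and permanently deletes a node from the search, costing $O(\card N)$ per augmentation, and there are at most $O(\card A)$ augmentations per phase). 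Multiplying the per-phase cost by the $O(\card N)$ bound on the number of phases yields the claimed $O(\card N^2 \card A)$ running time.

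For the space bound, the algorithm only needs to maintain: (i) the current flow value on each arc of $D$, which requires $O(\card A)$ memory; (ii) the level function from the most recent BFS, requiring $O(\card N)$ memory; and (iii) per-node pointers into adjacency lists used by the blocking flow DFS, also $O(\card N + \card A)$. No data structure in the algorithm stores more than a constant amount of information per node or per arc, so the total working space is $O(\card N + \card A)$ on top of the input.

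The main obstacle is the strict-monotonicity argument for $d_f(s, t)$ across phases, together with the blocking-flow analysis showing that a single phase terminates in $O(\card N \cdot \card A)$ time; both are classical but delicate, and the reference to \cite{dinitz2006dinitz} lets us invoke them as known facts rather than reproving them in full.
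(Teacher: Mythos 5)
Your proposal is correct: it is the standard textbook analysis of Dinic's algorithm (level graphs via BFS, blocking flows via augmenting DFS with dead-end pruning, strict monotonicity of $d_f(s,t)$ giving $O(\card{N})$ phases at $O(\card{N}\card{A})$ each, and $O(\card{N}+\card{A})$ working space). The paper itself does not prove this statement — it is imported verbatim from \cite{dinitz2006dinitz} — so there is no in-paper argument to compare against; your sketch is exactly the classical proof that the citation stands in for, and the one minor imprecision (retreats are usually charged to deleted \emph{arcs} rather than nodes) does not affect the accounting.
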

We are now ready to show \Cref{thm:densest value}.

\begin{pf}[\Cref{thm:densest value}]
    Without loss of generality,
    assume that $G$ is connected so that $\card E = \Omega(\card V)$.
    
    Construct the auxiliary flow network $D = (N, A)$ above for some guess $\lambda\geq \nicefrac{\card E}{\card V}$
    of the value of the densest subgraph.
    There are $2 + \card{V} + \card{E} = O(\card E)$ nodes
    and $\card E + 2\card E + \card V = O(\card E)$ arcs.
    We can check if the densest subgraph has value at most $\lambda$
    by computing the maximum flow in the $D$ using Dinic's algorithm \Cref{thm:dinic algorithm}.
    This is justified by \Cref{prop:densest value maximum flow}.
    Moreover,
    each flow computation also retrieves the vertices inducing a minimum $(s, t)$-cut by computing the vertices reachable from $s$ in the residual graph.
    The time and space complexity of computing the residual flow is dominated by the time and space complexity of the flow algorithm.
    Thus each flow computation either declares that the densest subgraph has density ``at most $\lambda$'',
    or produces some $U\sset V$ with density strictly more than $\lambda$.

    The value of the densest subgraph is guaranteed to lie in the interval $[\nicefrac{\card E}{\card V}, \nicefrac{(\card V-1)}2]\sset [\nicefrac12, \nicefrac{\card V}2]$,
    so we can binary search over $\lambda$ in $O(\log \nicefrac{\card V}{\alpha})$ time
    and constant space
    to produce an estimate of the optimal density within an additive error of $\alpha$.
    Note that the densest subgraph has density at least $\nicefrac{\card E}{\card V}$.
    
    We first use the flow gadget to check if the densest subgraph has density strictly greater than $\nicefrac{\card E}{\card V}$.
    Thus at every iteration,
    we maintain a vertex set $U$ and and upper bound $h\geq 0$
    such that the density $\rho(U)$ of $G[U]$ and the optimal density $\rho^*$ satisfies
    \[
        \rho(U)\leq \rho^* \leq h.
    \]
    Each iteration of the binary search takes time
    \[
        O(\card{N}^2\card A)
        = O(\card{E}^3)
    \]
    Thus the total running time is
    \[
        O\left( \card{E}^3\log(\nicefrac{\card V}\alpha) \right).
    \]
    Here $\alpha > 0$ is the accepted level of additive error.
    
    We remark that the density of a subgraph is a rational number whose numerator lies in $[\card E]$ and denominator in $[\card V]$.
    This means that the gap between the optimal density and the density of suboptimal subgraphs is at least $\nicefrac1{\card V(\card V-1)} \geq \nicefrac1{\card{V}^2}$.
    We can thus set $\alpha = \nicefrac1{\card{V}^2}$ so that the density of the maintained subgraph $U$ is optimal after termination.
    This brings the total running time to
    \[
        O\left( \card{E}^3\log(\card{V}^3) \right)
        = O\left( \card{E}^3\log(\card V) \right)
    \]
    Faster max-flow or min-cut algorithms lead to better running time.
    
    The auxiliary space is proportional to the size of the digraph and is thus
    \[
        O(\card{N} + \card{A})
        = O(\card E).
    \]
\end{pf}

\section{Public Bound on Arboricity}\label{sec:arboricity assumption}
Our edge-DP matching algorithm (\Cref{thm:edge-dp-matching}),
 requires a public data-oblivious
estimate $\tilde \alpha$ of the maximum arboricity $\alpha$ over the stream.
While our algorithms \emph{always} guarantee privacy,
their utility is only guaranteed when $\tilde \alpha\geq \alpha$.
One way to remove this assumption is to first execute an insertion-only continual release algorithm for estimating the arboricity
and then
run the desired algorithm for a second pass given the estimated arboricity.

\subsection{Notification of Failure}
In the one-pass setting,
the simple approach above fails since we require a bound on the maximum arboricity throughout the entire stream.
However,
it may still be desirable to notify the observer that the current arboricity of the graph exceeds the public bound
and thus the utility is no longer guaranteed.
For this purpose,
we can run a continual release algorithm for estimating the arboricity alongside the desired algorithm.
For the sake of simplicity,
we explain how we can use our $k$-core algorithm (\Cref{thm:kcore-formal}) to accomplish such a task.

Recall that the \emph{degeneracy} of a graph
is defined to be the least number $k$
such that every induced subgraph has a vertex of degree at most $k$.
It is known that this quantity is equal to the value of the largest $k$-core,
say denoted by $k_{\max}$.
Using the definition of the arboricity $\alpha$ as the minimum number of forests that partition the edge set,
we can show that
\[
    \frac12 k_{\max}
    \leq \alpha
    \leq k_{\max}.
\]
Hence our $k$-core algorithm immediately yields a $(5, O(\eps^{-1} \log^3(n)))$-approximation of the arboricity.
When this estimated arboricity (approximately) exceeds the public bound,
we can notify the observer that utility is no longer guaranteed
{by outputting ``FAIL''}.
Note that algorithms with better approximation and space guarantees that directly estimate $\alpha$ most likely exist but we describe the approach here using \Cref{thm:kcore-formal} for the sake of simplicity.

\subsection{Guessing the Arboricity}\label{apx:guess arboricity}
In the case of our matching algorithms (\Cref{sec:matching}),
we briefly sketch how to completely remove the dependence on a prior public $\tilde \alpha$ bound at the cost of additional space.
Suppose we are given an $\eps$-edge DP $(\beta, \zeta)$-approximation continual release algorithm $\mcal A_e$ for computing the arboricity that uses $O(S)$ space (e.g. see previous section).
We can guess $\tilde \alpha = 2^p$ for each $p=1, 2, \dots, O(\log n)$
and execute $O(\log n)$ instances of our edge-DP continual release matching algorithm (\Cref{thm:edge-dp-matching}) alongside $\mcal A$.
Thus we can identify the ``correct'' output from the parallel instances
corresponding to the current arboricity $\alpha_t$ up to a $O\left( 2\beta, O(\zeta) \right)$-approximation.

Plugging in our edge-DP matching algorithm (\Cref{thm:edge-dp-matching}) and $\mcal A_e$ yields an $\eps$-edge DP continual release algorithm
that returns a $\left( O(\beta \alpha_t + \zeta) , O(\eps^{-1}\poly\log(n)) \right)$-approximate estimate of the maximum matching size
using $O(S + \eps^{-1} \poly\log(n))$ space.

\end{document}